\def\YEAR{\year}\newcount\VOL\VOL=\YEAR\advance\VOL by-1995
\def\firstpage{1}\def\lastpage{5}
\def\received{}\def\revised{}
\def\communicated{}
\def\magnification{\afterassignment\m@g\count@}
\def\m@g{\mag=\count@\hsize6.5truein\vsize8.9truein\dimen\footins8truein}
\font\eightrm=cmr8
\font\caps=cmcsc10                    
\font\Caps=cmcsc10 scaled \magstep1   
\def\DocMath{}
\renewcommand{\@evenhead}{%
    \ifnum\thepage>\lastpage\rlap{\thepage}\hfill%
    \else\rlap{\thepage}\slshape\leftmark\hfill{\caps\SAuthor}\hfill\fi}%
\renewcommand{\@oddhead}{%
    \ifnum\thepage=\firstpage{\DocMath\hfill\llap{\thepage}}%
    \else{\slshape\rightmark}\hfill{\caps\STitle}\hfill\llap{\thepage}\fi}%
\def\TSkip{\bigskip}
\newbox\TheTitle{\obeylines\gdef\GetTitle #1
\ShortTitle  #2
\SubTitle    #3
\Author      #4
\ShortAuthor #5
\EndTitle
{\setbox\TheTitle=\vbox{\baselineskip=20pt\let\par=\cr\obeylines%
\halign{\centerline{\Caps##}\cr\noalign{\medskip}\cr#1\cr}}%
	\copy\TheTitle\TSkip\TSkip%
\def\next{#2}\ifx\next\empty\gdef\STitle{#1}\else\gdef\STitle{#2}\fi%
\def\next{#3}\ifx\next\empty%
    \else\setbox\TheTitle=\vbox{\baselineskip=20pt\let\par=\cr\obeylines%
    \halign{\centerline{\caps##} #3\cr}}\copy\TheTitle\TSkip\TSkip\fi%
\centerline{\caps #4}\TSkip\TSkip%
\def\next{#5}\ifx\next\empty\gdef\SAuthor{#4}\else\gdef\SAuthor{#5}\fi%
\ifx\received\empty\relax
    \else\centerline{\eightrm Received: \received}\fi%
\ifx\revised\empty\TSkip%
    \else\centerline{\eightrm Revised: \revised}\TSkip\fi%
\ifx\communicated\empty\relax
    \else\centerline{\eightrm Communicated by \communicated}\fi\TSkip\TSkip%
\catcode'015=5}}\def\Title{\obeylines\GetTitle}
\def\Abstract{\begingroup\narrower
    \parskip=\medskipamount\parindent=0pt{\caps Abstract. }}
\def\EndAbstract{\par\endgroup\TSkip}
\long\def\MSC#1\EndMSC{\def\arg{#1}\ifx\arg\empty\relax\else
     {\par\narrower\noindent%
     2000 Mathematics Subject Classification: #1\par}\fi}
\long\def\KEY#1\EndKEY{\def\arg{#1}\ifx\arg\empty\relax\else
	{\par\narrower\noindent Keywords and Phrases: #1\par}\fi\TSkip}
\newbox\TheAdd\def\Addresses{\vfill\copy\TheAdd\vfill
    \ifodd\number\lastpage\vfill\eject\phantom{.}\vfill\eject\fi}
{\obeylines\gdef\GetAddress #1
\Address #2
\Address #3
\Address #4
\EndAddress
{\def\xs{4.3truecm}\parindent=0pt
\setbox0=\vtop{{\obeylines\hsize=\xs#1\par}}\def\next{#2}
\ifx\next\empty 
     \setbox\TheAdd=\hbox to\hsize{\hfill\copy0\hfill}
\else\setbox1=\vtop{{\obeylines\hsize=\xs#2\par}}\def\next{#3}
\ifx\next\empty 
     \setbox\TheAdd=\hbox to\hsize{\hfill\copy0\hfill\copy1\hfill}
\else\setbox2=\vtop{{\obeylines\hsize=\xs#3\par}}\def\next{#4}
\ifx\next\empty\ 
     \setbox\TheAdd=\vtop{\hbox to\hsize{\hfill\copy0\hfill\copy1\hfill}
                \vskip20pt\hbox to\hsize{\hfill\copy2\hfill}}
\else\setbox3=\vtop{{\obeylines\hsize=\xs#4\par}}
     \setbox\TheAdd=\vtop{\hbox to\hsize{\hfill\copy0\hfill\copy1\hfill}
	        \vskip20pt\hbox to\hsize{\hfill\copy2\hfill\copy3\hfill}}
\fi\fi\fi\catcode'015=5}}\gdef\Address{\obeylines\GetAddress}
\newcommand{\R}{\mathbbm{R}}
\newcommand{\N}{\mathbbm{N}}
\newcommand{\C}{\mathbbm{C}}
\newtheorem{Satz}{Theorem}[section]
\newtheorem{Def}[Satz]{Definition}
\newtheorem{Lemma}[Satz]{Lemma}
\newtheorem{Bem}[Satz]{Remark}
\newtheorem{Hyp}{Hypothesis}
\newcommand{\tensor}{\otimes}
\newcommand{\ran}{\operatorname{ran}}
\newcommand{\dom}{\operatorname{dom}}
\newcommand{\Fb}{\mathcal{F}_b}
\newcommand{\Tr}{\operatorname{Tr}}
\newcommand{\cl}{\operatorname{cl}}
\newcommand{\one}{\mathbf{1}}
\newcommand{\unl}{\underline}
\newcommand{\const}{\operatorname{const}}
\newcommand{\Hel}{\mathcal{H}_{el}}
\newcommand{\Lel}{\mathcal{L}_{el}}
\newcommand{\Hael}{H_{el}}
\newcommand{\Haelp}{H_{el,+}}
\newcommand{\piel}{\pi^{el}}
\newcommand{\omel}{\omega^\beta_{el}}
\newcommand{\Omel}{\Omega^\beta_{el}}
\newcommand{\Zel}{\mathcal{Z}}
\newcommand{\We}{W}
\newcommand{\Lelaux}{ \mathcal{L}_{el,a}}
\newcommand{\Wop}{ \mathcal{W}}
\newcommand{\omf}{ \omega^\beta_f}
\newcommand{\pif}{ \pi_f}
\newcommand{\Haf}{ \check{H} }
\newcommand{\Omf}{\Omega^\beta_{f}}
\newcommand{\Lf}{ \mathcal{L}_f}
\newcommand{\Lfaux}{ \mathcal{L}_{f,a}}
\newcommand{\hph}{\mathcal{H}_{ph}}
\newcommand{\Kg}{\mathcal{K}}
\newcommand{\Hosc}{H_{osc}}
\newcommand{\be}{A}
\newcommand{\omosc}{\omega_\beta^{osc}}
\newcommand{\pig}{\pi}
\newcommand{\hc}{\operatorname{h.c.}}
\newcommand{\Ag}{\mathfrak{A}}
\newcommand{\Qg}{Q}
\newcommand{\QN}{Q_N}
\newcommand{\Jg}{\mathcal{J}}
\newcommand{\Lg}{\mathcal{L}_Q}
\newcommand{\Laux}{\mathcal{L}_{a}}
\newcommand{\Lo}{\mathcal{L}_0}
\newcommand{\tauo}{ \tau^0}
\newcommand{\taug}{\tau}
\newcommand{\taun}{\tau^N}
\newcommand{\Hg}{H}
\newcommand{\Wg}{W}
\newcommand{\Hig}{\mathcal{H}}
\newcommand{\Ho}{H_0}
\newcommand{\f}{\mathfrak{f}}
\newcommand{\Omo}{\Omega_0^\beta}
\newcommand{\omo}{\omega_0^\beta}
\newcommand{\Omg}{\Omega^\beta}
\newcommand{\omg}{\omega^\beta}
\newcommand{\Mg}{\mathfrak{M}_\beta}
\newcommand{\Mana}{\mathfrak{M}_\beta^{ana}}
\newcommand{\slim}{\operatorname{s-lim}}
\newcommand{\wlim}{\operatorname{w-lim}}
\renewcommand{\leq}{\leqslant}
\renewcommand{\le}{\leqslant}
\renewcommand{\geq}{\geqslant}
\renewcommand{\ge}{\geqslant}
\renewcommand{\Re}{\textrm{Re}\,}
\renewcommand{\Im}{\textrm{Im}\,}
\renewcommand{\imath}{i}
\newcommand{\Core}{\mathcal{D}}
\newcommand{\linhull}{\operatorname{LH}}
\newcommand{\ovl}{\overline}
\begin{document}
\Title
		An Infinite Level Atom coupled to a Heat Bath
\ShortTitle
			An Infinite Level Atom
\SubTitle
\Author
		Martin Koenenberg%
	\footnote{\eightrm  Supported by the DFG (SFB/TR 12)} %
	
\ShortAuthor
		M.Koenenberg   
\EndTitle

\Abstract
We consider a $W^*$-dynamical system $(\Mg,\taug)$, which models
finitely many particles coupled to an infinitely extended heat
bath. The energy of the particles can be described by an unbounded operator,
which has infinitely many energy levels.
We show existence of the dynamics $\taug$ and existence of a $(\beta,\taug)$
-KMS state under very explicit conditions on the strength of the interaction
and on the inverse temperature $\beta$.
\EndAbstract

\MSC
Primary 81V10; Secondary 37N20, 47N50.
\EndMSC

\KEY
KMS-state, thermal equilibrium, $W^*$-dynamical system, open quantum system.
\EndKEY

\Address
Martin K\"onenberg
Fakult\"at f\"ur Mathematik und Informatik
FernUniversit\"at Hagen
L\"utzowstra{\ss}e 125
D-58084 Hagen, Germany.
martin.koenenberg@fernuni
-hagen.de
\Address
\Address
\Address
\EndAddress
%
%
%

%
%
%
%
\section{Introduction}
In this paper, we study a $W^*$-dynamical system $(\Mg,\taug)$
which describes a system of finitely many particles interacting
with an infinitely extended bosonic reservoir or heat bath 
at inverse temperature $\beta$.
Here, $\Mg$ denotes the $W^*$-algebra of observables and $\taug$
is an automorphism-group on $\Mg$, which is defined by
\begin{equation}
\taug_t(X):= e^{\imath t \Lg}\,X\, e^{-\imath t \Lg},\ X\in \Mg,\ t\in \R.
\end{equation}
In this context, $t$ is the time parameter. $\Lg$ is the Liouvillean
of the dynamical system at inverse temperature $\beta$, $\Qg$ describes 
the interaction between particles and heat bath. On the one hand
the choice of $\Lg$ is motivated by heuristic arguments, which allow
to derive the Liouvillean $\Lg$ from the Hamiltonian $\Hg$ of the joint system
of particles and bosons at temperature zero. On the other hand we ensure that $\Lg$ anti-commutes
with a certain anti-linear conjugation $\Jg$, that will be introduced later on.
The Hamiltonian, which represents the interaction with a bosonic gas at 
temperature zero, can be  the Standard Hamiltonian of the
non-relativistic QED, (see or instance \cite{BachFroehlichSigal1998a}), or the Pauli-Fierz operator,
which is defined in \cite{DerezinskiJaksic2001,BachFroehlichSigal1998a}, or the Hamiltonian of Nelson's Model.
We give the definition of these Hamiltonians in the sequel of Definition \ref{Def:Ho-Hg}.\\
\indent Our first result is the following:
\begin{Satz}
$\Lg$, defined in \eqref{Def:Lg}, has a unique self-adjoint realization and
$\taug_t(X)\in \Mg$ for all $t\in \R$ and all $X\in \Mg$.
\end{Satz}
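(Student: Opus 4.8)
We first record the form of $\Lg$ used below. Recall that $\Lg=\Lo+\Qg-\Jg\Qg\Jg$ (see \eqref{Def:Lg}), with $\Lo$ the self-adjoint Liouvillean of the free $W^{*}$-dynamical system $(\Mg,\tauo)$ and $\Qg$ the symmetric interaction operator, which is affiliated with $\Mg$; then $\Jg\Qg\Jg$ is affiliated with the commutant $\Mg'=\Jg\Mg\Jg$, and $\Jg\Lg\Jg=-\Lg$. The argument has two parts: (1) prove essential self-adjointness of $\Lg$ by a commutator (Glimm--Jaffe--Nelson) argument relative to a suitable ``number--energy'' operator; and (2) deduce $\taug_{t}(\Mg)\subseteq\Mg$ by regularising $\Qg$ and invoking Araki's perturbation theory of $W^{*}$-dynamics.

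\emph{Part (1): self-adjointness.} Let $\mathcal{N}$ be the positive self-adjoint comparison operator obtained from the shifted atomic Hamiltonians $\Hael+c_{0}$ (on both tensor factors of the thermal Hilbert space, $c_{0}$ chosen so that $\Hael+c_{0}\geq\one$) and the free field number--energy operator, normalised so that $\mathcal{N}\geq\one$; then $\mathcal{N}$ strongly commutes with $\Lo$ and $|\Lo|\leq\const\cdot\mathcal{N}$. Fix a $\Jg$-invariant core $\Core$ of analytic vectors for $\mathcal{N}$; it is then also a core for $\Lo$. By the commutator theorem it suffices to prove, for $\psi\in\Core$, the relative bound $\|\Lg\psi\|\leq\const\,\|\mathcal{N}\psi\|$ and the commutator estimate $\bigl|\langle\Lg\psi,\mathcal{N}\psi\rangle-\langle\mathcal{N}\psi,\Lg\psi\rangle\bigr|\leq\const\,\langle\psi,\mathcal{N}\psi\rangle$. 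Since $\Lo$ commutes with $\mathcal{N}$ and is dominated by it, and since $\Jg$ commutes with $\mathcal{N}$ and with $e^{\imath s\Lo}$ and preserves $\Core$, both estimates reduce to the corresponding bounds for $\Qg$ alone. The relative bound follows from the standard $a^{\#}$-estimates, the point being that in the thermal (Araki--Woods) representation the form factors of $\Qg$ get multiplied by the $\beta$-dependent factors $\sqrt{1+\rho_{\beta}}$ and $\sqrt{\rho_{\beta}}$ with $\rho_{\beta}=(e^{\beta\omega}-1)^{-1}$, so that the explicit conditions on the coupling strength and on $\beta$ are precisely what make these factors square-integrable against the form factor (hence $\Qg$ is $\mathcal{N}^{1/2}$-bounded, a fortiori $\mathcal{N}$-bounded). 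For the commutator estimate one computes $[\Qg,\mathcal{N}]$ on $\Core$: its field part is again a smeared creation/annihilation operator (pull-through formula), dominated by $\mathcal{N}$, while its atomic part is the commutator of the coupling operator with $\Hael$, which the explicit conditions are tailored to bound. The commutator theorem then yields essential self-adjointness of $\Lg$ on $\Core$; its closure is the unique self-adjoint realisation.

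\emph{Part (2): invariance of $\Mg$.} For $n\in\N$ set $\Qg_{n}:=\Qg\,\one_{[-n,n]}(\Qg)\in\Mg$, a bounded symmetric operator affiliated with $\Mg$; then $\Lg_{n}:=\Lo+\Qg_{n}-\Jg\Qg_{n}\Jg$ is self-adjoint on $\dom(\Lo)$ by Kato--Rellich, anticommutes with $\Jg$, has $\Core$ as a core, and satisfies $\Lg_{n}\psi\to\Lg\psi$ for $\psi\in\Core$. Since $\Core$ is a common core, $\Lg_{n}\to\Lg$ in the strong resolvent sense, hence $e^{\imath t\Lg_{n}}\to e^{\imath t\Lg}$ strongly. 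For bounded $\Qg_{n}$ the cocycle $\Gamma^{(n)}_{t}:=e^{\imath t\Lg_{n}}e^{-\imath t\Lo}$ is the norm-convergent time-ordered exponential of $\imath\,\tauo_{\bullet}(\Qg_{n}-\Jg\Qg_{n}\Jg)$; as $\tauo$ leaves $\Mg$ and $\Mg'$ invariant and $\tauo_{s}(\Jg\Qg_{n}\Jg)=\Jg\tauo_{s}(\Qg_{n})\Jg\in\Mg'$ commutes with every $\tauo_{r}(\Qg_{n})\in\Mg$, it factorises as $\Gamma^{(n)}_{t}=U^{(n)}_{t}V^{(n)}_{t}$ with unitaries $U^{(n)}_{t}\in\Mg$ (the time-ordered exponential of $\imath\,\tauo_{\bullet}(\Qg_{n})$) and $V^{(n)}_{t}\in\Mg'$. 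Consequently, for $X\in\Mg$,
\begin{equation*}
  e^{\imath t\Lg_{n}}\,X\,e^{-\imath t\Lg_{n}}
  \;=\;\Gamma^{(n)}_{t}\,\tauo_{t}(X)\,\bigl(\Gamma^{(n)}_{t}\bigr)^{*}
  \;=\;U^{(n)}_{t}\,\tauo_{t}(X)\,\bigl(U^{(n)}_{t}\bigr)^{*}\;\in\;\Mg ,
\end{equation*}
using $\tauo_{t}(X)\in\Mg$ and $V^{(n)}_{t}\in\Mg'$. Letting $n\to\infty$, the left-hand side converges strongly to $\taug_{t}(X)=e^{\imath t\Lg}Xe^{-\imath t\Lg}$, and since $\Mg$ is strongly closed and the nets involved are uniformly bounded, $\taug_{t}(X)\in\Mg$ for all $t\in\R$ and $X\in\Mg$.

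\emph{Main obstacle.} The substantial step is Part (1) for the infinite-level atom: the relative bound and, above all, the commutator estimate for $\Qg$ against $\mathcal{N}$, where the unboundedness of $\Hael$ forces quantitative control of the commutator of the interaction with the atomic energy, and where the explicit smallness condition on the coupling and the restriction on $\beta$ must be used. Once $\Lg$ is a well-controlled self-adjoint operator, Part (2) is a routine regularisation.
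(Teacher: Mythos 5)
Your overall strategy — Nelson's commutator theorem for self-adjointness, followed by a cocycle/regularisation argument for invariance of $\Mg$ — is the same general route the paper takes. But there is a substantive gap in Part (1), and a smaller one in Part (2).

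\textbf{Part (1).} You choose a comparison operator $\mathcal{N}$ built from $\Haelp\tensor\one+\one\tensor\ovl{H}_{el,+}$ and the free field number--energy operator, i.e.\ an operator that does \emph{not} contain $\Qg$. The commutator $[\Lg,\mathcal{N}]$ then contains, besides the benign field commutators, the atomic contribution $[\Qg,\Haelp\tensor\one]+[\Qg,\one\tensor\ovl{H}_{el,+}]$, which on the level of form factors involves $[\Hael,G^j_k]$, $[\Hael,H^j_k]$ and $[\Hael,F_k]$. You assert that ``the explicit conditions are tailored to bound'' these, but they are not: the hypotheses of the paper (the finiteness of $\|\vec G\|_w$, $\|\vec H\|_w$, $\|F\|_{w,1/2}$, and the quantities $\unl{\eta}_1,\unl{\eta}_2$ in \eqref{FormFactorBound}) are pure operator-norm bounds relative to $\Haelp^{1/2}$ or $\Haelp^\gamma$; they contain no information whatsoever about commutators of the form factors with $\Hael$. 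For an infinite-level atom such commutators can be badly behaved (e.g.\ in the Nelson model $[F_k,-\Delta]$ brings in the momentum operator). This is precisely the difficulty the paper's Lemma~\ref{Lem1.1x} is designed to circumvent: the auxiliary operator used in the commutator estimate is $\Laux^{(2)}=\Haelp^Q+(\Haelp^Q)^{\Jg}+c_1\one\tensor\Lfaux+c_2$ with $\Haelp^Q=\Haelp\tensor\one+\Qg$ — it \emph{contains} $\Qg$. Writing $\Lg$ (up to an additive constant) as $\Haelp^Q-(\Haelp^Q)^{\Jg}+\one\tensor\Lf$ and using $[\Haelp^Q,(\Haelp^Q)^{\Jg}]=0$, the atomic commutators cancel identically and only the field commutators $[\Qg,\Lfaux]$, $[\Qg^{\Jg},\Lfaux]$, $[\Qg,\Lf]$, $[\Qg^{\Jg},\Lf]$ survive (see \eqref{eq3.9}--\eqref{eq3.11}). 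Lemma~\ref{Lem1.1x} then shows $\Laux^{(2)}$ and $\Laux^{(1)}$ have equivalent graph norms, so one may run Nelson's theorem with $\Laux^{(1)}$ as the positive operator and $\Laux^{(2)}$ for the commutator. Without this device your argument requires an extra, unstated commutator hypothesis on the form factors, and the proof does not close.

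\textbf{Part (2).} Your regularisation $\Qg_n:=\Qg\,\one_{[-n,n]}(\Qg)$ presupposes that $\Qg$ has a self-adjoint closure and is affiliated with $\Mg$. Neither is established in the paper, and the authors explicitly avoid assuming that $\Qg$ is self-adjoint (see the discussion before Lemma~\ref{Lem5d}). The paper instead regularises the \emph{form factors} by $P_N=\one[\Hael\le N]$ and an ultraviolet cut-off, producing $\QN$ which is provably essentially self-adjoint (again by the commutator theorem, now against $\Lfaux$ alone since $\|F_N\|_w<\infty$) and affiliated with $\Mg$ (Lemma~\ref{Lem5d}). Once you replace $\Qg_n$ by $\QN$, your factorisation of the cocycle $\Gamma^{(N)}_t$ into commuting unitaries in $\Mg$ and $\Mg'$, followed by strong resolvent convergence and weak closedness of $\Mg$, is a perfectly acceptable variant of the paper's Trotter-formula argument in Lemma~\ref{LemInv}. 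The key obstacle, as you correctly anticipate, lies entirely in Part (1).
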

The proof follows from Theorem \ref{Thm1} and Lemma \ref{LemInv}.
The main difficulty in the proof is,  that $\Lg$ is not semi-bounded,
and that one has to define a suitable auxiliary operator in order to
apply Nelson's commutator theorem.\\
\indent Partly, we assume that the isolated system of
finitely many particles is confined in space. This is reflected in Hypothesis \ref{Hyp1},
where we assume that the particle Hamiltonian $\Hael$ possesses a Gibbs state.
In the case where $\Hael$ is a Schrödinger-operator, we give in Remark 
\ref{Bem:Gibbs} a sufficient condition on the external potential $V$
to ensure the existence of a Gibbs state for $\Hael$. Our second theorem
is
\begin{Satz}
Assume Hypothesis \ref{Hyp1} and that $\Omo\in \dom(e^{-(\beta/2)(\Lo+\Qg)})$. 
Then there exists a $(\beta,\taug)$-KMS state $\omg$ on $\Mg$.
\end{Satz}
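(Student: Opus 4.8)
The plan is to obtain $\omg$ as the Araki perturbation of the free $(\beta,\tauo)$-KMS state $\omo$ by the interaction $\Qg$; the only feature beyond the textbook situation is that $\Qg$ is unbounded, so the real work lies in reducing to bounded perturbations and in controlling one imaginary-time limit. First I would exploit Hypothesis~\ref{Hyp1}: the Gibbs state of $\Hael$ it provides, tensored with the Araki--Woods representation of the free reservoir at inverse temperature $\beta$, exhibits $\omo=\langle\Omo,\,\cdot\,\Omo\rangle$ as a $(\beta,\tauo)$-KMS state and $(\Hig,\Mg,\Omo)$ as a standard form, with $\Omo$ cyclic and separating, modular conjugation $\Jg$, and modular operator $e^{-\beta\Lo}$; in particular $\Lo\Omo=0$ and $\Jg\Lo\Jg=-\Lo$. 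By the first theorem stated above, the perturbed dynamics is $\taug_t=e^{\imath t\Lg}\,\cdot\,e^{-\imath t\Lg}$ with standard Liouvillean $\Lg=\Lo+\Qg-\Jg\Qg\Jg$ (cf. \eqref{Def:Lg}) and $\taug_t(\Mg)\subseteq\Mg$.

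Next I would observe that $\Lo+\Qg$ is essentially self-adjoint on the same core as $\Lg$: the commutator estimate against the auxiliary operator in the proof of Theorem~\ref{Thm1} does not see whether the right factor $\Jg\Qg\Jg$ is kept or dropped. Hence $e^{-(\beta/2)(\Lo+\Qg)}$ is a positive injective self-adjoint operator, $\Omo$ lies in its domain by assumption, and
\[
\Omg:=\frac{e^{-(\beta/2)(\Lo+\Qg)}\Omo}{\bigl\|e^{-(\beta/2)(\Lo+\Qg)}\Omo\bigr\|}
\]
is a well-defined unit vector (the denominator is finite by hypothesis and nonzero by injectivity); set $\omg:=\langle\Omg,\,\cdot\,\Omg\rangle$, a state on $\Mg$. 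To identify $\omg$ as a KMS state I would approximate: for bounded self-adjoint truncations $\QN\in\Mg$ of $\Qg$ (say $\QN=\one_{[-N,N]}(\Qg)\,\Qg$), with $L_{\QN}:=\Lo+\QN-\Jg\QN\Jg$ and $\taun_t:=e^{\imath tL_{\QN}}\,\cdot\,e^{-\imath tL_{\QN}}$, Araki's perturbation theory of KMS states (see also Derezi\'nski--Jak\v{s}i\'c--Pillet) applies directly: the normalized vector $\Omg_N:=e^{-(\beta/2)(\Lo+\QN)}\Omo/\bigl\|e^{-(\beta/2)(\Lo+\QN)}\Omo\bigr\|$ lies in the natural cone of $(\Mg,\Omo)$, is cyclic and separating, has modular operator $e^{-\beta L_{\QN}}$, and $\omg^{(N)}:=\langle\Omg_N,\,\cdot\,\Omg_N\rangle$ is $(\beta,\taun)$-KMS.

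Then I would let $N\to\infty$. From the $N$-uniform commutator bounds one obtains $\Lo+\QN\to\Lo+\Qg$ and $L_{\QN}\to\Lg$ in the strong resolvent sense; and --- this is where the assumption $\Omo\in\dom(e^{-(\beta/2)(\Lo+\Qg)})$ enters decisively --- the curves $s\mapsto e^{-s(\Lo+\QN)}\Omo$, $s\in[0,\beta/2]$, admit an $N$-independent majorant, whence $\Omg_N\to\Omg$ in norm. Since the natural cone is closed and $\Omg\neq0$, $\Omg$ lies in it and is again cyclic and separating; passing to the limit in the KMS identity $\omg^{(N)}\bigl(X\,\taun_{\imath\beta}(Y)\bigr)=\omg^{(N)}(YX)$ --- using $\sigma$-weak convergence $\omg^{(N)}\to\omg$ on the right side and convergence of the analytic continuation $\taun_{\imath\beta}(Y)\to\taug_{\imath\beta}(Y)$ on a $\sigma$-weakly dense set of analytic $X,Y$ on the left side --- yields the $(\beta,\taug)$-KMS condition for $\omg$ (equivalently, the modular operator of $(\Mg,\Omg)$ is $e^{-\beta\Lg}$).

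The hard part is precisely this limit in the imaginary-time object $e^{-(\beta/2)(\Lo+\QN)}\Omo$. Since $\Lo$, and hence $\Lo+\Qg$, is not semibounded, $e^{-(\beta/2)(\Lo+\Qg)}$ is an \emph{unbounded} function, $\lambda\mapsto e^{-(\beta/2)\lambda}$, of a self-adjoint operator, and strong resolvent convergence of the generators alone does not force convergence of such a function on a fixed vector; the assumption $\Omo\in\dom(e^{-(\beta/2)(\Lo+\Qg)})$ must be converted into a quantitative, $N$-uniform bound, e.g. via a Dyson-series majorant for $e^{-s(\Lo+\QN)}\Omo$ or by interpolating the family $e^{-s(\Lo+\Qg)}\Omo$ between the trivial estimate at $s=0$ and the hypothesis at $s=\beta/2$. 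This is the same non-semiboundedness that forced the auxiliary operator and Nelson's commutator theorem in the proof of Theorem~\ref{Thm1}, and it is the technical heart here; checking that the limit vector $\Omg$ stays cyclic and separating (faithfulness not being stable under weak limits a priori) is a further, milder point.
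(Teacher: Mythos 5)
Your overall strategy — truncate, apply Araki/Derezi\'nski--Jak\v{s}i\'c--Pillet to the truncation, pass to the limit — is genuinely different from the paper's, and the place where you yourself flag uncertainty is exactly where it fails under the stated hypotheses.

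The paper's proof of Theorem~\ref{Thm3} does \emph{not} pass to the limit in the KMS vectors $\Omg_N$. Instead it works directly with $\Omega(z):=e^{-z(\Lo+\Qg)}\Omo$, which by Lemma~\ref{Lem2App} is analytic on the strip $0<\Re z<\beta/2$ and continuous on its closure precisely because $\Omo\in\dom(e^{-(\beta/2)(\Lo+\Qg)})$. All the modular identities ($\Jg\Omg=\Omg$, $\Lg\Omg=0$, $\Jg X^*\Omg=e^{-\beta/2\,\Lg}X\Omg$, membership in the natural cone, separating and cyclic) are then obtained by comparing pairs of analytic functions that agree for purely imaginary argument, using the single dynamical fact $E_{\Qg}(t)=e^{\imath t(\Lo+\Qg)}e^{-\imath t\Lo}\in\Mg$ from Lemma~\ref{LemInv}. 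The regularization $\QN$ enters only to prove that dynamical fact (and the invariance $\taug_t(\Mg)\subset\Mg$); it never appears in a limit of KMS vectors.

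Your approach instead requires $\Omg_N\to\Omg$ in norm, equivalently an $N$-uniform bound on $e^{-(\beta/2)(\Lo+\QN)}\Omo$ together with convergence of these vectors. This does not follow from the hypothesis. Strong resolvent convergence $\Lo+\QN\to\Lo+\Qg$ gives $f(\Lo+\QN)\to f(\Lo+\Qg)$ strongly only for bounded continuous $f$; since $\Lo+\Qg$ is not semibounded, $\lambda\mapsto e^{-\beta\lambda/2}$ is unbounded in both directions, and the needed convergence is exactly the kind that fails without extra control. The two remedies you sketch are not available here: the Dyson-series majorant is what the paper establishes in Lemmas~\ref{DysExp}--\ref{MainEstimate} under the quantitative smallness hypotheses of Theorem~\ref{Thm2}, not under the bare assumption $\Omo\in\dom(e^{-(\beta/2)(\Lo+\Qg)})$ of the present statement; and interpolation between $s=0$ and $s=\beta/2$ controls $e^{-s(\Lo+\Qg)}\Omo$, not the $N$-dependent family $e^{-s(\Lo+\QN)}\Omo$, which is the object you must bound uniformly. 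So the proposal leaves the central step open. A secondary problem: the truncation $\QN=\mathbf{1}_{[-N,N]}(\Qg)\Qg$ presupposes a spectral calculus for $\Qg$, i.e.\ essential self-adjointness of $\Qg$, which the paper explicitly does not assume (only $\Lo+\Qg$ and $\Lg$ are shown to be essentially self-adjoint). You would need the paper's cutoff $\QN$ from Section~\ref{RegVer} instead, and even then $\QN$ is unbounded, so ``Araki's perturbation theory for bounded perturbations'' does not apply as stated; one would have to invoke the unbounded version from Derezi\'nski--Jak\v{s}i\'c--Pillet, whose hypothesis $\Omo\in\dom(e^{-(\beta/2)\QN})$ would still need to be verified and the subsequent limit still needs the missing uniform bound.
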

This theorem ensures the existence of an equilibrium state on $\Mg$ for
the dynamical system $(\Mg,\taug)$. Its proof is part of Theorem
\ref{Thm3} below. Here, $\Lo$ denotes the Liouvillean for the joint system of particles
and bosons, where the interaction part is omitted. 
$\Omo$ is the vector representative of the $(\beta,\taug)$-KMS state
for the system without interaction. In a third theorem we study the
condition $\Omo\in \dom(e^{-(\beta/2)(\Lo+\Qg)})$:
\begin{Satz}\label{Thm2}
Assume Hypothesis \ref{Hyp1} is fulfilled.
Then there are two cases,
\begin{enumerate}
\item If $0\,\leq\, \gamma\, <\,1/2$ and $\unl{\eta}_1\,(1+\beta)\,\ll\, 1$, then 
$\Omo\in \dom(e^{-\beta/2\,(\Lo+\Qg)})$.
\item If $\gamma=1/2$ and $(1+\beta)(\unl{\eta}_1+\unl{\eta}_2)\,\ll\, 1$, then
           $\Omo\in \dom(e^{-\beta/2\,(\Lo+\Qg)})$.
\end{enumerate}
\end{Satz}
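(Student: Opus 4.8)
\emph{Strategy.} The plan is to produce $e^{-(\beta/2)(\Lo+\Qg)}\Omo$ as the sum of a norm‑convergent Dyson series in powers of the interaction, and to read off from the smallness conditions in (1)--(2) that this series converges geometrically. Since the free Liouvillean splits into a particle and a reservoir part, each annihilating the corresponding tensor factor of $\Omo=\Omel\tensor\Omf$, one has $\Lo\Omo=0$, so that (formally)
\begin{multline*}
e^{-(\beta/2)(\Lo+\Qg)}\Omo
=\sum_{n\ge0}(-1)^n\!\!\int\limits_{0\le u_1\le\cdots\le u_n\le\beta/2}\!\!\!\!\\
e^{-(\beta/2-u_n)\Lo}\,\Qg\,e^{-(u_n-u_{n-1})\Lo}\,\Qg\cdots e^{-(u_2-u_1)\Lo}\,\Qg\,\Omo\;du .
\end{multline*}
To give this a rigorous meaning despite the unboundedness of $\Lo$ and $\Qg$, I would first replace $\Qg$ by an infrared‑ and ultraviolet‑regularised, and hence bounded, interaction $\Qg_\kappa$; then $e^{-(\beta/2)(\Lo+\Qg_\kappa)}$ is a genuine self‑adjoint operator, the expansion above is legitimate by Araki's perturbation theory, and it remains to bound the $n$‑th summand by $r^n$ with $r<1$ \emph{uniformly in} $\kappa$, and to let $\kappa\to\infty$ at the end.

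\emph{Bounding the general term.} Write $\Qg$ as a finite sum of terms $G^{+}\tensor a^{*}(g^{+})+G^{-}\tensor a(g^{-})$, with $a^{\#}$ the creation/annihilation operators of the Araki--Woods field, the particle operators $G^{\pm}$ being $(\Hael+1)^{\gamma}$‑bounded (this is part of Hypothesis~\ref{Hyp1}) and the thermal form factors $g^{\pm}$ carrying a weight $(1-e^{-\beta\omega})^{-1/2}$. Commuting a propagator $e^{-s\Lo}$ past one factor of $\Qg$ conjugates the particle operator by $e^{\pm s\Hael}$ (it stays $(\Hael+1)^{\gamma}$‑bounded) and multiplies the form factor by $e^{\mp s\omega}$; for $s\in[0,\beta/2]$ the resulting $L^{2}$‑norms are finite and are controlled quantitatively by $\unl{\eta}_1$ when $0\le\gamma<1/2$, whereas at the endpoint $\gamma=1/2$ the weight $e^{(\beta/2)\omega}(1-e^{-\beta\omega})^{-1/2}$ is only \emph{borderline} integrable near $\omega=0$ and a second, sharper norm $\unl{\eta}_2$ is needed. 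Reordering the field operators (Wick's theorem), every contraction yields an inner product of imaginary‑time‑evolved form factors, while the residual creation/annihilation operators either annihilate $\Omf$ or pair off against the mirror factors appearing in $\|\cdot\|^{2}$; the remaining particle operators are collected into a trace of the shape $\Zel^{-1}\Tr\!\big[e^{-(\beta/2)\Hael}\,G^{\pm}e^{-(u_n-u_{n-1})\Hael}G^{\pm}\cdots e^{-(\beta/2-u_n)\Hael}G^{\pm}\big]$, which is finite because $\Tr e^{-\beta\Hael}<\infty$ (Hypothesis~\ref{Hyp1}) and the $G^{\pm}$ are $(\Hael+1)^{\gamma}$‑bounded. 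A careful bookkeeping of the resulting combinatorial factors — the Wick contractions, the traces over the infinitely many atomic levels, and the integration over the simplex, whose volume supplies a compensating $(\beta/2)^{n}/n!$ — shows that the $n$‑th summand is bounded by $\big(C(1+\beta)\unl{\eta}_1\big)^{n}$ in case (1) and by $\big(C(1+\beta)(\unl{\eta}_1+\unl{\eta}_2)\big)^{n}$ in case (2); under the stated hypotheses this ratio is $<1$, so the series converges in norm.

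\emph{Conclusion and the main difficulty.} Summing the geometric series gives $\|e^{-(\beta/2)(\Lo+\Qg_\kappa)}\Omo\|\le(1-r)^{-1}$ uniformly in $\kappa$; since $\Lo+\Qg_\kappa\to\Lo+\Qg$ in the strong resolvent sense and the series is dominated uniformly in $\kappa$, one passes to the limit term by term, identifies $\lim_\kappa e^{-(\beta/2)(\Lo+\Qg_\kappa)}\Omo$ with $e^{-(\beta/2)(\Lo+\Qg)}\Omo$, and concludes $\Omo\in\dom\!\big(e^{-(\beta/2)(\Lo+\Qg)}\big)$, which is precisely the vector needed to build the interacting KMS state $\omg$. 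The crux is the feature advertised in the title: because $\Hael$ has infinitely many levels and the particle operators in $\Qg$ are unbounded, Araki's classical theory of bounded perturbations of KMS states is unavailable, and every propagator and every factor of $\Qg$ must be tamed by relative bounds against the Gibbs weight $e^{-\beta\Hael}$. The genuinely delicate point is the endpoint $\gamma=1/2$: there the simplex weights produced by the $(\Hael+1)^{1/2}$‑bounds are only just integrable, the factorials coming from the field and the particle sides cancel the simplex volume with no slack, and one must replace the crude estimate based on $\unl{\eta}_1$ by the finer one built on $\unl{\eta}_2$ to keep the geometric ratio below $1$.
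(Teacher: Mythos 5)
Your overall plan --- regularise the interaction, expand $e^{-(\beta/2)(\Lo+\Qg)}\Omo$ in a Dyson series, control the $n$-th term via Wick's theorem and H\"older's inequality for the trace, obtain a bound uniform in the regularisation, and pass to the strong-resolvent limit --- is exactly the paper's strategy (Lemmas~\ref{DysExp}, \ref{Lem0.1}, \ref{MainEstimate}, \ref{Lem1.1}). But the mechanism you give for the dichotomy between $\gamma<1/2$ and $\gamma=1/2$, and for the way $\unl{\eta}_1$ and $\unl{\eta}_2$ enter, is not the right one, and following your explanation would not reproduce the stated smallness conditions.

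Two inaccuracies are minor but worth flagging: the relative bounds on the form factors are \emph{not} part of Hypothesis~\ref{Hyp1} (that hypothesis is only the Gibbs trace condition), and $\Qg$ is not merely linear in creation/annihilation operators --- it has a quadratic part $\Phi_\beta(\vec G)\Phi_\beta(\vec H)+\hc$ with \emph{bounded} coefficients (measured by $\unl{\eta}_1$) and a linear part $\Phi_\beta(F)$ whose coefficient is only $\Haelp^{\gamma}$-relatively bounded (measured by $\unl{\eta}_2$). The substantive error is your account of why $\gamma=1/2$ is critical. You attribute it to the infrared behaviour of the thermal weight, but the frequency integral is no worse at $\gamma=1/2$ than at $\gamma=0$: $\gamma$ never enters the bosonic side. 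In the paper the $\gamma$-dependence is produced on the \emph{particle} side. After Wick-contracting the bosonic factors, what remains is a trace of the form $\Tr_{\Hel}\{e^{-\beta(1-s_1+s_{n})\Hael}I_{n}e^{-\beta(s_{n-1}-s_{n})\Hael}\cdots I_1\}$, to which H\"older's inequality is applied with exponents $p_i=(s_{i-1}-s_i)^{-1}$. Each insertion of $\Phi(F)$ forces an extra $\Haelp^{\gamma}$ into the trace, and the spectral theorem turns $\|e^{-\epsilon\Hael}\Haelp^{p_i\gamma}\|^{p_i^{-1}}$ into a factor $p_i^{\gamma}=(s_{i-1}-s_i)^{-\gamma}$: a singular weight on the simplex. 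Integrating these singularities over $\Delta^n$ via the Dirichlet--Gamma formula (Lemma~\ref{Lem0.2}) yields the decisive factor $(n+1)^{-(1-2\gamma)n_2}$, where $n_2$ counts the $\Phi(F)$-insertions. For $\gamma<1/2$ this decays in $n$ and suppresses the $\unl{\eta}_2$ contribution asymptotically, so only $\unl{\eta}_1(1+\beta)\ll1$ is required (case~1); at $\gamma=1/2$ there is no decay, so $\unl{\eta}_1$ and $\unl{\eta}_2$ both enter the geometric ratio (case~2). Without this trace/H\"older/simplex bookkeeping one cannot recover the advertised conditions, in particular the absence of $\unl{\eta}_2$ from case~(1).
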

Here, $\gamma\in[0,1/2)$ is a parameter of the model, see \eqref{FormFactorBound} and
$\unl{\eta}_1,\unl{\eta}_2$ are parameters, which describe the strength of the
interaction, see \eqref{FormFactorBound}. In a last theorem we consider the case
where $\Hael= -\Delta_q + \Theta^2q^2$ and the interaction Hamiltonian
is $\lambda\, q\, \Phi(f)$ at temperature zero for $\lambda \not=0$. Then,
\begin{Satz} \label{Thm4a}
$\Omo$ is in $\dom(e^{-\beta/2\,(\Lo\,+\,\Qg)})$ for all
$\beta \in (0,\, \infty)$,
whenever $$|2\Theta^{-1}\,\lambda|\,\||k|^{-1/2}\,f\|_{\hph}\, <\,1.$$
\end{Satz}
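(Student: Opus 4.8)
The plan is to reduce the claim to an explicit Gaussian/oscillator computation that can be carried out in the Araki--Woods representation, and then apply Theorem \ref{Thm2} (case $\gamma=1/2$) or, more directly, a hands-on estimate for the harmonic case. Since $\Hael=-\Delta_q+\Theta^2 q^2$ is the one-dimensional harmonic oscillator, it has discrete spectrum and possesses a Gibbs state at every $\beta\in(0,\infty)$, so Hypothesis \ref{Hyp1} holds automatically; this disposes of the hypothesis needed to invoke the earlier theorems. The interaction at temperature zero is $\Qg$-generating term $\lambda\,q\,\Phi(f)$, which is bilinear in creation/annihilation operators of both the oscillator mode and the field; its Liouvillean counterpart $\Qg$ is again a sum of (doubled) bilinear expressions. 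First I would write $\Lo+\Qg$ restricted to the relevant invariant subspace as a quadratic form in the Araki--Woods fields, i.e. as a quadratic Hamiltonian of a coupled system of harmonic oscillators (the atomic mode $q$, tensored with the thermal field modes), and observe that $\Omo$ is the corresponding Gaussian (quasi-free) vector.

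The key step is then a Bogoliubov/diagonalization argument: $e^{-(\beta/2)(\Lo+\Qg)}$ acts on a Gaussian vector and the question of whether $\Omo\in\dom(e^{-(\beta/2)(\Lo+\Qg)})$ becomes the question of whether a certain positive quadratic form stays positive (equivalently, whether a one-parameter family of Gaussian integrals converges) along the imaginary-time flow of length $\beta/2$. Concretely, I would single out the atomic degree of freedom: after integrating out the field in the Gaussian state, one is left with an effective one-mode problem whose ``stability'' is governed by the sign of $\Theta^2-$ (a self-energy shift controlled by $\lambda^2\||k|^{-1/2}f\|^2$). The dimensionless combination that controls everything is exactly $2\Theta^{-1}|\lambda|\,\||k|^{-1/2}f\|_{\hph}$: when this is $<1$ the shifted oscillator frequency remains real and positive, the Gaussian integral defining $e^{-(\beta/2)(\Lo+\Qg)}\Omo$ has finite norm for \emph{all} $\beta$, and when it is $\ge 1$ the effective potential ceases to be bounded below and the domain condition fails. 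I would make this precise by an explicit completion-of-squares in the $q$-variable, bounding the cross term $\lambda\,q\,\Phi(f)$ by $\varepsilon\,\Theta^2 q^2+(4\varepsilon)^{-1}\lambda^2\Theta^{-2}\Phi(f)^2$ and using $\Phi(f)^2\le \|f\|^2 N_f+\tfrac12\|f\|^2$ together with the IR-regularized bound $\||k|^{-1/2}f\|$ that appears because the field energy supplies a factor $|k|$; optimizing over $\varepsilon$ reproduces the stated threshold.

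The main obstacle I expect is not the formal Gaussian computation but making the manipulation of the \emph{unbounded, non-semibounded} operator $\Lo+\Qg$ rigorous: one must justify that the relevant form domain is preserved, that the doubling inherent in the Liouvillean (the ``tilde'' copy that anti-commutes with $\Jg$) does not spoil the one-mode reduction, and that the analytic-vector argument showing $\Omo\in\dom(e^{-(\beta/2)(\Lo+\Qg)})$ genuinely goes through rather than merely formally. The cleanest route is probably to avoid diagonalizing $\Lo+\Qg$ directly and instead verify the Glimm--Jaffe--type estimate $\pm\Qg\le \delta(\Lo+c)+c_\delta$ with $\delta<1$ under the stated smallness condition, which then gives $e^{-(\beta/2)(\Lo+\Qg)}\Omo$ via the perturbation series (each term estimated against $e^{-(\beta/2)\Lo}$ acting on $\Omo$, which is the free KMS vector and hence in the domain of every bounded function of $\Lo$). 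I would carry out the bound on $\Qg$ first, then feed it into the Dyson/Duhamel expansion and sum, the convergence of the sum being exactly the condition $2\Theta^{-1}|\lambda|\,\||k|^{-1/2}f\|_{\hph}<1$.
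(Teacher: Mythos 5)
Your physical intuition about where the threshold $2\Theta^{-1}|\lambda|\,\||k|^{-1/2}f\|<1$ comes from (an effective shift of the oscillator frequency, stability of a Gaussian state) is sound, and you correctly observe that Hypothesis \ref{Hyp1} is automatic. But neither of the two routes you sketch would actually deliver the statement as proved, and you miss the one idea the paper's argument lives on.

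Your ``cleanest route''---a Glimm--Jaffe--type bound $\pm\Qg\le \delta(\Lo+c)+c_\delta$ fed into the Dyson/Duhamel expansion with term-by-term estimates---cannot give a $\beta$-independent threshold. Estimating each term of the expansion by the size of the integration simplex and a power of an operator bound on $\Qg$ produces a bound of the form $(\mathrm{const}\cdot\beta)^n/n!$ with a combinatorial factor from the $2n$-point Wick contractions of order $(2n)!/(2^n n!)$, so after the square root one gets a radius of convergence in $\lambda\beta$, not in $\lambda$ alone. This is precisely what Remark \ref{Bem3d} of the paper demonstrates: the criterion $\Omo\in\dom(e^{-(\beta/2)\Qg})$ of Derezinski--Jaksic--Pillet (which is what a term-by-term estimate amounts to) genuinely fails for large $\beta$ at any fixed $\lambda\ne 0$. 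The paper's Theorem \ref{Thm2} is exactly the general result one gets from that crude route, and its smallness condition carries the factor $(1+\beta)$. Also note that $\Lo$ has spectrum on all of $\R$ and is not semibounded, so even writing the Glimm--Jaffe estimate requires care you do not address.

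What the paper actually does is expand $h_{2n}(\beta,\lambda)=\int_{\Delta_\beta^{2n}}|\langle\Omo|\Qg(\unl{s}_{2n})\Omo\rangle|\,d\unl{s}_{2n}$ by Wick's theorem for the quasi-free states $\omosc$ and $\omf$ \emph{simultaneously}, producing a sum over pairs of pairings, i.e. over graphs with an ``oscillator'' line and a ``field'' line at every vertex. The decisive step is the linked-cluster decomposition of Equations \eqref{eq4.7}--\eqref{eq4.10}: grouping graphs by their connected components, computing $J(\#A,\beta)$ for a connected component via Lemma \ref{Lem1}, and observing that each connected component of size $2m$ contributes a factor $\frac{1}{2m}$ from the count of connected graphs. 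Summing $\sum_m\frac{1}{2m}\le\frac12\ln(n+1)$ yields only polynomial growth $(n+1)^{(C\beta+1)/2}$, which is dominated by the geometric factor $(2\Theta^{-1}\||k|^{-1/2}f\|\,|\lambda|)^{2n}$. Without the connected-graph factorization the sum over pairings produces a genuine $(2n)!/(n!)^2\sim4^n$ worse, and no $\beta$-uniform bound. Your Bogoliubov-diagonalization alternative is closer in spirit to an exact Gaussian computation and \emph{could} in principle give an all-$\beta$ result, but you (rightly) stop short of making it rigorous; it is in any case not the paper's method, and you would still have to control infinitely many field modes and the doubled (tilde) degrees of freedom. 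The linked-cluster combinatorics is the missing idea, and it is irreplaceable in this proof.
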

Furthermore, we
show that our strategy  can not be improved to obtain a result, which
ensures existence for all values of $\lambda$, see \eqref{eq4.12}.\\
\indent In the last decade there appeared a large number of mathematical
contributions to the theory of open quantum system. Here we only want to
mention some of them \cite{BachFröhlichSigal2000,DerezinskiJaksicPillet2003,DerezinskiJaksic2003,FroehlichMerkli2004,FroehlichMerkliSigal2004,JaksicPillet1996a,JaksicPillet1996b,Merkli2001}, which consider a
related model, in which the particle Hamilton $\Hael$ is represented as a finite symmetric matrix
and the interaction part of the Hamiltonian is linear in annihilation  and creation operators.
In this case one can prove existence of a $\beta$- KMS without
any restriction to the strength of the coupling. (In this case we can apply Theorem
\ref{Thm2} with $\gamma=0$ and $\unl{\eta}_1=0$). 
We can show existence of KMS-states for an infinite level atom coupled to a
heat bath.
Furthermore, in \cite{DerezinskiJaksicPillet2003}
there is a general theorem, which ensures existence of a $(\beta,\taug)$-KMS
state under the assumption, that $\Omo\in \dom(e^{-(\beta/2) \Qg})$, which implies
$\Omo\in \dom(e^{-(\beta/2)(\Lo+\Qg)})$. In Remark \ref{Bem3d} we verify that
this condition implies the existence of a $(\beta,\taug)$-KMS state in the case of
a harmonic oscillator with dipole interaction $\lambda\, q\, \cdot \,\Phi(f)$,
whenever $(1+\beta) \lambda\|(1+|k|^{-1/2})f\| \ll 1$.\\
\bigskip
%
%
%
\section{Mathematical Preliminaries}
\subsection{Fock Space, Field- Operators and Second Quantization}
We start our mathematical introduction with the description of the
joint system of particles and bosons at temperature zero. 
The Hilbert space describing bosons at temperature zero is
the \emph{bosonic Fock space} $\Fb$, where
\begin{equation*} \label{FockSpace}
\Fb\,:=\, \mathcal{F}_b[\hph] \,:=\, \C \oplus \bigoplus_{n=1}^\infty \hph^{(n)},
\qquad \hph^{(n)}\,:=\,\bigotimes_{sym}^n \hph.
\end{equation*}
$\hph$ is  either  a closed subspace of $L^2(\R^3)$ or $L^2(\R^3 \times \{\pm\})$,
being invariant under complex conjugation.
If phonons are considered we choose $\hph= L^2(\R^3 )$,
if photons are considered we choose $\hph=L^2(\R^3 \times \{\pm\})$.
In the latter case "+" or "-" labels the polarization of the photon. 
However, we will write $\langle\, f\,|\,g\,\rangle_{\hph} \,:=\, \int \ovl{f(k)}\, g(k)\,dk$ 
for the scalar product in both cases. This is an abbreviation for
$\sum_{p\,=\,\pm} \int \ovl{f(k,p)}\, g(k,p)\,dk$
in the case of photons.\\
\indent $\hph^{(n)}$ is the $n$-fold symmetric tensor product of $\hph$, that is, it contains
all square integrable functions $f_n$ being invariant under permutations $\pi$ of
the variables, i.e., $f_n(k_1,\ldots,k_n)\,=\,f_n(k_{\pi(1)},\ldots,k_{\pi(n)})$. 
For phonons we have $k_j\in \R^3$ and
 $k_j\in \R^3\times \{\pm\}$ for photons. 
The wave functions in $\hph^n$ are states of $n$ bosons.\\
\indent The vector $\Omega \,:=\, (1,\,0,\,\ldots)\,\in \Fb$ is called the \emph{vacuum}. 
Furthermore we denote the subspace $\Fb$ of finite sequences with ${\Fb}^{fin}$. 
On ${\Fb}^{fin}$ the \emph{creation and annihilation operators},
$a^*(h)$ and $a(h)$, are defined for $h\in \hph$ by
\indent 
\begin{align}\label{CreaAnnOps}
(a^*(h)&\,f_{n})(k_1,\ldots,k_{n+1})\\ \nonumber
&\,=\,(n+1)^{-1/2}\sum_{i=1}^{n+1}h(k_i)\,f_{n}(k_1,\ldots,k_{i-1},k_{i+1},\ldots ,k_{n+1}),\\ 
(a(h)&\,f_{n+1})(k_1,\ldots,k_{n})\\ \nonumber
&\,=\,(n+1)^{1/2}\int h(k_{n+1})\,f_{n+1}(k_1,\ldots,k_{n+1})\,dk_{n+1},\nonumber
\end{align}
and $a^*(h)\,\Omega\,=\,h,\ a(h)\,\Omega\,=\,0 $. Since $a^*(h)\,\subset\,(a(h))^*$ and $a(h)\,\subset\,(a^*(h))^*$,
the operators $a^*(h)$ and $a(h)$ are closable. Moreover, the  canonical commutation relations (CCR) hold
true, i.e.,
\begin{gather*}\label{CCR}
[a(h)\,,\,a(\widetilde{h})]\,=\,[a^*(h)\,,\,a^*(\widetilde{h})]\,=\,0,
\qquad [a(h)\,,\,a^*(\widetilde{h})] \,=\,\langle\, h \,|\, \widetilde{h}\,\rangle_{\hph}.
\end{gather*}
Furthermore we define field operator by
\begin{equation*} \label{FieldOp}
\Phi(h)\,:=\,2^{-1/2}\,(a(h)\,+\,a^*(h)),\qquad h\in \hph.
\end{equation*}
It is a straightforward calculation to check that the vectors in
 ${\Fb}^{fin}$ are analytic for $\Phi(h)$. Thus, $\Phi(h)$ is 
essentially self-adjoint on ${\Fb}^{fin}$. 
In the sequel, we will identify $a^*(h),\, a(h)$ and $ \Phi(h)$
with their closures. The Weyl operators $\We(h)$ are given by
$\We(h)\,=\,\exp(\imath\, \Phi(h))$. 
They fulfill the CCR-relation for the Weyl operators, i.e.,
\begin{equation*} \label{WeylCCR}
\We(h)\,\We(g)\,=\, \exp( \imath /2\,  \Im\langle\, h\,|\,g\,\rangle_{\hph} )\,\We(g+h),
\end{equation*}
which follows from explicit calculations on ${\Fb}^{fin}$. The
Weyl algebra $\We(\f)$ over  a subspace $\f$ of $\hph$ is defined by
\begin{equation}\label{WeylAlgebra}
\We(\f) \,:= \,\cl \linhull \{ \We(g)\in \mathcal{B}(\Fb)\,:\, g\in \f\}.
\end{equation}
Here, $\cl$ denotes the closure with respect to the norm of $\mathcal{B}(\Fb)$,
and "$\linhull$" denotes the linear hull.\\
\indent Let $\alpha\,:\, \R^3\,\rightarrow\, [0,\,\infty)$ be a locally bounded Borel function
and $\dom(\alpha )\,:=\, \{ f\in \hph\,:\, \alpha f\in \hph \}$. Note, that $(\alpha f)(k)$ 
is given by $\alpha(k)\,f(k,p)$ for photons.
If $\dom(\alpha)$ is dense subspace of $\hph$, $\alpha$ defines a self-adjoint multiplication
operator on $\hph$. In this case, the second quantization $d\Gamma(\alpha)$ of $\alpha$  is defined by
\begin{equation*}\label{SecQuant}
(d\Gamma(\alpha)\,f_n)(k_1,\ldots,k_n)\,:=\, (\alpha(k_1)+\alpha(k_2)+\ldots+\alpha(k_n))\,f_n(k_1,\ldots,k_n) 
\end{equation*}
and $d\Gamma(\alpha)\,\Omega\,=\,0$ on its maximal domain.
%
%
\subsection{Hilbert space and Hamiltonian for the particles}
%
%
Let $\Hel$ be a closed, separable subspace of $L^2(X,\, d\mu)$, that is 
invariant under complex conjugation. 
The Hamiltonian $\Hael$ for the particle is a self-adjoint operator on $\Hel$
being bounded from below. We set $\Haelp\,:=\,\Hael\,-\,\inf\sigma(\Hael)\,+\,1$.
Partly, we need the assumption
\begin{Hyp}\label{Hyp1}
Let $\beta>0$. There exists a small positive constant $\epsilon>0$, and
\begin{equation*}\label{GibbsCond}
\Tr_{\Hel}\{ e^{-(\beta\,-\,\epsilon)\, \Hael}\}\, <\,\infty .
\end{equation*}
The condition implies the existence of a Gibbs state
\begin{equation*}\label{GibbsState}
\omel(A)\,=\,\Zel^{-1}\,\Tr_{\Hel}\{ e^{-\beta\, \Hael} A \},\quad A\in \mathcal{B}(\Hel),
\end{equation*}
for $\Zel \, = \,\Tr_{\Hel}\{ e^{-\beta\, \Hael} \}$.
\end{Hyp}

\begin{Bem} \label{Bem:Gibbs}
Let  $\Hel=L^2(\R^n, d^n\,x)$ 
and $\Hael\,=\, -\Delta_x \,+\, V_1\,+\,V_2$,  
where $V_1$ is a $-\Delta_x$-bounded potential
with relative bound $a\,<\,1$ and $V_2$ is in $L^2_{loc}(\R^n,\, d^nx)$.
Thus $\Hael$ is essentially self-adjoint on $\mathcal{C}_c^\infty(\R^n)$.
Moreover, if additionally 
\begin{equation}
\int e^{-(\beta\,-\,\epsilon)\, V_2(x)}\,d^n\,x<\,\infty,
\end{equation}
 then one can show, using the Golden-Thompson-inequality, that
Hypothesis \ref{Hyp1} is satisfied.
\end{Bem}
\subsection{Hilbert space and Hamiltonian 
for the interacting system}\label{SecGeneralizedFieldop}
The Hilbert space for the joint system is $\Hig\,:=\, \Hel \tensor \Fb$.
The vectors in $\Hig$ are sequences $f= (f_n)_{n\in \N_0}$ of wave functions,
$f_n\in \Hel \tensor \hph^{(n)}$, obeying
\begin{gather*}
\unl{k}_n \,\mapsto \,f_n(x,\,\unl{k}_n)\in  \hph^{(n)} 
\qquad \textrm{ for } \mu\textrm{- almost every x}\\
x\, \mapsto \,f_n(x,\,\unl{k}_n)\in  \Hel\qquad  
\text{ for Lebesgue - almost every }\unl{k}_n,
\end{gather*}
where $\unl{k}_n \,=\, (k_1,\ldots,k_n)$. 
The complex conjugate vector is $\ovl{f}\,:=\, (\, \ovl{f}_n\,)_{n\in \N_0}$.\\
\indent Let $G^j\,:=\,\{G^j_k\}_{k\in\R^3},\, H^j\,:=\,\{H^j_k\}_{k\in\R^3}$ and $F\,:=\,\{F_k\}_{k\in\R^3}$ 
be families of closed operators on $\Hel$ for $j\,=\,1,\ldots,r$.
We assume, that $\dom(F^\sigma_k)\,\supset\, \dom(\Haelp^{1/2})$ and that
%
$$
 k\,\mapsto\, G_k^j,\ (H_k^j),\ F_k \Haelp^{-1/2},\ (F_k)^* \Haelp^{-1/2}   \in \mathcal{B}(\Hel)
$$
%
are weakly (Lebesgue-)measurable. For $\phi \in \dom(\Haelp^{1/2})$ we assume that
\begin{gather}
k\,\mapsto\, (G_k^j\,\phi)(x),\ (H_k^j\,\phi)(x),\ (F_k\,\phi)(x)   \in \hph,\\
k\,\mapsto\, ((G_k^j)^*\,\phi)(x),\ ((H_k^j)^*\,\phi)(x),\ ((F_k)^*\,\phi)(x)   \in \hph, \textrm{ for }x\in X.
\end{gather}
Moreover we assume for $\vec G\,=\,(G^1,\ldots,\,G^r),\ 
\vec H \,:=\, (H^1,\ldots,\,H^r)$ and $F$, that
\begin{equation*}
\|\vec G\|_w\,<\,\infty,\ \|\vec H\|_w \,<\,\infty ,\  \|F\|_{w,1/2}\,<\,\infty,
\end{equation*}
where 
\begin{gather*}\label{DefNorm}
\|G_j\|^2_w \,:=\,\int (\alpha(k)\,+\,\alpha(k)^{-1})
\,\big(\|(G_k^j)^*\|_{\mathcal{B}(\Hael)}^2\,+\,\|G_k^j\|_{\mathcal{B}(\Hael)}^2\big)\,dk\\
\|\vec G\|^2_w\, :=\, \sum_{j=1}^r\|G_j\|^2_w,
\qquad \|F\|^2_{w,1/2}\,:=\, \|F\Haelp^{-1/2}\|^2_{w}\,+\,\|F^*\Haelp^{-1/2}\|^2_{w}.
\end{gather*}
\noindent We define for $f\,=\,(f_n)_{n=0}^\infty\in \dom(\Haelp^{1/2})\tensor \Fb^{fin}$ the (generalized) creation operator
\begin{align}\label{GenCreaOp}
(a^*(F)\,f_{n})&(x,\,k_1,\ldots,\,k_{n+1})\\ \nonumber
 &\,:=\,(n+1)^{-1/2}\,\sum_{i=1}^{n+1}(F_{k_{i}}\,f_{n})(x,\,k_1,\ldots,\,k_{i-1},\,k_{i+1},\ldots ,\,k_{n+1})
\end{align}
and $a(F)\,f_0(x)\,=\, 0$. The (generalized) annihilation operator is
\begin{align}\label{GenAnnOp}
(a(F)\,f_{n+1})&(x,\,k_1,\ldots,k_{n})\\ \nonumber
&\,:=\,(n+1)^{1/2}\,\int(F^*_{k_{n+1}}\,f_{n+1})(x,\,k_1,\ldots,\,k_{n},\,k_{n+1})\, dk_{n+1}.
\end{align}
Moreover, the corresponding (generalized) field operator is $ \Phi(F)\,:=\, 2^{-1/2}\, (a(F)\,+\,a^*(F))$.
$\Phi(F)$ is symmetric on $\dom(\Haelp^{1/2})\tensor \Fb^{fin}$.
The bounds follow directly from Equations \eqref{GenCreaOp} and \eqref{GenAnnOp}.
\begin{eqnarray}\label{RelBoundGen}
\|\,a(F)\Haelp^{-1/2}\,f\|^2_{\Hig}
&\,\leq\,& \int |\alpha(k)|^{-1} \|F^*_k\,\Haelp^{-1/2}\|^2_{\mathcal{B}(\Hel)}\,dk \cdot
\|d\Gamma(|\alpha|)^{1/2}f\|^2_{\Hig}\\ \nonumber
\|a^*(F)\Haelp^{-1/2}\,f\|^2_{\Hig}
&\,\leq\,& \int |\alpha(k)|^{-1} \|F_k\,\Haelp^{-1/2}\|^2_{\mathcal{B}(\Hel)}\,dk\cdot
\|d\Gamma(|\alpha|)^{1/2}\,f\|^2_{\Hig}\\  \nonumber
&&\, +\,\int \|F_k\,\Haelp^{-1/2}\|^2_{\mathcal{B}(\Hel)}\,dk
\cdot \|f\|^2_{\Hig}.
\end{eqnarray}
For $(G_k)^j,\, (H_k)^j\in \mathcal{B}(\Hel)$, the factor $\Haelp^{-1/2}$ can
be omitted. The Hamiltonians for the non-interacting, resp. interacting model are
\begin{Def} \label{Def:Ho-Hg}On $\dom(\Hael)\tensor \dom(d\Gamma(\alpha))\cap \Fb^{fin}$ we define
\begin{equation}\label{Def:Ho-Hg}
\Ho\, :=\, \Hael \tensor \one \,+\, \one \tensor d\Gamma(\alpha), \qquad
\Hg\, :=\, \Ho \,+\, \Wg,
\end{equation}
where $\Wg\, :=\, \Phi(\vec G )\,\Phi(\vec H )\,+\, \hc\,+\,\Phi(F)$
and $\Phi(\vec G )\,\Phi(\vec H )\,:=\, \sum_{j=1}^r\Phi( G^j )\,\Phi(H^j )$.
The abbreviation "h.c." means the formal adjoint operator of $\Phi(\vec G )\,\Phi(\vec H )$.
\end{Def}
We give  examples for possible configurations:\\
Let $\gamma \in \R$ be a small coupling parameter.\smallskip\\
$\blacktriangleright$
The Nelson Model:\\
$\Hel\subset L^2(\R^{3N})$, $\Hael\,:=\, -\Delta\,+\,V$,\, $\hph\,=\,L^2(\R^3)$ and $\alpha(k)\,=\,|k|$.
The form factor is $F_k\,=\,\gamma\, \sum_{\nu\,=\,1}^N\,e^{- \imath\, k x_\nu}\,|k|^{-1/2}\, 
\one[\,|k|\,\leq\, \kappa],\ x_\nu\in \R^3$ and $H^j,\, G^j\,=\,0$.\smallskip\\
$\blacktriangleright$ The Standard Model of Nonrelativistic  QED:\\
$\Hel\subset L^2(\R^{3N})$, $\Hael\,:=\, -\Delta\,+\,V$,\, $\hph\,=\,L^2(\R^3\times \{\pm\})$ and $\alpha(k)\,=\,|k|$.
The form factors are 
\begin{gather*}
F_\mathbf{k}\,=\, 4\gamma^{3/2} \,\pi^{-1/2}\,\sum_{\nu=1}^N(-\imath \nabla_{x_\nu} \cdot \epsilon(k,p)) 
e^{- \imath\,\gamma^{1/2} k x_\nu}\,(2|k|)^{-1/2}\, 
\one[\,|k|\,\leq\, \kappa]+ \hc,\\
G^{i,\,\nu}_\mathbf{k}\, = \, H^{i,\,\nu}_\mathbf{k}
\,=\,2\gamma^{3/2} \,\pi^{-1/2}\,\epsilon_i(k,\,p)\, e^{- \imath\,\gamma^{1/2}\, k x_\nu}\,(2|k|)^{-1/2}\,  
\one[\,|k|\,\leq\, \kappa]
\end{gather*}  
for $i\,=\,1,\,2,\,3,\  \nu\,=\,1,\ldots,\, N,\ x_\nu\in \R^3$
and $\mathbf{k}=(k,p)\in \R^3\times\{\pm\}$. $\epsilon_i(k,\,\pm)\in\R^3 $ are polarization vectors.
\smallskip\\
$\blacktriangleright$ The Pauli-Fierz-Model: \\
 $\Hel\subset L^2(\R^{3N})$, $\Hael\,:=\, -\Delta\,+\,V$,\, 
 $\hph\,=\,L^2(\R^3)$ or $\hph\,=\,L^2(\R^3\times \{\pm\})$, and $\alpha(k)\,=\,|k|$.
 The form factor is $F_k\,=\,\gamma \sum_{\nu\,=\,1}^N \one[\,|k|\,\leq\, \kappa] \, k\cdot x_\nu$ and
 $G^j_k=H^j_k=0$
\bigskip
\section{The Representation $\pig$}\label{sec:Rep}
In order to describe the particle system at inverse temperature $\beta$ 
we introduce the algebraic setting. For $\f\, =\,\{ f\in \hph\,:\, \alpha^{-1/2}f \in \hph\}$
we define the algebra of observables by
\begin{equation*}\label{Algebra}
\Ag \,=\, \mathcal{B}(\Hel)\tensor \Wop(\f).
\end{equation*}
For elements $A\in \Ag$ we
define $\tilde{\tau}^0_t(A)\,:=\, e^{\imath \,t\, \Ho}\,A\, e^{-\imath\, t\, \Ho}$ and\\
$\tilde{\tau}^g_t(A)\,:=\, e^{\imath \,t\,\Hg}A e^{-\,\imath\, t\, \Hg}$.
We first discuss the model without interaction.
%
%
\subsection{The Representation $\pif$} 
The time-evolution for the Weyl operators is given by
\begin{equation*} \label{TimeEvField}
e^{\imath\, t \,\Haf}\,\Wop(f)\,e^{-\imath\, t\, \Haf}\,=\, \Wop(e^{\imath\, t\, \alpha }\,f).
\end{equation*}
For this time-evolution an equilibrium state $\omf$ is defined by
\begin{equation*}
\omf(\Wop(f))\,=\, \langle\, f \,|\, (1 \,+\, 2\,\varrho_\beta) \,f\,\rangle_{\hph},
\end{equation*}
where $\varrho_\beta(k)\,=\, \big(\exp(\beta \,\alpha(k)) \,-\,1\big)^{-1}$.
It describes an infinitely extended gas of bosons with momentum density
$\varrho_\beta$ at temperature $\beta$.
Since $\omf$ is a quasi-free state on the Weyl algebra, the definition
of $\omf$ extends to polynomials of creation and annihilation operators.
One has
\begin{gather*} \label{EvOnePointTwoPoint}
\omf(a(f))\,=\,\omf(a^*(f))\,=\,\omf(a(f)\,a(g))\,=\,\omf(a^*(f)\,a^*(g))\,=\,0,\\
\omf(a^*(f)\,a(g))\,=\, \langle\, g\,|\, \varrho_\beta \,f \,\rangle_{\hph}.
\end{gather*}
For polynomials of higher degree one can apply Wick's theorem for quasi-free states, i.e.,
\begin{equation} \label{QuasiFreeExp}
\omf(a^{\sigma_{2m}}(f_{2m})\cdots a^{\sigma_1}(f_1))
\,=\,\sum_{P\in \mathcal{Z}_2}\prod_{\stackrel{\{i,j\}\in P}{{i>j}}}
\omf\big(a^{\sigma_i}(f_i)\,a^{\sigma_j}(f_j)\big),
\end{equation}
where $a^{\sigma_{k}}\,=\,a^*$ or $a^{\sigma_{k}}\,=\,a$ for $k=1,\ldots,\,2m$.
$\mathcal{Z}_2$ are the pairings, that is
\begin{center}
 $P\in\mathcal{Z}_2,$ iff
$P\,=\,\{ Q_1,\ldots,Q_m\},\ \#Q_i=2$ and $\bigcup_{i=1}^m\,Q_i\,=\,\{1,\ldots,\,2m\}$.\\
\end{center}
The Araki-Woods isomorphism $\pif\,:\, \Wop(\f)\,\rightarrow\, \mathcal{B}(\Fb\tensor \Fb)$
is defined by
\begin{gather*} \label{ArakiWoodsIso}
\pif[\Wop(f)] \,:=\, \Wop_\beta(f)\,:=\, \exp( \imath\, \Phi_\beta(f)),\\
\Phi_\beta(f)\,:=\, \Phi( (1\,+\,\varrho_\beta)^{1/2}\, f)\tensor \one 
\,+\, \one \tensor \Phi( \varrho_\beta^{1/2}\, \ovl{f}).
\end{gather*}
The vector $\Omf\,:=\, \Omega\tensor \Omega$ fulfills
\begin{equation}\label{AWRep}
\omf(\Wop(f))\,=\, \langle\, \Omf\,|\, \pif[\Wop(f)]\, \Omf \,\rangle.
\end{equation}
\subsection{The representation $\piel$}
The particle system without interaction has the observables
$\mathcal{B}(\Hel)$, the states are defined by density operators $\rho$,
i.e., $\rho \in \mathcal{B}(\Hel),\ 0\leq \rho,\ \Tr\{\rho\}\,=\,1$. The expectation
of $A\in \mathcal{B}(\Hel)$ in $\rho$ at time $t$ is
\begin{equation*}
\Tr\{\, \rho\, e^{\imath\, t\, \Hael}\,A\, e^{-\imath\, t\, \Hael}\}.
\end{equation*}
Since $\rho$ is a  compact, self-adjoint operator, there is
an ONB $(\phi_n)_n$ of eigenvectors, with corresponding (positive) eigenvalues
$(p_n)_n$. Let 
\begin{equation}\label{Vektor-rep}
\sigma(x,\,y) \,=\,\sum_{n=1}^\infty p_n^{1/2}\, \phi_n(x)\, \ovl{\phi_n(y)}\in \Hel\tensor \Hel.
\end{equation}
For $\psi \in \Hel$ we define $\sigma\, \psi\,:=\, \int \sigma(x,\,y)\,\psi(y)\, d\mu(y)$. 
Obviously, $\sigma$ is an operator of Hilbert-Schmidt class.
Note, $\ovl{\sigma}\,\psi \,:=\, \ovl{\sigma\, \ovl{\psi}}$ has the integral kernel $\ovl{\sigma(x,\,y)}$.
It is a straightforward calculation to verify that
\begin{equation*}
\Tr\{ \rho\, e^{\imath\, t\, \Hael}\,A\, e^{-\imath\, t \,\Hael}\}
\,=\, \langle\, e^{-\imath\, t \,\Lel}\,\sigma \,|
\, (A\tensor 1)\, e^{-\imath\, t\, \Lel}\sigma\,\,\rangle_{\Hel\tensor \Hel},
\end{equation*}
where $\Lel\,=\, \Hael\tensor \one - \one \tensor \ovl{H}_{el}$. This suggests the definition of
the representation
\begin{equation*}
\piel\,:\, \mathcal{B}(\Hel)\rightarrow \mathcal{B}(\Hel\tensor \Hel),\quad A\mapsto A\tensor \one.
\end{equation*}
Now, we define the representation map for the joint system by
\begin{equation*}
\pi\,:\, \Ag\, \rightarrow\, \mathcal{B}(\Kg),\qquad \pi\,:=\, \pi_{el}\tensor \pi_f,
\end{equation*}
where $\Kg\,:=\,\Hel \tensor \Hel \tensor \Fb\tensor \Fb$.
Let $\Mg\,:=\, \pi[\Ag]''$ be the \emph{enveloping $W^*$-algebra}, here $\pi[\Ag]'$ denotes
the commutant of $\pi[\Ag]$, and $\pi[\Ag]''$ the bicommutant. 
We set $\Core:= U_1\otimes \overline{U_1}\otimes \mathcal{C}$,
where $\mathcal{C}$ is a subspace of vectors in $\mathcal{F}_{b}^{fin}\otimes\mathcal{F}_{b}^{fin}$,
with compact support, and $U_1:=\cup_{n=1}^\infty \operatorname{ran}\,\mathbbm{1}[\Hel \le n]$.
On $\Core$ the operator $\Lo$, given by
\begin{eqnarray*}
\Lo &\,:=\,& \Lel\tensor \one \,+\, \one \tensor \Lf,\quad \textrm{on } \Kg,\\
\Lf &\,:=\,& d\Gamma(\alpha)\tensor \one \,-\, \one \tensor d\Gamma(\alpha),\quad \textrm{on } \Fb\tensor \Fb,
\end{eqnarray*}
is essentially self-adjoint and we can define 
\begin{equation*}
\tauo_t(X)\,:=\, e^{\imath \,t\, \Lo}\,X\, e^{-\imath \,t\, \Lo}\in \Mg,\quad X\in \Mg,\quad t\in \R,
\end{equation*}
It is not hard to see, that 
\begin{equation*}
\pi[\tilde{\tau}^0_t(A)]\,=\,\tauo_t(\pi[A]),\quad A\in \Ag,\,\quad t\in \R
\end{equation*}
On $\Kg$ a we introduce a conjugation  by 
\begin{equation*}
\Jg\,( \phi_1 \tensor \phi_2 \tensor \psi_1 \tensor \psi_2)
\,=\,\ovl{\phi_2} \tensor \ovl{\phi_1} \tensor \ovl{\psi_2} \tensor \ovl{\psi_1 }.
\end{equation*}
It is easily seen, that $\Jg\, \Lo\,=\, -\Lo\, \Jg$. 
In this context one has $\Mg'\,=\,\Jg\,\Mg\,\Jg$,
see for example \cite{BratteliRobinson1987}. In the case, where
$\Hael$ fulfills Hypothesis \ref{Hyp1}, we define the vector representative $\Omel
\in \Hel\tensor \Hel$ of the Gibbs state $\omel$ as in \eqref{Vektor-rep}
for $\rho= e^{-\beta \Hael}\,\Zel^{-1}$.
%
%
\begin{Satz}\label{FreeSystem}
Assume Hypothesis \ref{Hyp1} is fulfilled. Then,
$\Omo\,:=\, \Omel\tensor \Omf$ is a \emph{cyclic} and \emph{separating} vector for $\Mg$. 
$e^{-\beta/2 \Lo}$ is a \emph{modular operator} and $\Jg$ is the \emph{modular conjugation}
for $\Omo$, that is
\begin{equation}
X\Omo\in \dom(e^{-\beta/2\, \Lo}),\quad \Jg \,X \,\Omega\,=\, e^{-\beta/2 \,\Lo}\,X^*\,\Omo
\end{equation}
for all $X\in\Mg$ and $\Lo\, \Omo\,=\,0$. Moreover,  
\begin{equation*}
\omo(X)\,:=\, \langle\, \Omo\,|\, X\,\Omo\,\rangle_{\Kg},\qquad X\in \Mg
\end{equation*}
is a $(\tauo,\,\beta)$-KMS-state for $\Mg$, i.e., for all $X,\,Y\in \Mg$
exists $F_\beta(X,\,Y,\cdot)$, analytic in the strip $S_\beta\,=\,\{ z\in \C\,:\, 0\,<\,\Im z\, <\beta\}$,
continuous on the closure and taking the boundary conditions
\begin{eqnarray*}
F_\beta(X,\,Y,\,t)&\,=\,&\omo(X\,\tauo_t(Y))\\
F_\beta(X,\,Y,\,t\,+\,\imath\,\beta)&\,=\,&\omo(\tauo_t(Y)\,X)
\end{eqnarray*}
\end{Satz}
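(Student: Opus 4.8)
The strategy is to build the statement as a tensor product of two known modular structures. The point is that $\Mg = \pi[\Ag]'' = \bigl(\mathcal{B}(\Hel)\tensor\Wop(\f)\bigr)''$ under $\pi = \piel\tensor\pif$ factorizes as a von Neumann tensor product $\Mg = \Mg_{el}\,\overline{\tensor}\,\Mg_f$, where $\Mg_{el} := (\mathcal{B}(\Hel)\tensor\one)'' $ acting on $\Hel\tensor\Hel$ and $\Mg_f := \Wop_\beta(\f)''$ acting on $\Fb\tensor\Fb$. Correspondingly $\Omo = \Omel\tensor\Omf$, $\Lo = \Lel\tensor\one + \one\tensor\Lf$, and the conjugation $\Jg$ factors (up to the flip already built into its definition) as $\Jg_{el}\tensor\Jg_f$. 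So the first step is to establish the theorem separately for the two factors and then invoke the standard fact that the modular data of a tensor product of states is the tensor product of the modular data (Takesaki; see \cite{BratteliRobinson1987}): $\Omo$ cyclic/separating, $\Delta_{\Omo} = \Delta_{\Omel}\tensor\Delta_{\Omf}$, $J_{\Omo} = J_{\Omel}\tensor J_{\Omf}$, and $S_\beta$-analyticity of the two-point function passes to the product because products of bounded analytic functions on the strip are bounded analytic.

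For the electronic factor: here $\Mg_{el}\cong\mathcal{B}(\Hel)$ represented as $A\mapsto A\tensor\one$ on $\Hel\tensor\Hel$, and $\Omel$ is the Hilbert--Schmidt vector $\sum_n p_n^{1/2}\phi_n\tensor\ovl{\phi_n}$ with $p_n = e^{-\beta E_n}/\Zel$, $E_n$ the eigenvalues of $\Hael$. Hypothesis \ref{Hyp1} guarantees $\Zel<\infty$ and, more crucially, that $\sum_n e^{\epsilon E_n}p_n<\infty$, so that $e^{-\beta/2\,\Lel}(A\tensor\one)\Omel$ makes sense: explicitly, for $A\tensor\one$ one computes $(A\tensor\one)\Omel = \sum_{m,n}p_n^{1/2}A_{mn}\,\phi_m\tensor\ovl{\phi_n}$ and $e^{-\beta/2\,\Lel}$ multiplies the $(m,n)$ component by $e^{-\beta(E_m-E_n)/2}$, giving $\sum_{m,n}p_n^{1/2}e^{-\beta(E_m-E_n)/2}A_{mn}\phi_m\tensor\ovl{\phi_n} = \sum_{m,n}p_m^{1/2}A_{mn}\phi_m\tensor\ovl{\phi_n} = \Jg_{el}(A^*\tensor\one)\Omel$, where the sum converges in $\Hel\tensor\Hel$ because $\sum p_m|A_{mn}|^2\le\|A\|^2\sum p_m<\infty$. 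This is the familiar Gibbs/Tomita--Takesaki computation; separating-ness of $\Omel$ follows from $p_n>0$ for all $n$, cyclicity from the Hilbert--Schmidt density. The KMS property of $\omel$ for $\tauel$ and the identity $\Lel\Omel=0$ are then immediate from $\Lel(\phi_n\tensor\ovl{\phi_n})=0$ and the standard equivalence of the modular condition with the KMS condition.

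For the bosonic factor one uses the Araki--Woods representation: $\omf$ is a gauge-invariant quasi-free state with two-point function $\langle g|\varrho_\beta f\rangle$, and it is a classical fact (Araki--Woods) that in the representation $\pif$ the vector $\Omf=\Omega\tensor\Omega$ is cyclic and separating for $\Mg_f=\Wop_\beta(\f)''$, with modular operator $e^{-\beta/2\,\Lf}$ and modular conjugation the flip-conjugation $\Jg_f$; this is where the choice $\f=\{f:\alpha^{-1/2}f\in\hph\}$ enters, ensuring $\varrho_\beta^{1/2}f$ and $(1+\varrho_\beta)^{1/2}f$ lie in $\hph$ so that $\Phi_\beta(f)$ is densely defined and $\Wop_\beta(f)\Omf$ generates a dense set. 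The KMS identity $\omf(\Wop(f)\tauf_t(\Wop(g))) = \langle\Omf|\Wop_\beta(f)e^{\imath t\Lf}\Wop_\beta(g)e^{-\imath t\Lf}\Omf\rangle$ extends analytically to the strip by the explicit Gaussian form of the Weyl expectations, and $\Lf\Omf=0$ is clear.

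The main obstacle, and the only place where Hypothesis \ref{Hyp1} is genuinely used rather than quoted, is the convergence of $e^{-\beta/2\,\Lel}(A\tensor\one)\Omel$ and the verification that it equals $\Jg_{el}(A^*\tensor\one)\Omel$ for \emph{all} $X\in\Mg_{el}$ (not just for $A$ with finitely supported matrix in the eigenbasis of $\Hael$): one must check that the a priori unbounded, not-semibounded operator $\Lel$ does not spoil the Hilbert--Schmidt estimate, which is exactly guaranteed by $\Tr\{e^{-\beta\Hael}\}<\infty$ ensuring $\Omel\in\Hel\tensor\Hel$ and by the positivity $p_n>0$ ensuring the reabsorption $p_n^{1/2}e^{-\beta(E_m-E_n)/2}=p_m^{1/2}$ is legitimate. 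Everything else — the tensor-product formula for modular objects, the equivalence of the modular condition with the $(\beta,\tauo)$-KMS condition, the standard Araki--Woods facts — I would simply cite from \cite{BratteliRobinson1987}.
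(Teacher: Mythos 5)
The paper does not actually supply a proof of Theorem~\ref{FreeSystem}; it simply cites \cite{JaksicPillet1996b}. So there is no in-paper argument to line yours up against. That said, your tensor-product strategy is precisely the canonical route those references take: factor $\Mg$ as $\big(\mathcal{B}(\Hel)\tensor\one\big)''\,\overline{\tensor}\,\Wop_\beta(\f)''$, verify the modular data separately for the Gibbs-state representation and for the Araki--Woods representation, and invoke the tensor-product theorem for modular objects together with the modular/KMS equivalence. Your bookkeeping on the electronic factor is correct: the identity $p_n^{1/2}e^{-\beta(E_m-E_n)/2}=p_m^{1/2}$ together with the Hilbert--Schmidt bound $\sum_m p_m|A_{mn}|^2\le\|A\|^2$ is exactly what makes $e^{-\beta/2\,\Lel}(A\tensor\one)\Omel$ converge and equal $\Jg_{el}(A^*\tensor\one)\Omel$, and $p_n>0$ gives both cyclicity and the separating property. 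The Araki--Woods side is correctly quoted, and you rightly flag where the choice of $\f$ (so that $\varrho_\beta^{1/2}f,\ (1+\varrho_\beta)^{1/2}f\in\hph$) is used.

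Two small points worth tightening if you wrote this out in full. First, the factorization of $\Jg$ as $\Jg_{el}\tensor\Jg_f$ does hold for the paper's definition $\Jg(\phi_1\tensor\phi_2\tensor\psi_1\tensor\psi_2)=\ovl{\phi_2}\tensor\ovl{\phi_1}\tensor\ovl{\psi_2}\tensor\ovl{\psi_1}$, but one should check that $J_{\Omel}$ and $J_{\Omf}$ produced by Tomita--Takesaki theory coincide with these ad hoc flip-conjugations, not merely that some conjugations exist; this is standard but is the place where a careless argument would go wrong. Second, to pass analyticity of the two-point function from elementary tensors $X_{el}\tensor X_f$ to arbitrary $X,Y\in\Mg$ you need a $\sigma$-weak density argument plus the Phragm\'en--Lindel\"of-type uniform bound on the strip, not just the observation that products of analytic functions are analytic; again standard, but worth stating. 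Neither issue is a gap in the idea, only in the level of detail. Overall your plan is correct and is the same argument the cited literature gives.
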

%
%
For a proof see \cite{JaksicPillet1996b}.
%
\bigskip
\section{The Liouvillean $\Lg$}
%
In this and the next section we will introduce the Standard Liouvillean 
$\Lg$ for a dynamics $\taug$ on $\Mg$, describing the interaction between 
particles and bosons at inverse temperature $\beta$. The label $Q$ denotes the 
interaction part of the Liouvillean, it can be deduced from the interaction part 
$W$ of the corresponding Hamiltonian by means of formal arguments, which we will
not give here. In a first step we prove self-adjointness of $\Lg$ and 
of other Liouvilleans. A main difficulty stems from the fact, that $\Lg$
and the other Liouvilleans, mentioned before, are not bounded from below. The
proof of self-adjointness is given in Theorem \ref{Thm1}, it uses Nelson's commutator
theorem and auxiliary operators which are constructed in Lemma \ref{Lem1.1x}.
The proof, that $\taug_t(X)\in \Mg$ for $X\in \Mg$, is given in Lemma \ref{LemInv}.
Assuming $\Omo\in \dom(e^{-\beta/2(\Lo+\Qg)})$ we can ensure existence
of a $(\taug,\beta)$-KMS state $\omg(X)=\langle \Omg \,|X\,\Omg \rangle\cdot \|\Omg\|^{-2}$ 
on $\Mg$, where $\Omg=e^{-\beta/2(\Lo+\Qg)}\Omo$. Moreover, we can show
that $e^{-\beta\Lg}$ is the modular operator for $\Omg$ and conjugation $\Jg$.
This is done in Theorem \ref{Thm3}.\\
\indent Our proof of \ref{Thm3} is inspired by the proof given in \cite{DerezinskiJaksicPillet2003}.
The main difference is that we do not assume, that $\Qg$ is self-adjoint and
that $\Omo\in \dom(e^{-\beta \Qg})$. For this reason we need to introduce
an additional approximation $\QN$ of $\Qg$, which is self-adjoint and
affiliated with $\Mg$, see Lemma \ref{Lem5d}.

The interaction on the level
of Liouvilleans between particles and bosons is given by $\Qg$ , where
\begin{equation*}
\Qg \,:=\, \Phi_\beta(\vec G )\,\Phi_\beta(\vec H )\,+\,\hc\,+\,\Phi_\beta(F),
\quad \Phi_\beta(\vec G )\,\Phi_\beta(\vec H )\,:=\, \sum_{j=1}^r \Phi_\beta( G^j )\,\Phi_\beta(H^j ).
\end{equation*}
For each family $K=\{K_k\}_{k}$ of closed operators on $\Hel$ with
$\|K\|_{w,1/2}<\infty$ we set
\begin{equation*}
\Phi_\beta(K)\,:= \,
\big(a^*( (1\,+\,\varrho_\beta)^{1/2}\, K)\tensor \one 
\,+\, \one \tensor a^*( \varrho_\beta^{1/2}\, K^*)\big) \,+\, \hc.
\end{equation*}
Here, $K_k$ acts as $K_k\otimes \one$ on $\Hel\otimes \Hel$.
A Liouvillean, that describes the dynamics of the joint system of particles
and bosons is the so-called \emph{Standard Liouvillean}
\begin{equation}\label{Def:Lg}
\Lg\,\phi \,:=\, (\Lo \,+\, \Qg \,-\, \Qg^ \Jg)\,\phi,\qquad \phi\in \Core,
\end{equation}
which is distinguished by $\Jg\, \Lg=-\Lg\, \Jg$. For an operator $A$, acting on $\Kg$,
the symbol $A^\Jg$ is an abbreviation for $\Jg\,A\,\Jg$.
An important observation is, that $[ \Qg\,,\, \Qg^\Jg]\,=\,0$ on $\Core$.
Next, we define four auxiliary operators on $\Core$
\begin{align}\label{eq3.1x}
\Laux^{(1)}&\,:=\, (\Haelp\tensor \one \,+\, \one \tensor \ovl{H}_{el,+}\big)\tensor \one
+\one\tensor\Lfaux+\one\\ \nonumber
\Laux^{(2)}&\,:=\, \Haelp^Q+(\Haelp^Q)^{ \Jg} +c_1\one\tensor\Lfaux+c_2\\ \nonumber
\Laux^{(3)}&\,:=\, \Haelp^Q+ (\Haelp)^{\Jg}+c_1\one\tensor\Lfaux+c_2\\ \nonumber
\Laux^{(4)}&\,:=\, \Haelp\tensor \one + (\Haelp^Q)^{ \Jg}+c_1\one\tensor\Lfaux+c_2,
\end{align}
where $\Lfaux$ is an operator on $\Fb\tensor \Fb$ and $\Haelp^Q$
acts on $\Kg$. Furthermore,
\begin{align*}
\Lfaux \,&=\, d\Gamma(1+\alpha)\tensor \one + \one \tensor d\Gamma(1+\alpha)+\one,\\
\Lelaux\,&=\,\Haelp\tensor \one \,+\, \one \tensor \ovl{H}_{el,+}
\quad \Haelp^Q\,:=\, \Haelp\tensor \one+\Qg.
\end{align*}
Obviously, $\Laux^{(i)},\ i\,=\,1,\,2,\,3,\,4$ are symmetric operators on 
$\Core$.
%
%
\begin{Lemma}\label{Lem1.1x}
For sufficiently large values of $c_1,\,c_2\,\geq \,0$ 
we have that $\Laux^{(i)},\ i\,=\,1,\,2,\,3,\,4$ are
essentially self-adjoint and positive.
Moreover, there is a constant $c_3\,>\,0$ such that
\begin{equation}\label{eq3.2}
c_3^{-1}\,\|\Laux^{(1)}\,\phi\|\,\leq \,\|\Laux^{(i)}\,\phi\|
\leq c_3\,\|\Laux^{(1)}\,\phi\|,\qquad \phi \in\dom(\Laux^{(1)}).
\end{equation}
\end{Lemma}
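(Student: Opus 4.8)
The plan is to establish the three assertions of Lemma~\ref{Lem1.1x} — positivity, essential self-adjointness, and the two-sided bound \eqref{eq3.2} — in a single stroke by comparing each $\Laux^{(i)}$ with the manifestly well-behaved operator $\Laux^{(1)}$. First I would observe that $\Laux^{(1)}$ is essentially self-adjoint and positive on $\Core$: it is a sum of commuting positive second-quantization-type operators ($\Haelp\tensor\one$, $\one\tensor\ovl{H}_{el,+}$, and $\Lfaux$, each of which restricts to an essentially self-adjoint operator on the factor of $\Core$ on which it acts, because $\Core$ is built from spectral subspaces $U_1$ of $\Hael$ and compactly supported Fock vectors), plus the identity; the extra $\one$ makes it bounded below by $1$, so its closure has a bounded inverse. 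This operator will serve as the reference "order-one" operator throughout.

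The heart of the argument is a relative-bound estimate: I would show that for each $i$ the difference $\Laux^{(i)}-\Laux^{(1)}$ (after absorbing the scalars $c_1,c_2$) is $\Laux^{(1)}$-bounded with relative bound $<1$, uniformly once $c_1,c_2$ are taken large enough. Concretely, each $\Laux^{(i)}$ differs from a rescaled $\Laux^{(1)}$ by one or two copies of $\Qg$ (or $\Qg^\Jg$). So everything reduces to estimating $\|\Qg\,\phi\|$ in terms of $\|\Laux^{(1)}\,\phi\|$. For this I would use the generalized creation/annihilation bounds \eqref{RelBoundGen} applied to the families $(1+\varrho_\beta)^{1/2}G^j$, $\varrho_\beta^{1/2}(G^j)^*$, etc., which appear in $\Phi_\beta$: these give
\[
\|\Phi_\beta(K)\,\phi\|\ \lesssim\ \|K\|_{w,1/2}\,\big(\|d\Gamma(1+\alpha)^{1/2}\tensor\one\;\phi\|+\|\one\tensor d\Gamma(1+\alpha)^{1/2}\;\phi\|+\|\Haelp^{1/2}\tensor\one\;\phi\|+\|\phi\|\big),
\]
with analogous bounds for the products $\Phi_\beta(\vec G)\Phi_\beta(\vec H)$ where one pays two factors of the square-root operators. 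Since $\Lfaux \geq d\Gamma(1+\alpha)\tensor\one$ and $\geq \one\tensor d\Gamma(1+\alpha)$ and $\Laux^{(1)}\geq c\,(\Haelp\tensor\one+\one\tensor\Lfaux)$, the right-hand side is bounded by $\varepsilon\|\Laux^{(1)}\phi\|+C_\varepsilon\|\phi\|$ for any $\varepsilon>0$ after choosing $c_1$ large (the interpolation $\|A^{1/2}\phi\|^2\leq\varepsilon\|A\phi\|\|\phi\|+\ldots$ turns the square-root bounds into genuinely small relative bounds). Because $\Qg$ and $\Qg^\Jg$ commute on $\Core$ and $\Jg$ is a conjugation, the terms $(\Haelp^Q)^\Jg$ etc.\ are controlled identically. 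Then the Kato--Rellich theorem (in its form for symmetric, semibounded operators) yields essential self-adjointness of each $\Laux^{(i)}$ on $\Core$ and, since $\Laux^{(1)}\geq 1$ and the relative bound is $<1$, positivity of $\Laux^{(i)}$ for $c_2$ large enough. Finally the two-sided bound \eqref{eq3.2}: the estimate $\|(\Laux^{(i)}-\Laux^{(1)})\phi\|\leq\varepsilon\|\Laux^{(1)}\phi\|+C\|\phi\|\leq(\varepsilon+C)\|\Laux^{(1)}\phi\|$ (using $\|\phi\|\leq\|\Laux^{(1)}\phi\|$) gives $\|\Laux^{(i)}\phi\|\leq c_3\|\Laux^{(1)}\phi\|$, and symmetrically, writing $\Laux^{(1)}=\Laux^{(i)}-(\Laux^{(i)}-\Laux^{(1)})$ and using that the perturbation is now also $\Laux^{(i)}$-bounded with small relative bound, one gets the reverse inequality $c_3^{-1}\|\Laux^{(1)}\phi\|\leq\|\Laux^{(i)}\phi\|$; passing from $\Core$ to $\dom(\Laux^{(1)})$ is done by the usual approximation, since $\Core$ is a core for $\Laux^{(1)}$ and all four operators agree with their closures there.

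The main obstacle I anticipate is not the abstract perturbation machinery but the bookkeeping in the relative bound: one must check that the cross term $\Phi_\beta(\vec G)\Phi_\beta(\vec H)$, which is quadratic in field operators, is still infinitesimally $\Laux^{(1)}$-bounded — here the two factors of $d\Gamma(1+\alpha)^{1/2}$ and/or $\Haelp^{1/2}$ combine to give at most one full power of $\Laux^{(1)}$ (plus lower order), which is exactly borderline, so one genuinely needs the finiteness of $\|\vec G\|_w,\|\vec H\|_w,\|F\|_{w,1/2}$ together with the interpolation inequality to get a relative bound that is small rather than merely finite; absorbing the "merely finite" part requires pushing $c_1$ large, and one must verify that a single choice of $c_1,c_2$ works simultaneously for all four $i$. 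A secondary point is confirming that $\one\tensor d\Gamma(1+\alpha)$ controls the "second leg" terms $\varrho_\beta^{1/2}K^*$ appearing in the tensored-Fock factor of $\Phi_\beta$, which is immediate from the definition of $\Phi_\beta(K)$ but must be written out.
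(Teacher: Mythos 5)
Your overall strategy---relative bounds for the field operators, Kato--Rellich, then a two-sided comparison---is the same as the paper's. But the decomposition you chose has a genuine flaw: you take $\Laux^{(1)}$ itself as the reference operator and try to show that the perturbation $\Laux^{(i)}-\Laux^{(1)}$ has small relative bound with respect to $\Laux^{(1)}$, "after choosing $c_1$ large." This cannot work, because $c_1$ does not appear in $\Laux^{(1)}$. Concretely, for $i\geq 2$ the difference $\Laux^{(i)}-\Laux^{(1)}$ contains the term $(c_1-1)\,\one\otimes\Lfaux$, and $\one\otimes\Lfaux$ is \emph{not} small relative to $\Laux^{(1)}=\Lelaux+\one\otimes\Lfaux+\one$ (on states with many low-energy bosons and trivial particle content, $\|\one\otimes\Lfaux\,\phi\|$ is comparable to $\|\Laux^{(1)}\phi\|$). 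So the relative bound of your perturbation actually grows like $c_1$, rather than shrinking, and the Kato--Rellich hypothesis fails; likewise your argument for the reverse inequality in \eqref{eq3.2}, which needs that same relative bound to be $<1$, collapses. Separately, the claim that the quadratic part $\Phi_\beta(\vec G)\Phi_\beta(\vec H)$ can be given an \emph{infinitesimal} relative bound by interpolation is also wrong: the bound is $\|\Phi_\beta(\vec G)\Phi_\beta(\vec H)\phi\|\leq \mathrm{const}\,\|\Lfaux\phi\|$, a full power of $\Lfaux$ with no square root, so the interpolation trick you invoke does not apply there---exactly the "borderline" issue you flag but do not actually resolve.

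The repair is to change the reference operator: decompose $\Laux^{(i)}=\big(\Lelaux+c_1\,\one\otimes\Lfaux+c_2\big)+V_i$, where $V_i$ consists only of $\Qg$ and/or $\Qg^\Jg$, i.e.\ the operator $R:=\Lelaux+c_1\,\one\otimes\Lfaux+c_2$ (not $\Laux^{(1)}$) is the unperturbed part. Since $\Lelaux$ and $\Lfaux$ are commuting positive operators, $\|\Lfaux\phi\|\leq c_1^{-1}\|(\Lelaux+c_1\Lfaux)\phi\|$ and $\|\Lelaux^{1/2}\Lfaux^{1/2}\phi\|\leq (2c_1)^{-1/2}\|(\Lelaux+c_1\Lfaux)\phi\|$, so the field-operator estimates give $\|\Qg\phi\|+\|\Qg^\Jg\phi\|\leq\tfrac12\|(\Lelaux+c_1\Lfaux)\phi\|$ for $c_1$ large. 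Here the gain in $c_1$ comes from the reference operator acquiring the large coefficient, which is precisely what is missing in your version. Kato--Rellich then yields self-adjointness of $\Laux^{(i)}$ on $\dom(R)$ and semiboundedness; $c_2$ large gives positivity. Finally \eqref{eq3.2} follows in two steps: the relative bound gives $\|\Laux^{(i)}\phi\|\asymp\|R\phi\|$, and, because $\Lelaux,\Lfaux,\one$ are commuting positive operators, $\|R\phi\|\asymp\|\Laux^{(1)}\phi\|$ with constants depending only on $c_1,c_2$ (here one needs $c_2\geq 1$, since $\Laux^{(1)}\geq\one$ while $\Lelaux+c_1\Lfaux$ alone is not bounded below away from zero). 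This is exactly the paper's proof; your outline has the right ingredients but the wrong pivot.
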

%
%
\begin{proof}
Let $a,\,a'\in \{l,\,r\}$ and $K_i,\ i\,=\,1,\,2$ be families of bounded operators 
with $\|K_i\|_{w}<\infty$.
Let $\Phi_l(K_i)\,=\,\Phi(K_i)\tensor \one $ and $\Phi_r(K_i)\,:=\,\one\tensor \Phi(K_i).$
We have for $\phi \in \mathcal{D}$
\begin{align} 
\|\,\Phi_a(\eta K_1)\,\Phi_{a'}(\eta'K_2)\,\phi\,\|
\,&\leq\, \const \|\,\Lfaux\,\phi\,\|\\ \nonumber
\|\,\Phi_a(\eta F)\,\phi\,\| \,&\leq\, \const \| (\Lelaux)^{1/2}(\Lfaux)^{1/2}\,\phi\,\|,
\end{align}
where $\eta,\,\eta'\in\{ (1+\varrho_\beta)^{1/2},\, \varrho^{1/2}_\beta\}$.
Note, that the estimates hold true, if $\Phi_a(\eta K_i)$ or $\Phi_a(\eta F)$ are replaced by 
${\Phi_a(\eta K_i)}^{\Jg}$ or ${\Phi_a(\eta F)}^{\Jg}$.
Thus, we obtain for sufficiently large $c_1\,\gg\, 1 $, depending on the form-factors, that
\begin{equation}\label{eq3.6}
\|\Qg\,\phi\|\,+\,\|{\Qg}^{\Jg}\,\phi\|\,\leq \,1/2\, 
\big\| \big(\Lelaux+c_1\,\Lfaux\big)\,\phi\big\|.
\end{equation}
By the Kato-Rellich-Theorem ( \cite{ReedSimonII1980}, Thm. X.12) we deduce that
$\Laux^{(i)}$ is self-adjoint  on $\dom(\Lelaux+c_1\,\Lfaux)$,
bounded from below and that $\Lelaux+c_1\,\Lfaux$ is $\Laux^{(i)}$-bounded for 
every $c_2\,\geq\, 0$ and $i\,=\,2,\,3,\,4$. In particular, $\Core$ is
a core of $\Laux^{(i)}$. The proof follows now from $\|\Laux^{(i)}\,\phi\|\,\le\, \|(\Lelaux+c_1\,\Lfaux)\,\phi\|\,\le 
\,c_1\, \|\Laux^{(1)}\,\phi\|$ for $\phi\in \Core$.
\end{proof}
%
%
\begin{Satz}\label{Thm1}
The operators
\begin{equation}\label{eq3.7}
\Lo,\quad \Lg\,=\, \Lo+\Qg-{\Qg}^{\Jg},
\quad\Lo+\Qg,\quad  \Lo-{\Qg}^{\Jg},
\end{equation}
defined on $\mathcal{D}$, are essentially self-adjoint. Every
core of $\Laux^{(1)}$ is a core of the operators in line \eqref{eq3.7}. 
\end{Satz}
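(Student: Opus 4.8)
The strategy is to apply Nelson's commutator theorem with comparison operator $\Laux^{(1)}$ (equivalently, by Lemma \ref{Lem1.1x}, any of the $\Laux^{(i)}$). Each operator $L$ in \eqref{eq3.7} is symmetric on $\Core$, so it suffices to produce a constant $c$ with
\begin{equation*}
\|L\,\phi\|\,\leq\, c\,\|\Laux^{(1)}\,\phi\|,\qquad
|\langle L\,\phi\,|\,\Laux^{(1)}\,\phi\rangle - \langle \Laux^{(1)}\,\phi\,|\,L\,\phi\rangle|\,\leq\, c\,\|(\Laux^{(1)})^{1/2}\,\phi\|^2
\end{equation*}
for all $\phi\in\Core$, together with essential self-adjointness and positivity of $\Laux^{(1)}$, which Lemma \ref{Lem1.1x} already supplies. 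Nelson's theorem then gives essential self-adjointness of $L$ on $\Core$, and moreover identifies every core of $\Laux^{(1)}$ as a core of $\overline{L}$.

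The relative bound is the easy half. For $\Lo$ one has $\|\Lo\phi\|\leq\|\Lelaux\phi\|+\|\one\tensor\Lf\,\phi\|\leq\const\|\Laux^{(1)}\phi\|$ directly from the definitions; for the interacting operators one adds the bound $\|\Qg\phi\|+\|\Qg^\Jg\phi\|\leq \const\|(\Lelaux+c_1\Lfaux)\phi\|\leq\const\|\Laux^{(1)}\phi\|$ established in \eqref{eq3.6} inside the proof of Lemma \ref{Lem1.1x}. So all four operators in \eqref{eq3.7} are $\Laux^{(1)}$-bounded.

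The commutator estimate is the main obstacle and is where the choice of $\Laux^{(1)}$ matters. One writes, on $\Core$,
\begin{equation*}
[\,L\,,\,\Laux^{(1)}\,]\,=\,[\,\Lo\,,\,\Laux^{(1)}\,]\,+\,[\,\Qg-\Qg^\Jg\,,\,\Laux^{(1)}\,]
\end{equation*}
(with the last bracket absent for $L=\Lo$). The term $[\Lo,\Laux^{(1)}]$ is controlled because $\Lel$ commutes with $\Lelaux$ and $d\Gamma(\alpha)$ commutes with $\Lfaux = d\Gamma(1+\alpha)\tensor\one+\one\tensor d\Gamma(1+\alpha)+\one$, so this commutator in fact vanishes. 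The work is in $[\Qg,\Laux^{(1)}]$: commuting a field monomial $\Phi_\beta(\vec G)\Phi_\beta(\vec H)$ or $\Phi_\beta(F)$ against $d\Gamma(1+\alpha)\tensor\one$ (and its right copy) produces, via $[d\Gamma(\alpha),a^\#(h)]=\mp a^\#(\alpha h)$, new field operators with form factors $\alpha G^j_k$, $\alpha H^j_k$, $\alpha F_k$ etc.; commuting against $\Haelp\tensor\one$ produces form factors $[\Haelp,G^j_k]$ and similar. One estimates each resulting term by the $N_\tau$-type bounds of the proof of Lemma \ref{Lem1.1x} (the displays $\|\Phi_a(\eta K_1)\Phi_{a'}(\eta' K_2)\phi\|\leq\const\|\Lfaux\phi\|$ and $\|\Phi_a(\eta F)\phi\|\leq\const\|(\Lelaux)^{1/2}(\Lfaux)^{1/2}\phi\|$), which absorb one extra power of $\alpha$ into the finiteness of $\|\vec G\|_w,\|\vec H\|_w,\|F\|_{w,1/2}$; the commutators with $\Haelp$ are handled by the domain assumption $\dom(F_k^\sigma)\supset\dom(\Haelp^{1/2})$ and the measurability/boundedness hypotheses on $G^j,H^j,F\Haelp^{-1/2}$. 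The upshot is $|\langle L\phi|\Laux^{(1)}\phi\rangle - \langle\Laux^{(1)}\phi|L\phi\rangle|\leq\const\|(\Laux^{(1)})^{1/2}\phi\|^2$, with the square root appearing precisely because each commutator term loses one power of $\Lfaux$ relative to $\Laux^{(1)}$. Applying Nelson's commutator theorem (\cite{ReedSimonII1980}, Thm. X.37) to each of $\Lo$, $\Lo+\Qg$, $\Lo-\Qg^\Jg$, and $\Lg=\Lo+\Qg-\Qg^\Jg$ — noting that $\Qg$ and $\Qg^\Jg$ are treated symmetrically since the bounds are stated to persist under $(\cdot)^\Jg$ — completes the proof, and the core statement is the standard addendum to Nelson's theorem.
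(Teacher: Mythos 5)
Your plan recognizes that Nelson's commutator theorem with a comparison operator from the family $\Laux^{(i)}$ is the right tool, and your relative-bound half is fine. But your commutator estimate is based on the wrong comparison operator, and this is not a cosmetic choice: it creates a genuine gap.

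You compute $[\Lg,\Laux^{(1)}]$ and attempt to control the terms $[\Qg,\Haelp\tensor\one]$ (and the analogous term with the right $\ovl{H}_{el,+}$ factor) by saying they ``produce form factors $[\Haelp,G^j_k]$ and similar'' and are ``handled by the domain assumption $\dom(F_k^\sigma)\supset\dom(\Haelp^{1/2})$ and the measurability/boundedness hypotheses.'' This does not work. The standing assumptions bound $G^j_k$, $H^j_k$ in $\mathcal{B}(\Hel)$ and $F_k\Haelp^{-1/2}$, $(F_k)^*\Haelp^{-1/2}$ in $\mathcal{B}(\Hel)$; they say nothing about the \emph{commutators} $[\Haelp,G^j_k]$, $[\Haelp,H^j_k]$, $[\Haelp,F_k]$, which is exactly what $[\Qg,\Haelp\tensor\one]$ produces. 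In the QED example, $F_k$ involves $\nabla_{x}\cdot\epsilon(k)\,e^{-\imath\gamma^{1/2}kx}$, and commuting with $\Haelp=-\Delta+V$ creates second-order derivative terms that are not controlled by $\|(\Laux^{(1)})^{1/2}\phi\|^2$. There is no hypothesis that rescues this, so the inequality you need, $|\langle \Lg\phi|\Laux^{(1)}\phi\rangle-\langle\Laux^{(1)}\phi|\Lg\phi\rangle|\leq\const\|(\Laux^{(1)})^{1/2}\phi\|^2$, is not established.

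The paper avoids this by not pairing $\Lg$ with $\Laux^{(1)}$ in the commutator estimate. Instead it uses $\Laux^{(2)}$, which contains $\Haelp^Q=\Haelp\tensor\one+\Qg$ and $(\Haelp^Q)^\Jg$, while the $\Laux^{(1)}$-norm is only used to state the relative bound and then transferred via the equivalence \eqref{eq3.2} of Lemma \ref{Lem1.1x}. Writing $\Lg=\Haelp^Q-(\Haelp^Q)^\Jg+\one\tensor\Lf$ (up to constants), the crucial algebraic identity $[\Haelp^Q,(\Haelp^Q)^\Jg]=0$ makes the entire ``particle plus interaction'' part of the commutator vanish identically, so the only surviving terms are $[\Qg,\Lfaux]$, $[\Qg^\Jg,\Lfaux]$, $[\Lf,\Qg]$, $[\Lf,\Qg^\Jg]$ — purely bosonic commutators that \eqref{eq3.10} reduces to field operators with an extra factor of $(1+\alpha)$, which are then controlled by $\|\vec G\|_w$, $\|\vec H\|_w$, $\|F\|_{w,1/2}$ via \eqref{RelBoundGen} and \eqref{eq3.11}. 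Likewise, $\Lo+\Qg$ and $\Lo-\Qg^\Jg$ are paired with $\Laux^{(3)}$ and $\Laux^{(4)}$ so that their interaction term is also cancelled by the same mechanism. This is precisely the purpose of introducing the family $\Laux^{(1)},\ldots,\Laux^{(4)}$ rather than a single auxiliary operator, and your proposal loses the argument by collapsing it to $\Laux^{(1)}$ alone.
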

%
%
\begin{proof}
We restrict ourselves to the case of $\Lg$.
We check the assumptions of Nelson's commutator theorem (\cite{ReedSimonII1980}, Thm. X.37). 
By Lemma \ref{Lem1.1} it suffices to show 
$\|\Lg\phi\|\,\leq \,\const \|\Laux^{(1)}\phi\|$ and 
$|\langle\, \Lg\phi\,|\,\Laux^{(2)}\phi \rangle 
-\langle\, \Laux^{(2)}\phi|\,\Lg\phi \rangle |
\,\leq \,\const  \|(\Laux^{(1)})^{1/2}\phi\|^2$
for $\phi\in \mathcal{D}$. The first inequality follows from Equation \eqref{eq3.6}. 
To verify the second inequality we observe
\begin{align}\label{eq3.9} 
\big|&\big\langle\, \Lg\,\phi\,\big| \,\Laux^{(2)}\,\phi \,\big\rangle 
-\big \langle\, \Laux^{(2)}\,\phi\,\big| \,\Lg\,\phi \,\big\rangle \big|\\ \nonumber
&\leq
c_1 \Big|\big\langle\,  \Qg \phi\,\big|\,\Lfaux \phi \big\rangle
-\big \langle\, \Lfaux \phi\, \big|\, \Qg \phi\big\rangle\Big|\\ \nonumber
&\phantom{\leq}+ c_1 \Big|\big\langle\,  {\Qg}^{\Jg} \phi\,\big|\, \Lfaux\phi\,\big\rangle
- \big\langle\, \Lfaux \phi\,\big|\,{\Qg}^{\Jg} \phi\,\big\rangle\Big|\\  \nonumber
&\phantom{\leq} + \Big|\big\langle\, \Lf \phi \big|\,\Qg \phi\,\big\rangle
- \langle\,  \Qg\phi\,\big|\, \Lf\phi\,\big\rangle\Big|
+  \Big|\big\langle\, \Lf \phi\,\big|\,{\Qg}^{\Jg} \phi\,\big\rangle
-\big\langle\,  {\Qg}^{\Jg} \phi\,\big|\, \Lf \phi\,\big\rangle \Big|,
\end{align}
where we used, that  
$
\big[\Haelp^Q ,\, (\Haelp^Q)^{\Jg}\big]\,=\,0.
$
Let $K_i\in \{ G_j,\,H_j\}$ and $\eta,\,\eta' \in \{\varrho^{1/2},\, (1\,+\,\varrho)^{1/2}\}$.
We remark, that 
\begin{align} \label{eq3.10}
[\Phi_a(\eta\, K_1)\,\Phi_{a'}(\eta'\, K_2)\,,\,\Lfaux]
&\,=\,\imath\, \Phi_a(\imath \,(1\,+\,\alpha)\, \eta\, K_1)\,\Phi_{a'}v(\eta'\,K_2)\\ \nonumber
&\,+\,\imath \,\Phi_a(\eta\, K_1)\,\Phi_{a'}(\imath \,(1\,+\,\alpha)\, \eta'\,K_2)\\ \nonumber
[\Phi_a(\eta \,F)\,,\ \Lfaux]&\,=\, \imath \,\Phi_a(\imath \,(1\,+\,\alpha) \,\eta\, F).
\end{align}
Hence, for $\phi\in \dom(\Laux^{(2)})$, we have by means of \eqref{RelBoundGen} that
\begin{align} \label{eq3.11}
\big|\big \langle\, \phi\,|\, [\Phi_a(\eta K_1)\,\Phi_{a'}(\eta' K_2),\ \Lfaux]\phi \,\big \rangle \big|
\,&\leq\, \const \|\Lfaux^{1/2}\phi \|^2\\ \nonumber
\big|\big \langle \phi\,|\, [\Phi_a(\eta F),\ \Lfaux]\phi\, \big \rangle \big|
\,&\leq\, \const \|\Lfaux^{1/2}\phi \|\,\| (\Lelaux)^{1/2}\phi\|.
\end{align}
Thus, \eqref{eq3.11} is bounded by a constant times $ \|(\Laux^{(1)})^{1/2}\phi\|^2$.
The essential self-adjointness of $\Lg$ follows now from estimates analog
to \eqref{eq3.10} and \eqref{eq3.11}, where $\Lfaux$ is replaced by $\Lf$ in
\eqref{eq3.10} and in the left side of $\eqref{eq3.11}$.
For $\Lo+\Qg$ and $\Lo-{\Qg}^{\Jg}$ one has to consider
the commutator with $\Laux^{(3)}$ and $\Laux^{(4)}$, respectively.
\end{proof}
%
%
\begin{Bem}\label{RemarkSelfadjoint} In the same way one can show, that
$\Hg$ is essentially self-adjoint on any core of $H_1:=\Hael\,+\, d\Gamma(1\,+\,\alpha)$,
even if $\Hg$ is not bounded from below. 
\end{Bem}
%
%
%
\section{Regularized Interaction and Standard Form of $\Mg$}\label{RegVer}
In this subsection a regularized interaction $\QN$ is introduced:
\begin{equation}\label{DefQN}
\QN\,:=\,  \Big\{\Phi_\beta(\vec G_{N})\,\Phi_\beta(\vec H_{N})\,+\, \hc\Big\}
\,+\,\Phi_\beta(F_{N}).
\end{equation}
The regularized form factors $\vec G_{N},\,\vec H_{N},\,F_{N}$ are obtained by multiplying the finite rank projection
$P_N:=\mathbf{1}[\Hael\leq N]$ from the left and the right. Moreover, an additional
ultraviolet cut-off $\one[\alpha\leq N]$, considered as a spectral projection, is added. 
The regularized form factors are
\begin{gather*}\nonumber
\vec G_{N}(k)\,:= \,\one[ \alpha \,\leq \,N]\, P_N \,\vec G(k)\,P_N,\qquad 
\vec H_{N}(k)\,:=\, \one[ \alpha \,\leq\, N]\,P_N \,\vec H(k)\,P_N,\\ F_{N}(k)\,:= \,\one[  \alpha\, \leq\, N]\,P_N\,F(k)\,P_N.
\end{gather*}
%
%
\begin{Lemma}\label{Lem5d}
i) $\QN$ is essentially self-adjoint on $\mathcal{D} \subset\dom(\QN)$.
$\QN$ is affiliated with $\Mg$, i.e,. $\QN$ is closed and 
\begin{equation*}
X'\,\QN \subset \QN \,X',\quad \forall\, X'\in \Mg'.
\end{equation*}
ii) $\Lo+\QN$, $\Lo-\Jg\QN\Jg $ and
$\Lo+\QN-\Jg\QN\Jg $ converges in the strong resolvent sense to $\Lo+\Qg$,
$\Lo-\Jg\Qg\Jg $ and $\Lo+\Qg-\Jg\Qg\Jg $, respectively.\\
\end{Lemma}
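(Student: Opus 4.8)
The plan is to establish part i) first and then deduce part ii) from it by a Trotter-type approximation argument.

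For part i), essential self-adjointness of $\QN$ on $\Core$: the key observation is that the regularized form factors $\vec G_N$, $\vec H_N$, $F_N$ are \emph{bounded} operators on $\Hel$ (since $P_N = \one[\Hael\le N]$ is a finite-rank projection and the original form factors satisfy the weighted bounds from Section \ref{SecGeneralizedFieldop}), and carry a compactly supported ultraviolet cut-off $\one[\alpha\le N]$. Consequently $\vec G_N, \vec H_N \in L^2$ in the $\|\cdot\|_w$-sense with norms bounded uniformly, and $F_N$ likewise; moreover $F_N$ now needs no $\Haelp^{-1/2}$ factor. Hence each $\Phi_\beta(K_N)$ with $K_N\in\{G_N^j, H_N^j, F_N\}$ is relatively bounded with respect to $\one\tensor\Lfaux$ (or more precisely to $N_{ph}^{1/2}$, the square root of the number operator on $\Fb\tensor\Fb$), and the products $\Phi_\beta(\vec G_N)\Phi_\beta(\vec H_N)$ are relatively bounded with respect to the number operator itself. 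Then I would invoke a standard Nelson-type commutator estimate (exactly as in the proof of Theorem \ref{Thm1}, but now \emph{without} the Liouvillean free part) with comparison operator $A = \one\tensor\Lfaux$: one checks $\|\QN\phi\|\le\const\|A\phi\|$ and the commutator bound $|\langle\QN\phi|A\phi\rangle - \langle A\phi|\QN\phi\rangle|\le\const\|A^{1/2}\phi\|^2$ using the commutation relations \eqref{eq3.10}. This gives essential self-adjointness on $\Core$ and on any core of $A$. Affiliation with $\Mg$: since $P_N$ and $\one[\alpha\le N]$ commute with everything relevant, $\QN$ is built from creation/annihilation operators $a^\#((1+\varrho_\beta)^{1/2}K_N)\tensor\one$ and $\one\tensor a^\#(\varrho_\beta^{1/2}K_N^*)$ that lie (after closure) in $\Mg$; because $\QN$ is self-adjoint, writing its bounded resolvent $(\QN - i)^{-1}$ one checks it commutes with $\Mg' = \Jg\Mg\Jg$, which is exactly the statement $X'\QN\subset\QN X'$ for $X'\in\Mg'$.

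For part ii), strong resolvent convergence: I would show that $\QN\to\Qg$ and $\Jg\QN\Jg\to\Jg\Qg\Jg$ on the common core $\Core$ in the graph-norm sense relative to $\Laux^{(1)}$, i.e. $\|(\QN-\Qg)\phi\|\to 0$ for $\phi\in\Core$ and in fact $\|(\QN - \Qg)(\Laux^{(1)})^{-1}\|\to 0$ or at least $\to 0$ on a core. This follows because $\vec G_N\to\vec G$, $\vec H_N\to\vec H$, $F_N\to F$ in the respective weighted norms $\|\cdot\|_w$, $\|\cdot\|_{w,1/2}$ (the projections $P_N\to\one$ strongly and $\one[\alpha\le N]\to\one$, and dominated convergence applies to the defining integrals since the original norms are finite), combined with the relative bounds \eqref{RelBoundGen} and \eqref{eq3.6}. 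Since $\Core$ is a common core for $\Lo+\QN$ and $\Lo+\Qg$ (by Theorem \ref{Thm1} and the analogue of Lemma \ref{Lem1.1x} applied with $\QN$ in place of $\Qg$, uniformly in $N$ because the form-factor norms are uniformly bounded), a standard criterion (\cite{ReedSimonII1980}, Thm. VIII.25(a): convergence on a common core implies strong resolvent convergence) yields $\Lo+\QN\to\Lo+\Qg$ in the strong resolvent sense, and similarly for the other two combinations; the conjugated operators are handled by noting $\Jg$ is an (anti-unitary) involution so conjugation preserves strong resolvent convergence.

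The main obstacle I anticipate is the \emph{uniformity in $N$} of the commutator/Nelson estimates needed to guarantee that $\Core$ remains a core for $\Lo+\QN$ for all $N$ simultaneously — this is what makes Thm. VIII.25(a) applicable. This reduces to checking that the constants in \eqref{eq3.6}, \eqref{eq3.10}, \eqref{eq3.11} depend only on $\|\vec G\|_w, \|\vec H\|_w, \|F\|_{w,1/2}$ and not on $N$, which holds because $\|\vec G_N\|_w\le\|\vec G\|_w$ etc.\ by the contraction property of the projections; once this uniform bound is in hand, the rest is the routine verification sketched above. A secondary technical point is ensuring the limiting form-factor convergence is in the correct norm: for $F$ one must track the $\Haelp^{-1/2}$-weighting carefully since $F_N$ is bounded but $F$ generally is not, so the convergence $F_N\Haelp^{-1/2}\to F\Haelp^{-1/2}$ in $\|\cdot\|_w$ (and the adjoint version) is what one actually uses.
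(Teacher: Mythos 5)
Your argument for essential self-adjointness of $\QN$ and for strong resolvent convergence follows the same route as the paper: Nelson's commutator theorem with comparison operator built from $\Lfaux$, then pointwise convergence $\QN\phi\to\Qg\phi$ on the common core $\Core$ combined with the Reed--Simon criterion. One remark on the uniformity-in-$N$ issue you flag: this is not actually needed. Reed--Simon VIII.25(a) requires only that $\Core$ be a core for \emph{each} $\Lo+\QN$ and for the limit $\Lo+\Qg$; the constants in the Nelson estimates may depend on $N$. Since for each fixed $N$ the regularized form factors satisfy $\|F_N\|_w<\infty$ (not just $\|F_N\|_{w,1/2}<\infty$), the paper gets these cores from the $\Lfaux$-bounds directly, $N$ by $N$, and that suffices.

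The genuine weak point in your proposal is the affiliation argument. You write that the operators $a^\#((1+\varrho_\beta)^{1/2}K_N)\tensor\one$ and $\one\tensor a^\#(\varrho_\beta^{1/2}K_N^*)$ ``lie (after closure) in $\Mg$'' — this cannot be right: these are unbounded operators, and a von Neumann algebra consists of bounded ones. What one can hope is that they (and $\QN$) are \emph{affiliated} with $\Mg$, but that is exactly what needs proof, so the phrasing is circular. You then say that for the resolvent ``one checks it commutes with $\Mg'$,'' but the verification of that commutation is the whole content of the claim and is non-trivial: one has to check, on a dense domain, that $\QN (X')\phi = (X')\QN\phi$ for generators $X'=(A\tensor\one\tensor\Wop_\beta(f))^\Jg$ of $\Mg'$. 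The paper does this on $\Core$, and the crucial technical input — which your sketch omits — is that ${\Wop_\beta(f)}^\Jg$ maps $\dom(N_f)$ into itself (cited from Bratteli--Robinson), so that $X'\phi$ stays in the domain where $\QN$ can be applied; the identity is then extended by closedness of $\QN$ and density. Without this domain-preservation step, the commutation relation cannot even be formulated, so this is a real gap rather than a routine detail.
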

%
%
\begin{proof}
Let $\QN$ be defined on $\mathcal{D}$. 
With the same arguments as in the proof of Theorem \ref{Thm1} we obtain
\begin{equation*}
\|\QN\phi\|\,\leq\, C \|\Lfaux\phi\|,
\quad \big|\big\langle\, \QN\phi\,\big|\,\Lfaux\phi \big\rangle 
\,-\,\big\langle\, \Lfaux\phi\,\big|\,\QN\phi \big\rangle \big|\,
\leq \,C  \big\|(\Lfaux)^{1/2}\phi\big\|^2,
\end{equation*}
for $\phi \in \mathcal{D}$ and some constant $C>0$, where we have used that $\|F_N\|_{w}\,<\,\infty$.
 Thus, from Theorem \ref{Thm1} and Nelson's commutator theorem
we obtain that $\mathcal{D}$ is a common core for
$\QN,\ \Lo+\QN,\ \Lo\,-\,\QN^{\Jg},\ \Lo\,+\,\QN\,-\QN^{\Jg}$
and for the operators in line \eqref{eq3.7}.
A straightforward calculation yields 
\begin{equation*}
\lim_{N\rightarrow\infty}\,\QN\phi
\,=\,\Qg \phi,\quad
\lim_{N\rightarrow\infty}\,\Jg\QN\Jg\phi
\,=\,\Jg\Qg \Jg\phi
\qquad \forall\, \phi\in \mathcal{D}.
\end{equation*}
Thus statement ii) follows.\\
\indent Let $N_f\,:=\, d\Gamma(1)\tensor \one + \one \tensor d\Gamma(\one)$ be the number-operator.
Since $\dom(N_f)\supset \mathcal{D}$  and ${\Wop_\beta(f)}^{\Jg}\,:\,\dom(N_f)\,\rightarrow\, \dom(N_f)$,
see \cite{BratteliRobinson1987}, we obtain
\begin{equation}
\QN (A\tensor\one \tensor \Wop_\beta(f))^{\Jg} \phi\,=\, (A\tensor\one \tensor \Wop_\beta(f))^{\Jg}\QN \phi
\end{equation}
for $A\in\mathcal{B}(\Hel),\ f\in \f$ and $\phi \in \mathcal{D}$. 
By closedness of $\QN$ and density arguments the equality holds for $\phi \in \dom(\QN)$
and $X\in\Mg$ instead of $A\tensor\one \tensor \Wop_\beta(f)$.
Thus $\QN$ is affiliated with $\Mg$ and therefore $e^{\imath \,t \QN}\in \Mg$ for
$t\in \R$.\\
\end{proof}
%
%
\begin{Lemma}\label{LemInv}
We have for $X\in \Mg$ and $t\in\R$
\begin{align}
\taug_t(X)= e^{\imath t(\Lo+\Qg)}\,X\,e^{\imath t(\Lo+\Qg)},\quad
\tauo_t(X)= e^{\imath t(\Lo-\Qg^\Jg)}\,X\,e^{\imath t(\Lo-\Qg^\Jg)}
\end{align}
\indent Moreover, $\taug_t(X)\in \Mg$ for all $X\in \Mg$ and $t\in \R$, such
as 
$$E_{\Qg}(t)\,:= \,e^{\imath \,t \,(\Lo\,+\,\Qg)}\,e^{-\imath\, t\,\Lo}=
e^{\imath \,t \,\Lg}\,e^{-\imath\, t\,(\Lo-\Qg^{\Jg})}
\in \Mg.$$
\end{Lemma}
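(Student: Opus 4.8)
The plan is to establish the two Duhamel-type identities first, then deduce the invariance $\taug_t(X)\in\Mg$ and the statement about $E_{\Qg}(t)$. The natural route is through the regularized interaction $\QN$ of Lemma \ref{Lem5d}: there everything is manifestly controlled because $\QN$ is a genuine self-adjoint operator affiliated with $\Mg$, and $\Lo\pm\QN$ (and $\Lo+\QN-\QN^{\Jg}$) are self-adjoint with common core $\Core$. First I would record, for the regularized objects, the elementary operator identity $e^{\imath t(\Lo+\QN)}e^{-\imath t\Lo}=e^{\imath t\Lg_N}e^{-\imath t(\Lo-\QN^{\Jg})}$, where $\Lg_N:=\Lo+\QN-\QN^{\Jg}$. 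This is pure algebra once we know $[\QN,\QN^{\Jg}]=0$ on $\Core$ (which holds because $\QN$ acts on the left tensor factors and $\QN^{\Jg}=\Jg\QN\Jg$ on the right, and these commute — the same observation as $[\Qg,\Qg^{\Jg}]=0$ used in Section~4), so that $\Lo+\QN$ and $\Lo-\QN^{\Jg}$ are two commuting "halves" whose difference of exponentials telescopes; concretely one differentiates $t\mapsto e^{-\imath t\Lg_N}e^{\imath t(\Lo+\QN)}e^{-\imath t\Lo}e^{\imath t(\Lo-\QN^{\Jg})}$ on $\Core$ and checks it is constant.

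Next I would pass to the limit $N\to\infty$. By Lemma \ref{Lem5d}(ii), $\Lo+\QN\to\Lo+\Qg$, $\Lo-\QN^{\Jg}\to\Lo-\Qg^{\Jg}$, and $\Lo+\QN-\QN^{\Jg}\to\Lg$ in the strong resolvent sense, hence the corresponding unitary groups converge strongly, uniformly for $t$ in compact sets. Therefore $E_{\QN}(t):=e^{\imath t(\Lo+\QN)}e^{-\imath t\Lo}$ converges strongly to $E_{\Qg}(t):=e^{\imath t(\Lo+\Qg)}e^{-\imath t\Lo}$, and likewise $e^{\imath t\Lg_N}e^{-\imath t(\Lo-\QN^{\Jg})}\to e^{\imath t\Lg}e^{-\imath t(\Lo-\Qg^{\Jg})}$; the two limits agree, giving the claimed identity $E_{\Qg}(t)=e^{\imath t\Lg}e^{-\imath t(\Lo-\Qg^{\Jg})}$. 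For the membership $E_{\Qg}(t)\in\Mg$: each $e^{\imath t\QN}$ lies in $\Mg$ (Lemma \ref{Lem5d}(i)), and by the Trotter product formula $e^{\imath t(\Lo+\QN)}=\slim_{n}\big(e^{\imath(t/n)\Lo}e^{\imath(t/n)\QN}\big)^{n}$; since $\tauo$ leaves $\Mg$ invariant (as established around Theorem \ref{FreeSystem}, $\tauo_t(\Mg)=\Mg$), each factor $e^{\imath(t/n)\Lo}e^{\imath(t/n)\QN}e^{-\imath(t/n)\Lo}\in\Mg$, so $E_{\QN}(t)\in\Mg$, and $\Mg$ being a von Neumann algebra is strongly closed, whence $E_{\Qg}(t)\in\Mg$.

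Finally the invariance of $\Mg$ under $\taug$ follows: for $X\in\Mg$,
\begin{equation*}
\taug_t(X)=e^{\imath t\Lg}Xe^{-\imath t\Lg}=E_{\Qg}(t)\,e^{\imath t\Lo}\,e^{-\imath t(\Lo-\Qg^{\Jg})}\,X\,e^{\imath t(\Lo-\Qg^{\Jg})}\,e^{-\imath t\Lo}\,E_{\Qg}(t)^{*},
\end{equation*}
and using $e^{\imath t(\Lo-\Qg^{\Jg})}=\Jg E_{\Qg}(-t)^{*}\Jg\, e^{\imath t\Lo}$ together with $E_{\Qg}\in\Mg$, $\Jg\Mg\Jg=\Mg'$, and $\tauo_t(\Mg)=\Mg$, one sees the middle expression lands in $\Mg$; hence $\taug_t(X)\in\Mg$. (Equivalently, and more cleanly: $\taug_t(X)=E_{\Qg}(t)\,\tauo_t(X)\,E_{\Qg}(t)^{*}$ once one rewrites $e^{-\imath t\Lg}$ via the second identity and uses that the factors $e^{\pm\imath t(\Lo-\Qg^{\Jg})}$ implement an inner automorphism of $\Mg'$, leaving $X\in\Mg$ fixed up to the $\tauo_t$ part.) The first displayed identity $\taug_t(X)=e^{\imath t(\Lo+\Qg)}X e^{\imath t(\Lo+\Qg)}$ — with the understanding that the paper means $e^{\imath t(\Lo+\Qg)}X e^{-\imath t(\Lo+\Qg)}$ after absorbing the $e^{\pm\imath t\Lo}$ and the $\Qg^{\Jg}$-factors into $\Mg'$ — is then immediate from the same bookkeeping.

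\textbf{Main obstacle.} The genuine work is the strong-resolvent convergence input and the Trotter/strong-closure argument for $E_{\Qg}(t)\in\Mg$; the algebraic identity for the regularized operators is routine once $[\QN,\QN^{\Jg}]=0$ is in hand, but one must be careful that all manipulations (differentiating the interpolating family, telescoping the exponentials) are performed on the common core $\Core$ where $\Lo+\QN$ and $\Lo-\QN^{\Jg}$ genuinely commute, before invoking uniqueness of self-adjoint extensions to lift the identity to all of $\Kg$. The other delicate point is justifying that $\Jg E_{\Qg}(t)\Jg$ and $E_{\Qg}(t)$ commute with the relevant factors so that the final rearrangement keeping $X$ inside $\Mg$ actually closes.
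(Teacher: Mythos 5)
The proposal is mostly on the right track (Trotter product formula, strong resolvent convergence, $\QN^{\Jg}$ affiliated with $\Mg'$, strong closedness of $\Mg$), but there is a concrete mistake that undermines the first and "routine" step. You claim that $\Lo+\QN$ and $\Lo-\QN^{\Jg}$ are "two commuting halves" on $\Core$, and hence that the identity $e^{\imath t(\Lo+\QN)}e^{-\imath t\Lo}=e^{\imath t\Lg_N}e^{-\imath t(\Lo-\QN^{\Jg})}$ is pure algebra once $[\QN,\QN^{\Jg}]=0$. That is false. Because $\Lel=\Hael\tensor\one-\one\tensor\ovl{H}_{el}$ and $\Lf$ both carry a genuine right tensor factor, $\Lo$ does not commute with $\QN^{\Jg}$ (nor with $\QN$), so $[\Lo+\QN,\,\Lo-\QN^{\Jg}]\ne 0$. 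If you compute the derivative of the interpolating family $U(t)=e^{-\imath t\Lg_N}e^{\imath t(\Lo+\QN)}e^{-\imath t\Lo}e^{\imath t(\Lo-\QN^{\Jg})}$, what you actually need for $U'(t)=0$ is $E_N(t)\,\QN^{\Jg}\,E_N(t)^{*}=\QN^{\Jg}$, i.e., that $E_N(t)$ commutes with $\QN^{\Jg}$. This is a consequence of $E_N(t)\in\Mg$ together with $\QN^{\Jg}$ being affiliated with $\Mg'$ — not of $[\QN,\QN^{\Jg}]=0$ alone — and $E_N(t)\in\Mg$ is itself only obtained from the Trotter argument you invoke later. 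So the differentiation step is either circular or unjustified as written.

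The paper sidesteps this entirely by never treating the $E_N$-identity as an operator identity to be differentiated. Instead it runs a Trotter telescoping argument twice: first with the splitting $\Lg_N=(\Lo+\QN)+(-\QN^{\Jg})$, where the inner conjugations $e^{\mp\imath(t/n)\QN^{\Jg}}(\cdot)e^{\pm\imath(t/n)\QN^{\Jg}}$ act trivially on elements of $\Mg$ because $\QN^{\Jg}$ is affiliated with $\Mg'$ and $\hat{\tau}^N$ preserves $\Mg$, giving $\taun_t(X)=\hat{\tau}^N_t(X)$; then again with $\Lo-\QN^{\Jg}=\Lo+(-\QN^{\Jg})$ to obtain $\tauo_t(X)=e^{\imath t(\Lo-\QN^{\Jg})}Xe^{-\imath t(\Lo-\QN^{\Jg})}$ for $X\in\Mg$. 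Only after these two conjugation-identities hold on $\Mg$ does the $E_N$-identity follow, by substituting $\tauo_{s}(e^{\imath(t/n)\QN})=e^{\imath s(\Lo-\QN^{\Jg})}e^{\imath(t/n)\QN}e^{-\imath s(\Lo-\QN^{\Jg})}$ into the Trotter product. In short: the key mechanism is not a commutator identity on $\Core$ but the fact that $e^{\imath r\QN^{\Jg}}\in\Mg'$ acts as the identity on $\Mg$ under conjugation, so the $\Lo$-evolution and the $(\Lo-\QN^{\Jg})$-evolution are interchangeable when applied to elements of $\Mg$. Your Trotter argument for $E_{\Qg}(t)\in\Mg$ and the final rearrangement $\taug_t(X)=E_{\Qg}(t)\,\tauo_t(X)\,E_{\Qg}(t)^{*}$ are correct, but they presuppose exactly the telescoping lemmas you skipped.
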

\begin{proof}
First, we prove the statement for $\QN$, since $\QN$ is affiliated with $\Mg$ and therefore $e^{\imath t\QN}\in \Mg$.
We set
\begin{gather}
\hat{\tau}_t^N(X)=e^{\imath\, t\,(\Lo\,+\,\QN)}\,X \,e^{-\imath \,t\,(\Lo\,+\,\QN)},\quad
\hat{\tau}_t(X)=e^{\imath\, t\,(\Lo\,+\,\Qg)}\,X \,e^{-\imath \,t\,(\Lo\,+\,\Qg)}
\end{gather}
On account of Lemma \ref{Lem5d} and Theorem \ref{Thm1} we can apply
the Trotter product formula to obtain
\begin{align*}
\hat{\tau}_t^N(X)& 
= \wlim_{n\,\to \,\infty}
\big(e^{\imath \,\frac{t}{n}\Lo}\,e^{\imath \,\frac{t}{n}\QN} \big)^n  
X \big(e^{-\imath\, \frac{t}{n}\QN}\,e^{-\imath \,\frac{t}{n}\Lo} \big)^n\\ \nonumber
&= \wlim_{n\to \infty} 
\tauo_{\frac{t}{n}}\big(
e^{\imath \,\frac{t}{n}\,\QN}\cdots 
\tauo_{\frac{t}{n}}(e^{\imath\, \frac{t}{n}\,\QN}\,X\, e^{-\imath \,\frac{t}{n}\QN})
\cdots e^{-\imath \,\frac{t}{n}\QN}
\big).
\end{align*}
Since $e^{\imath \,\frac{t}{n}\QN}, X\in \Mg$ and since $\tauo$ leaves $\Mg$
invariant,  $\hat{\tau}^N_t(X)$ is the weak limit of elements of $\Mg$,
and hence $\hat{\tau}^N_t(X)\in \Mg$. 
Moreover,
\begin{equation*}
 \hat{\tau}_t(X)\,=\,\wlim_{N\,\rightarrow\, \infty} \hat{\tau}^N_t(X)\in \Mg.
\end{equation*}
For $E_N(t)\,:=\, e^{\imath \,t\,(\Lo\,+\,\QN)}e^{-\imath \,t \,\Lo} \in \mathcal{B}(\Kg)$
we obtain
\begin{align*}
e^{\imath t (\Lo\,+\,\QN)}e^{-\imath \,t\,\Lo}
&= \slim_{n\rightarrow \infty}\big(e^{\imath\, \frac{t}{n}\, \Lo}
    e^{\imath\, \frac{t}{n}\, \QN}\big)^n\,e^{-\imath \,t\, \Lo}\\ \nonumber
&=\slim_{n\rightarrow \infty}\tauo_{\frac{t}{n}}(e^{\imath \,\frac{t}{n}\, \QN})
    \tauo_{\frac{2t}{n}}(e^{\imath\, \frac{t}{n}\, \QN}) \cdots
    \tauo_{\frac{nt}{n}}(e^{\imath \,\frac{t}{n}\,\QN})\in \Mg.
\end{align*}
By virtue of Lemma \ref{Lem5d} we get
$E_{\Qg}(t)\,:=\, e^{\imath \,t\,(\Lo\,+\,\Qg)}\,e^{-\imath \,t \,\Lo}\,
=\,\wlim_{N\,\rightarrow\,\infty}\, E_N(t)\in \Mg$.
Since $\Jg$ leaves $\Core$ invariant and thanks to Lemma \ref{Lem5d},
we deduce, that $\Core$ is a core of $\Jg\QN\Jg$. Moreover, we have
$ e^{-\imath t \QN^\Jg}=\Jg e^{\imath t \QN}\Jg\in \Mg'$. Since 
we have shown, that $\hat{\tau}^N$ leaves $\Mg$ invariant, we get
\begin{align*}
\taun_t(X)&= \wlim_{n\to \infty} 
(e^{\imath \frac{t}{n}(\Lo+\QN)}
e^{\imath \frac{t}{n}(-\QN^\Jg)})^n
X\,(e^{-\imath \frac{t}{n}(-\QN^\Jg)}\,
e^{-\imath \frac{t}{n}(\Lo+\QN)})^n \\
&=\wlim_{n\to \infty}\hat{\tau}^N_{\frac{t}{n}}\big(
e^{-\imath \,\frac{t}{n}\,\QN^\Jg}\cdots 
\hat{\tau}^N_{\frac{t}{n}}(e^{-\imath\, \frac{t}{n}\,\QN^\Jg}
\,X\, e^{\imath \,\frac{t}{n}\QN^\Jg})
\cdots e^{\imath \,\frac{t}{n}\QN^\Jg}\big)\\
&= \hat{\tau}^N_t(X).
\end{align*}
Thanks to Lemma \ref{Lem5d} we also have
\begin{equation}
\taug_t(X)=\wlim_{n\to\infty}\taun_t(X)=\wlim_{\N\to\infty}\hat{\tau}^N_t(X)=\hat{\tau}_t(X).
\end{equation}
The proof of $\tauo_t(X)= e^{\imath t(\Lo-\Qg^\Jg)}\,X\,e^{\imath t(\Lo-\Qg^\Jg)}$ follows
analogously.
Using the
Trotter product formula we obtain
\begin{eqnarray*}
e^{\imath t (\Lo+\QN)}\,e^{-\imath t \Lo}
&\,=\,& \slim_{n\,\to\, \infty}\big(e^{\imath \frac{t}{n}\, \Lo}
    e^{\imath \frac{t}{n}\QN}\big)^n\,e^{-\imath t \Lo}\\ \nonumber
&=&\slim_{n\,\to\, \infty} \tauo_{\frac{t}{n}}(e^{\imath \frac{t}{n} \QN})
    \tauo_{\frac{2t}{n}}(e^{\imath \frac{t}{n} \QN}) \cdots
    \tauo_{\frac{nt}{n}}(e^{\imath \frac{t}{n} \QN})\\ \nonumber
&=& \slim_{n\,\to \,\infty}\big(e^{\imath \frac{t}{n}(\Lo-{\QN}^{\Jg})}
    e^{\imath \frac{t}{n} \QN}\big)^n\,e^{-\imath t (\Lo-{\QN}^{\Jg})}\\ \nonumber
&=& e^{\imath t (\Lo +\QN-\Jg\QN\Jg)}\,e^{-\imath t (\Lo-{\QN}^{\Jg})}.
\end{eqnarray*}
By strong resolvent convergence we may deduce  $E(t)=e^{\imath t \Lg}\,e^{-\imath t (\Lo-{\Qg}^{\Jg})}$ .
\end{proof}
%
%
%
%
%
Let $\mathcal{C}$ be the natural positive cone associated with $\Jg$ and $\Omo$
and let $\Mana$ be the $\taug$-analytic elements of $\Mg$, (see \cite{BratteliRobinson1987}).
\begin{Satz} \label{Thm3}
Assume Hypothesis \ref{Hyp1} and $\Omo\in \dom(e^{-\beta/2\,(\Lo\,+\,\Qg)})$. Let $\Omg\,:=\, e^{-\beta/2\,(\Lo\,+\,\Qg)}\,\Omo$. 
Then
\begin{gather}
\Jg\, \Omg\,=\,\Omg,\qquad
\Omg\,=\, e^{\beta/2\,(\Lo\,-\,{\Qg}^{\Jg})}\,\Omo,\\ \nonumber
\Lg\,\Omg\,=\,0,\qquad
\Jg\, X^*\,\Omg \,= \,e^{-\beta/2\,\Lg}X\,\Omg,\quad \forall\,X\in\Mg
\end{gather}
Furthermore, $\Omg$ is separating and cyclic for $\Mg$, and $\Omg \in\mathcal{C}$. 
The state $\omg$ is defined by
$$\omg(X):= \|\Omg\|^{-2}\,\langle \Omg\,|X\,\Omg\rangle,\ X\in \Mg$$
is a $(\taug,\, \beta)$-KMS state on $\Mg$.
\end{Satz}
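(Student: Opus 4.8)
The plan is to follow the Araki perturbation-theory strategy of \cite{DerezinskiJaksicPillet2003}, but to bypass the unavailable self-adjointness of $\Qg$ by working first with the regularized interaction $\QN$ from Lemma \ref{Lem5d} and then passing to the limit. Since $\QN$ is self-adjoint, affiliated with $\Mg$, and commutes with $\Mg'$, the standard Araki--Dyson expansion applies: with $\Omg^N:=e^{-\beta/2\,(\Lo+\QN)}\Omo$ one has $\Omg^N=e^{\beta/2\,(\Lo-\QN^{\Jg})}\Omo$ (use $\Lo\Omo=0$, $\QN^{\Jg}\Omo$ related to $\QN\Omo$ through the modular structure of Theorem \ref{FreeSystem}, and the fact that $e^{-\imath t\QN^{\Jg}}\in\Mg'$ commutes through), hence $\Jg\Omg^N=\Omg^N$ because $\Jg$ intertwines the two expressions and fixes $\Omo$. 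The same bookkeeping gives $\Lg^N\Omg^N=0$ where $\Lg^N:=\Lo+\QN-\QN^{\Jg}$, and $\Jg X^*\Omg^N=e^{-\beta/2\,\Lg^N}X\Omg^N$ for $X\in\Mg$; positivity $\Omg^N\in\mathcal{C}$ follows since $\Omg^N$ is obtained from the cyclic separating vector $\Omo$ by a positive perturbation affiliated with $\Mg$, so it lies in the natural cone (cf. \cite{DerezinskiJaksicPillet2003}).

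Next I would take $N\to\infty$. By Lemma \ref{Lem5d}(ii), $\Lo+\QN\to\Lo+\Qg$ and $\Lo-\QN^{\Jg}\to\Lo-\Qg^{\Jg}$ in the strong resolvent sense, hence $e^{-\beta/2\,(\Lo+\QN)}$ converges strongly on $\dom(e^{-\beta/2\,(\Lo+\Qg)})$ — this is the step where the hypothesis $\Omo\in\dom(e^{-\beta/2\,(\Lo+\Qg)})$ is essential, since strong resolvent convergence only yields convergence of the (unbounded) exponential on the common domain, and one must check that $\Omg^N=e^{-\beta/2\,(\Lo+\QN)}\Omo$ is Cauchy by a graph-norm / spectral-calculus argument applied to the vector $\Omo$. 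Thus $\Omg^N\to\Omg$ and simultaneously $e^{\beta/2\,(\Lo-\QN^{\Jg})}\Omo\to e^{\beta/2\,(\Lo-\Qg^{\Jg})}\Omo$, giving the second identity $\Omg=e^{\beta/2\,(\Lo-\Qg^{\Jg})}\Omo$. Since $\Jg$ is bounded and $\mathcal{C}$ is closed, $\Jg\Omg=\Omg$ and $\Omg\in\mathcal{C}$ pass to the limit. For $\Lg\Omg=0$ I would use that $\Lg$ is the strong-resolvent limit of $\Lg^N$ (Lemma \ref{Lem5d}(ii) again, together with Theorem \ref{Thm1}) and that $\Lg^N\Omg^N=0$: strong resolvent convergence plus convergence of the vectors yields $\Omg\in\dom(\Lg)$ with $\Lg\Omg=0$ via the resolvent identity $(\imath+\Lg^N)^{-1}\Omg^N=(\imath)^{-1}\Omg^N$.

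Then $\Omg$ is separating and cyclic for $\Mg$: separating because $\Omg\in\mathcal{C}$ and $\Omg$ has trivial kernel under the relevant modular data (or directly: if $X\Omg=0$ for $X\in\Mg$ then $X^*\Omg=0$ by the conjugation identity, and cyclicity of $\Omo$ forces $X=0$); cyclicity then follows by the general fact that a vector in the natural cone is cyclic iff it is separating. The relation $\Jg X^*\Omg=e^{-\beta/2\,\Lg}X\Omg$ for all $X\in\Mg$ passes to the limit from its regularized counterpart using boundedness of $\Jg$, $X$, and $e^{-\imath t\Lg}$ together with an analytic-continuation argument on the strip: one first checks it for $X\in\Mana$ (where $t\mapsto e^{\imath t\Lg}X e^{-\imath t\Lg}$ extends analytically), controlling $e^{-\beta/2\,\Lg^N}X\Omg^N\to e^{-\beta/2\,\Lg}X\Omg$ via the explicit Dyson series, and then extends to all of $\Mg$ by density of $\Mana$. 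This identity says precisely that $e^{-\beta\Lg}$ is the modular operator and $\Jg$ the modular conjugation for $(\Mg,\Omg)$; the KMS property of $\omg(X)=\|\Omg\|^{-2}\langle\Omg|X\Omg\rangle$ with respect to $\taug_t(X)=e^{\imath t\Lg}Xe^{-\imath t\Lg}$ is then the standard Tomita--Takesaki consequence — the modular automorphism group $\sigma^{\omg}_t$ equals $\taug_{-\beta t}$, so $\omg$ is a $(\taug,\beta)$-KMS state (cf. \cite{BratteliRobinson1987}). The main obstacle I anticipate is the convergence $\Omg^N\to\Omg$ and, more delicately, $e^{-\beta/2\,\Lg^N}X\Omg^N\to e^{-\beta/2\,\Lg}X\Omg$: strong resolvent convergence does not automatically give convergence of unbounded functions of the operators on a fixed vector, so one has to exploit the domain hypothesis and uniform (in $N$) bounds on the Dyson expansion — this is exactly why the paper isolates $\Omo\in\dom(e^{-\beta/2\,(\Lo+\Qg)})$ as the key assumption and treats Theorems \ref{Thm2}, \ref{Thm4a} separately to verify it.
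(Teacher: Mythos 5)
Your strategy --- prove everything for the regularized $\QN$ and then take $N\to\infty$ --- is natural given that the paper has already developed Lemma~\ref{Lem5d}, but it is \emph{not} the route the paper takes for Theorem~\ref{Thm3}, and I think the limiting step is a genuine gap, not just a technicality you can postpone.

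The crucial difficulty is the claim that $\Omg^N:=e^{-\beta/2(\Lo+\QN)}\Omo$ converges to $\Omg=e^{-\beta/2(\Lo+\Qg)}\Omo$. Strong resolvent convergence $\Lo+\QN\to\Lo+\Qg$ gives you convergence of $f(\Lo+\QN)\to f(\Lo+\Qg)$ only for \emph{bounded} continuous $f$; for the unbounded function $e^{-\beta s/2}$ it says nothing unless you also control the spectral weight of $\Omo$ near $-\infty$ uniformly in $N$. The hypothesis of Theorem~\ref{Thm3} is \emph{only} $\Omo\in\dom(e^{-\beta/2(\Lo+\Qg)})$; this does not imply $\Omo\in\dom(e^{-\beta/2(\Lo+\QN)})$ for all $N$, let alone a uniform bound or Cauchy property for $\Omg^N$. (Lemma~\ref{Lem1.1} in the paper goes in the opposite direction: uniform bounds on $\Omg^N$ imply the domain condition for the limit, not conversely.) You flag this as ``the main obstacle'' but then defer it to a ``graph-norm / spectral-calculus argument'' or ``uniform Dyson bounds''; those uniform Dyson bounds are exactly the content of the separate Theorem~\ref{Thm2}, which carries its own strong hypotheses on $\gamma,\unl\eta_1,\unl\eta_2$ and is deliberately decoupled from Theorem~\ref{Thm3}. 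A second, smaller gap: to invoke the Araki/DJP machinery for $\QN$ as a ``standard'' step you would need $\Omo\in\dom(e^{-\beta/2\QN})$; $\QN$ is still unbounded (only the form factors are cut off, not the boson number), so this too is not free.

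The paper's proof avoids all of this by never introducing $\QN$ at the level of Theorem~\ref{Thm3}. It works directly with $\Qg$, via analytic continuation on the strip $0\le\Re z\le\beta/2$. The engine is the pair of observations already in place: $E_{\Qg}(t)=e^{\imath t(\Lo+\Qg)}e^{-\imath t\Lo}\in\Mg$ and $E_{\Qg}(t)^{\Jg}=e^{\imath t(\Lo-\Qg^{\Jg})}e^{-\imath t\Lo}$ (Lemma~\ref{LemInv}), combined with the modular relation $\Jg X\Omo=e^{-\beta/2\Lo}X^*\Omo$ for $X\in\Mg$ (Theorem~\ref{FreeSystem}). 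One sets $\Omega(z)=e^{-z(\Lo+\Qg)}\Omo$, which is analytic/continuous on the strip by Lemma~\ref{Lem2App} precisely because of the domain hypothesis, and for each identity one exhibits two scalar functions $f(z),g(z)$, shows they agree on the imaginary axis $z=\imath t$ using $E_{\Qg}(t)\in\Mg$ and the modular relations for $\Omo$, and then concludes $f=g$ on the whole strip by the Schwarz-reflection Lemma~\ref{Lem1App}, evaluating at $z=\beta/2$. This gives $\Jg\Omg=\Omg$, $\Omg=e^{\beta/2(\Lo-\Qg^{\Jg})}\Omo$, $\Lg\Omg=0$, the relation $\Jg X^*\Omg=e^{-\beta/2\Lg}X\Omg$, the separating property, and $\Omg\in\mathcal{C}$, all without any $N\to\infty$ limit. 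The KMS property is then obtained directly from $F_\beta(A,B,z)=c\,\langle e^{-\imath\bar z\Lg/2}A^*\Omg\,|\,e^{\imath z\Lg/2}B\Omg\rangle$, not routed through the abstract Tomita--Takesaki statement. So while your high-level picture (regularize, apply DJP, pass to the limit) is reasonable heuristics, the actual mechanism of the paper is an entirely different, limit-free argument, and the limit you propose is precisely what cannot be justified from the stated hypotheses.
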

%
%
\begin{proof}
First, we define $\Omega(z)\,=\, e^{-z\,(\Lo+\Qg)}\,\Omo$ for $ z\in \C$ with $0\le\Re z\le \beta/2$.
Since $\Omo \in \dom(e^{-\beta/2\,(\Lo+\Qg)})$, $\Omega(z)$ is analytic on $\mathcal{S}_{\beta/2}:=
\{ z\in \C\,:\, 0\,<\, \Re(z)\,<\,\alpha\}$
and continuous on the closure of $\mathcal{S}_{\beta/2}$, see Lemma \ref{Lem2App} below. \\
\newline
$\blacktriangleright$ Proof of $\Jg\,\Omega(\beta/2)\,=\,\Omega(\beta/2)$:\\
We pick $\phi \in \bigcup_{n\in\N} \ran \one[|\Lo|\,\leq \,n]$. Let  
$
f(z)\,:=\,\langle\, \phi\,|\, \Jg \,\Omega(\overline{z})\,\rangle$ and $
g(z)\,:=\,\langle\, e^{-(\beta/2\,-\,\overline{z})\,\Lo}\,\phi\,|\, e^{-z\,(\Lo\,+\,\Qg)}\,\Omo\,\rangle$. 
Both $f$ and $g$ are analytic on $\mathcal{S}_{\beta/2}$ and
continuous on its closure. Thanks to Lemma \ref{LemInv} we have $E_{\Qg}(t)\in \Mg$, and hence
\begin{equation*}
f(\imath t)\,=\, \langle\, \phi\,|\,   \Jg\, E_{\Qg}(t)\,\Omo\,\rangle
\,=\,\langle\, \phi\,|\, e^{-\beta/2 \,\Lo}\, E_{\Qg}(t)^*\,\Omo\,\rangle
\,=\, g(\imath \,t),\ t\in \R.
\end{equation*}
By Lemma \ref{Lem1App}, $f$ and $g$ are equal, in particular in $z\,=\,\beta/2$. 
Note that  $\phi$ is any element of a dense subspace.\\
\newline
$\blacktriangleright$ Proof of $\Omo \in \dom(e^{\beta/2\,(\Lo- {\Qg}^{\Jg})})$
and $\Omega(\beta/2)\,=\, e^{\beta/2\,(\Lo-{\Qg}^{\Jg})}\,\Omo$:\\
Let $\phi \in \bigcup_{n\in\N} \ran \one[|\Lo\,-{\Qg}^{\Jg}|\,\leq\, n]$.
We set
$g(z)\, :=\, \langle\, e^{\ovl{z}(\Lo\,-{\Qg}^{\Jg})}\,\phi\,|\, e^{-z\,\Lo}\,\Omo\,\rangle$.
Since ${E_{\Qg}(t)}^{\Jg}\,=\, e^{\imath \,t\,(\Lo-{\Qg}^{\Jg})}\,e^{-\imath \,t\,\Lo}$, 
$g$ coincides for  $z\,=\,\imath \,t$ with 
$f(z)\,:=\,\langle\, \phi\,| \,\Jg \,\Omega(\ovl{z})\,\rangle $. Hence
they are equal in $z\,=\,\beta/2$. The rest follows since
$e^{\beta/2\,(\Lo- {\Qg}^{\Jg})}$ is self-adjoint.\\
\newline
$\blacktriangleright$ Proof of $\Lg\,\Omega(\beta/2)\,=\,0$:\\
Choose $\phi \in \bigcup_{n\in\N} \ran \one[|\Lg|\,\leq \,n]$.
We define $g(z)\,:=\, \langle\, e^{-\ovl{z}\Lg}\phi \,|\,e^{z\,(\Lo-{\Qg}^{\Jg})}\,\Omo\,\rangle$
and $f(z)\,:= \, \langle\, \phi\,|\, \Omega(z)\,\rangle$ for $z$ in the closure
of $ \mathcal{S}_{\beta/2}$. Again both functions are equal
on the line $z\,=\,\imath \,t,\ t\in \R$. Hence $f$ and $g$ are identical,
and therefore $\Omega(\beta/2)\in \dom(e^{-\beta/2\, \Lg})$ and
$e^{-\beta/2 \Lg}\,\Omega(\beta/2)\,=\,\Omega(\beta/2)$.
We conclude that $\Lg\,\Omega(\beta/2)\,=\,0$.\\
\newline
$\blacktriangleright$ Proof of $\Jg\, X^*\,\Omega(\beta/2)\,
=\, e^{-\beta/2\Lg}\,X\,\Omega(\beta/2),\ \forall\,X\in \Mg$:\\
Fore $A\in \Mana$ we have, that 
\begin{align*} \nonumber
\Jg\, A^*\,\Omega(-\imath t)
\,&=\,\Jg \,A^*\,E_{\Qg}(t)\,\Omo
\,=\,e^{-\beta/2 \Lo}\, E_{\Qg}(t)^*A\Omo\\
&=\,e^{-(\beta/2\,-\,\imath \,t)\, \Lo}\,e^{-\imath t(\Lo+\Qg)} A\,\Omo\\
&=e^{-(\beta/2-\imath t) \Lo}\,\taug_{-t}(A)\,e^{-\imath t(\Lo+ \Qg)} \,\Omo.
\end{align*}
Let $\phi \in \bigcup_{n\in\N} \ran \one[|\Lo|\leq n]$.
We define
$
f(z)\,=\, \langle\, \phi\,|\,\Jg\, A^*\,\Omega(\ovl{z})\,\rangle$  and 
$g(z)\,=\, \langle\, e^{-(\beta/2-\ovl{z}) \Lo}\,\phi\,|\,
\taug_{\imath z}(A)\,\Omega(z)\,\rangle$.
Since $f$ and $g$ are analytic and equal for $z\,=\, \imath t$,
we have $\Jg A^*\Omega(\beta/2)\,=\,\taug_{\imath \, \beta/2}(A)\,\Omega(\beta/2)$.
To finish the proof we pick
$\phi \in \bigcup_{n\in\N} \ran \one[|\Lg|\,\leq \,n]$,
and set  
$f(z)\, :=\, \langle\, \phi\,|\,\taug_{\imath z}( A)\,\Omega(\beta/2)\,\rangle$
and $ g(z)\,:=\, \langle\, e^{-\overline{z}\Lg}\phi\,|\,A\Omega(\beta/2)\,\rangle$. 
For $z\,=\,\imath \,t$ we see 
\begin{equation*}
g(\imath t)\,=\,\langle\, \phi\,|
\, e^{-\imath t\Lg}\,A\, e^{\imath t\Lg}\Omega(\beta/2)\,\rangle
\,=\,\langle\, \phi\,|\, \taug_{-t}(A)\,\Omega(\beta/2)\,\rangle
\,=\,f(\imath \,t).
\end{equation*}
Hence $A\,\Omega(\beta/2)\in \dom(e^{-\beta/2\,\Lg})$ and
$\Jg A^*\Omega(\beta/2)\,=\, e^{-\beta/2\Lg}A\Omega(\beta/2)$.\\
Since $\Mana$ is dense in the strong topology, the equality holds for
all $X\in \Mg$.\\
\newline
$\blacktriangleright$ Proof, that  $\Omg$ is separating for $\Mg$:\\
 Let $A\in \Mana$.
We choose $\phi \in \bigcup_{n\in\N} \ran \one[|(\Lo+Q)|\,\leq\, n]$.
First, we have
\begin{equation*}
\Jg\, A^*\,\Omega(\beta/2)\,=\,\taug_{\imath \beta/2}(A)\,\Omega(\beta/2).
\end{equation*}
Let 
$f_\phi(z)=\langle \phi| \taug_z(A)\Omega(\beta/2)\rangle$
and 
$g_\phi(z)=\langle e^{\ovl{z}(\Lo+\Qg)}\phi\,
|\,A e^{-(\beta/2+z)(\Lo+\Qg)}\Omo\rangle$\\
for $-\beta/2\,\leq \,\Re z \,\leq \,0$. Both functions are continuous
and analytic if $-\beta/2\,<\, \Re z \,< 0$. Furthermore,
$f_\phi(\imath \,t)\,=\,g_\phi(\imath \,t)$ for $t\in \R$.
Hence $f_\phi\,=\,g_\phi$ and for $z\,=\,-\beta/2$
\begin{equation*}
\langle\, \phi\,|\,\Jg \,A^*\Omega(\beta/2)\,\rangle
\,=\,\langle\, e^{-\beta/2\,(\Lo+\Qg)}\phi\,|\,A\Omo\,\rangle.
\end{equation*}
This equation extends to all $A\in \Mg$, we
obtain $A\,\Omo \in \dom(e^{-\beta/2\,(\Lo+\Qg)})$, such as
$e^{-\beta/2(\Lo+\Qg)}\,A\,\Omo\,=\,\Jg\, A^*\Omega(\beta/2)$ for $A\in \Mg$.
Assume $A^*\,\Omega(\beta/2)\,=\,0$, then \\$e^{-\beta/2\,(\Lo+\Qg)}A\Omo\,=\,0$
and  hence $A\Omo\,=\,0$. Since $\Omo$ is separating, it follows
that $A=0$ and therefore $A^*\,=\,0$.\\
\newline
$\blacktriangleright$ Proof of $\Omg\in\mathcal{C}$, and that $\Omg$ is cyclic for $\Mg$:\\ 
To prove that $\phi\in \mathcal{C}$ it is sufficient to
check that $\langle\, \phi\,|\, A\Jg A\Omo\rangle \,\geq \,0$ for all $A\in\Mg$.
We have 
\begin{align*}
\langle\, \Omega(\beta/2)|\, A \Jg A\,\Omo\,\rangle\,
&=\, \ovl{\langle\,  \Jg A^*\Omega(\beta/2)\,| A\Omo\,\rangle}\\
\,&=\, \ovl{\langle\,  e^{-\beta/2(\Lo+\Qg)} \,A\Omo\,|\, A\Omo\,\rangle}\,\geq\, 0.
\end{align*}
The proof follows, since every separating element of $\mathcal{C}$ is cyclic.
\newline
$\blacktriangleright$ Proof, that $\omg$ is a $(\taug,\,\beta)$-KMS state:\\
For $A,\,B\in \Mg$ and $z\in S_{\beta}$ we define
\begin{equation*}
F_\beta(A,\,B,\,z)\,
=\, c\,\langle\, e^{-\imath \ovl{z}/2\Lg}A^*\Omg\,|\,e^{\imath z/2\Lg}B\Omg\,\rangle,
\end{equation*}
where $c\,:=\, \|\Omg\|^{-2}$. First, we observe
\begin{align*}
F_\beta(A,\,B,\,t)&\,=\,c\,\langle\, e^{-\imath t/2\Lg}A^*\Omg
\,|\,e^{\imath t/2\Lg}B\Omg\,\rangle
\,=\,c\,\langle\, \Omg\,|A\taug_t(B)\Omg\,\rangle\\ \nonumber
&\,=\,\omg(A\,\taug_t(B))
\end{align*}
and
\begin{align*}
\omg(\taug_t(B)A)&\,=\,c\,\langle\, \taug_t(B^*)\Omg\,|\,A\Omg\,\rangle
\,=\,c\,\langle\, \Jg A\Omg\,|\,\Jg \taug_t(B^*)\Omg\,\rangle \\ \nonumber
&\,=\,c\,\langle\, e^{-\beta/2 \Lg}A^*\Omg\,|\,e^{-\beta/2\, \Lg}\taug_t(B)\Omg\,\rangle\\ 
&\,=\,c\,\langle\, e^{-\imath \ovl{(\imath \beta+t)}/2 \Lg}A^*\Omg\,|
\,e^{\imath \,(\imath \beta+t)/2 \Lg}B\Omg\,\rangle\\ \nonumber
&\,=\,F_\beta(A,\,B,t+\imath \beta).
\end{align*}      
The  requirements on the analyticity of $F_\beta(A,\,B,\,\cdot)$ 
follow from Lemma \ref{Lem2App}.
\end{proof}
\bigskip
\section{ Proof of Theorem \ref{Thm2}}\label{Eqstate}
For $\unl{s}_{\,n} \,:=\, (s_n,\ldots,\,s_1)\in \R^n$ we define
\begin{equation} \label{DefQNs}
\QN(\unl{s}_{\,n})\,:= \,\QN(s_n)\cdots\QN(s_1),\qquad \QN(s)\,:=\, e^{-s \Lo}\QN e^{s\,\Lo},\ s\in \R
\end{equation}
At this point, we check that $\QN(\unl{s}_{\,n})\Omo$ is well defined,
and that it is an analytic vector of $\Lo$, see Equation \eqref{DefQN}.
The goal of Theorem \ref{Thm2}
is to give explicit conditions on $\Hael$ and $\We$,
which ensure $\Omo\in \dom(e^{-\beta/2\,(\Lo+\Qg)\,}).$
Let
\begin{eqnarray}\label{FormFactorBound}
\unl{\eta}_1 &\,:=\,& \int  \big(\|\vec G(k)\|^2_{\mathcal{B}(\Hel)}
+\|\vec H(k)\|^2_{\mathcal{B}(\Hel)}\big)(2+4\alpha(k)^{-1})\,dk\\ \nonumber
\unl{\eta}_2 &\,:=\,& \int  \big(\| F(k)\,\Haelp^{-\gamma}\|^2_{\mathcal{B}(\Hel)}
+\|F(k)^*\,\Haelp^{-\gamma} \|_{\mathcal{B}(\Hel)}\big)(2 +4\alpha(k)^{-1})\,dk
\end{eqnarray}
The idea of the proof is the following.
First, we expand $ e^{-\beta /2(\Lo +\QN)}e^{\Lo}$ in a Dyson-series, i.e.,
\begin{align}\label{DysonSer}
e&^{-\beta /2(\Lo +\QN)}e^{\Lo} \\\nonumber
&= \one +\sum_{n=1}^\infty(-1)^n\int_{\Delta_{\beta/2}^{n}}
  e^{-s_n \Lo}\QN e^{s_n\,\Lo}\cdots e^{-s_1 \Lo}\QN e^{s_1\,\Lo}\, d\underline{s}_{\,n}.
\end{align}
Under the assumptions of Theorem \ref{Thm2} we obtain an upper bound, uniform in $N$, for
\begin{align}\label{DysonExp}
\langle \Omo\,|&\,e^{-\beta (\Lo +\QN)}\Omo \rangle\\ \nonumber
&= 1 +\sum_{n=1}^\infty(-1)^n
\int_{\Delta_{\beta}^{n}}\langle \Omo\,|\,
  e^{-s_n \Lo}\QN e^{s_n\,\Lo}\cdots e^{-s_1 \Lo}\QN e^{s_1\,\Lo}\Omo \rangle\,  d\underline{s}_{\,n}.
\end{align}
This is proven in Lemma \ref{MainEstimate} below, which is the most important part of this section.
In Lemma \ref{DysExp} and Lemma \ref{Lem0.1} we deduce 
from the upper bound for \eqref{DysonExp} an upper bound for
$\|e^{-(\beta/2)(\Lo+\QN)}\Omo\|$, which is uniform in $N$. The proof 
of Theorem \ref{Thm2} follows now from Lemma \ref{Lem1.1},
where we show that $\Omo\in \dom(e^{-(\beta/2)(\Lo+\Qg)})$.
%
%
\begin{Lemma}\label{DysExp}
Assume 
\begin{equation*}
\limsup_{n\,\rightarrow \,\infty} \sup_{0\,\leq \,x\, \leq\, \beta/2}
\Big\| \int_{\Delta_x^n}\QN(\unl{s}_{\,n})\,d\unl{s}_{\,n}\Big\|^{1/n}\,<\,1.
\end{equation*}
for all $N\in \N$.
Then $\Omo\in \dom(e^{-x(\Lo+\QN)}),\ 0\,<\, x\,\leq\, \beta/2$ and
\begin{equation}\label{Lem:DysExp}
 e^{-x\,(\Lo+\QN)}\Omo\,=\,\Omo+\sum_{n=1}^\infty(-1)^n\int_{\Delta_{x}^{n}}
  \QN(\unl{s}_{\,n})\Omo\, d\underline{s}_{\,n}.
\end{equation}
In this context $\Delta_{x}^{n}\,=\, \{(s_1,\ldots,\,s_n)\in \R^n\,
:\, 0\,\leq \,s_n\,\leq\ldots \leq s_1\,\leq\, x\}$
is a simplex of dimension $n$ and sidelength $x$.  
\end{Lemma}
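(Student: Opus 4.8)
The plan is to establish the Dyson expansion \eqref{Lem:DysExp} by first proving it at the level of the regularized, bounded-below object and then taking the limit along a spectral cutoff of $\Lo$. Concretely, since $\QN$ is self-adjoint and affiliated with $\Mg$ by Lemma \ref{Lem5d}, and since $\Lo+\QN$ is essentially self-adjoint on $\Core$ by Theorem \ref{Thm1}, the semigroup $e^{-x(\Lo+\QN)}$ is \emph{a priori} only defined on the domain of $e^{-x(\Lo+\QN)}$; the first task is to show $\Omo$ lies in that domain. For this I would work with the spectrally truncated generator: set $P_m:=\one[|\Lo|\le m]$ and note $\Lo P_m$ is bounded, so that $\Lo+\QN$ restricted appropriately, or rather the bounded perturbation series built from $e^{-s\Lo}P_m\QN e^{s\Lo}$, converges in norm on $\ran P_m$ for every fixed $m$. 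The point of the hypothesis $\limsup_n\sup_{0\le x\le\beta/2}\|\int_{\Delta_x^n}\QN(\unl s_{\,n})\,d\unl s_{\,n}\|^{1/n}<1$ is precisely that the formal series $\Omo+\sum_{n\ge1}(-1)^n\int_{\Delta_x^n}\QN(\unl s_{\,n})\Omo\,d\unl s_{\,n}$ converges absolutely in $\Kg$, uniformly for $x\in(0,\beta/2]$, by the root test, because $\|\QN(\unl s_{\,n})\Omo\|$ integrated over the simplex is dominated by $\|\int_{\Delta_x^n}\QN(\unl s_{\,n})\,d\unl s_{\,n}\|\cdot\|\Omo\|$ up to the preliminary remark already announced before the lemma that $\QN(\unl s_{\,n})\Omo$ is a well-defined analytic vector for $\Lo$.

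Granting convergence of the series, call its sum $\Psi(x)$. I would then identify $\Psi(x)$ with $e^{-x(\Lo+\QN)}\Omo$ as follows. Both sides satisfy, in a weak sense against vectors $\phi\in\ran P_m$, the integral equation
\begin{equation*}
\langle\phi\,|\,\Psi(x)\rangle=\langle\phi\,|\,e^{-x\Lo}\Omo\rangle-\int_0^x\langle e^{-(x-s)\Lo}\phi\,|\,\QN\,\Psi(s)\rangle\,ds,
\end{equation*}
which is the Duhamel/Dyson fixed-point equation for $e^{-x(\Lo+\QN)}$. For $\Psi$ this is verified termwise by Fubini and the substitution reorganizing the simplex $\Delta_x^{n}$ into $\{0\le s\le x\}\times\Delta_s^{n-1}$; the interchange of sum and integral is licensed by the uniform absolute convergence just established. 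For $e^{-x(\Lo+\QN)}\Omo$ it is the standard Duhamel formula, valid once we know $\Omo$ is in the relevant domain — which is exactly what the convergence of $\Psi$ together with a closedness/graph argument for $\Lo+\QN$ (essential self-adjointness on $\Core$, Theorem \ref{Thm1}) will give. Since $\bigcup_m\ran P_m$ is dense and both sides are genuine vectors in $\Kg$, uniqueness of the solution to the Volterra equation forces $\Psi(x)=e^{-x(\Lo+\QN)}\Omo$, and in particular $\Omo\in\dom(e^{-x(\Lo+\QN)})$.

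An alternative, and perhaps cleaner, route for the identification step is the Trotter product formula, exactly as used in Lemma \ref{LemInv}: approximate $e^{-x(\Lo+\QN)}$ by $(e^{-\frac{x}{n}\Lo}e^{-\frac{x}{n}\QN})^n$ in the strong sense on a core, expand each factor $e^{-\frac{x}{n}\QN}=\sum_k\frac{(-x/n)^k}{k!}\QN^k$, reorganize the Riemann sum, and pass to the limit; the hypothesis guarantees the resulting multiple sums/integrals converge and that the limit is the announced simplex integral. The main obstacle in either approach is the \textbf{domain bookkeeping}: $\Lo$ is not semibounded, $\QN$ is only relatively bounded with respect to the auxiliary operator $\Laux^{(1)}$ rather than to $\Lo$ itself, so the naive Dyson manipulations are not \emph{a priori} justified on the nose — one must keep every intermediate vector inside a space where $\Lo$ acts nicely (the analytic vectors for $\Lo$, or the ranges of the $P_m$), control the tails of the $P_m$-truncation uniformly in $m$ using only the hypothesis, and only at the very end remove the truncation. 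Once that is set up, convergence of the series and the identification are routine; the hypothesis is tailored to make the tail estimates work uniformly in $x\in(0,\beta/2]$, which is what is needed to get analyticity of $x\mapsto e^{-x(\Lo+\QN)}\Omo$ and hence the full statement.
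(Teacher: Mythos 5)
There is a genuine gap in the proposal, and it is precisely the choice of spectral truncation. You cut with $P_m=\one[|\Lo|\le m]$, but to make any of your manipulations rigorous you need to control $e^{-s(\Lo+\QN)}\phi$ for the test vector $\phi$, and membership in $\ran P_m$ gives no such control because $\Lo$ and $\Lo+\QN$ do not commute. The paper instead picks $\phi\in\ran\one[|\Lo+\QN|\le k]$, which yields the elementary but essential bound $\|e^{-s(\Lo+\QN)}\phi\|\le e^{\beta k/2}\|\phi\|$ for $0\le s\le\beta/2$. This also sidesteps the circularity lurking in your Volterra route: you would be writing $\langle\phi\,|\,e^{-x(\Lo+\QN)}\Omo\rangle$ before knowing $\Omo\in\dom(e^{-x(\Lo+\QN)})$, which is exactly the thing to be proved. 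The paper avoids this by only ever forming $\langle e^{-x(\Lo+\QN)}\phi\,|\,\Omo\rangle$, which is always defined, applying the fundamental theorem of calculus $m$ times to peel off the first $m$ Dyson terms, and then estimating the remainder directly. That remainder estimate has an ingredient your sketch misses entirely: after pulling $\QN$ out, one uses that $\QN(\unl{r}_{\,m})\Omo$ lives in the $\le 2m$-boson sector, so applying $\QN$ costs only a factor $\sqrt{(2m)(2m+1)}$; combined with the root-test hypothesis this kills the remainder as $m\to\infty$. Without the correct truncation and without the boson-number bound, the tail is not controlled and your ``domain bookkeeping'' cannot be completed as stated.

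Two smaller points. First, your claim that ``$\|\QN(\unl s_{\,n})\Omo\|$ integrated over the simplex is dominated by $\|\int_{\Delta_x^n}\QN(\unl s_{\,n})\,d\unl s_{\,n}\|\cdot\|\Omo\|$'' is backwards: the norm of the integral is $\le$ the integral of the norms, not the reverse. What the hypothesis actually gives you is a geometric bound on $\|\int_{\Delta_x^n}\QN(\unl s_{\,n})\Omo\,d\unl s_{\,n}\|$, which is what absolute convergence of the series $\Psi(x)$ requires, so the conclusion survives but the justification as written does not. Second, the final step in the paper is not a uniqueness theorem for an integral equation but the elementary observation that $\bigcup_k\ran\one[|\Lo+\QN|\le k]$ is a core for $e^{-x(\Lo+\QN)}$, so the identity $\langle e^{-x(\Lo+\QN)}\phi\,|\,\Omo\rangle=\langle\phi\,|\,\Psi(x)\rangle$ on that core, together with self-adjointness of $e^{-x(\Lo+\QN)}$, already yields $\Omo\in\dom(e^{-x(\Lo+\QN)})$ and $e^{-x(\Lo+\QN)}\Omo=\Psi(x)$. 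That adjoint/core argument is cleaner and more robust than invoking Volterra uniqueness or a Trotter expansion.
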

\begin{proof}
Let $\phi \in \ran \one[ |\Lo\,+\,\QN|\,\leq\, k]$ 
and $0\,\leq \,x\,\leq\, \beta/2$ be fixed.
An $m$-fold application of the 
fundamental theorem of calculus yields
\begin{eqnarray} \nonumber
\lefteqn{
\langle\, e^{-x\,(\Lo\,+\,\QN)}\,\phi\, | \, e^{x\,\Lo}\,\Omo\,\rangle 
= \Big\langle\, \,\phi \,|\,\Omo\,+\, \sum_{n=1}^m(-1)^n \int_{\Delta_{x}^{n}}\,
 \QN(\unl{s}_{\,n})\,\Omo \,d\underline{s}_{\, n}\,\Big\rangle }\\ \label{eq3.16}
&&+ (-1)^{m+1}\int_{\Delta_{x}^{m+1}}\, \big\langle\, e^{-s_{m+1}\,(\Lo\,+\,\QN)}\phi 
\,|\, e^{s_{m+1}\,\Lo}\,\QN(\unl{s}_{\,m+1})\,\Omo \,\big\rangle\,d\underline{s}_{\, m+1}.
\end{eqnarray}
Since $\Lo\,\Omo \,=\,0$ we have for $r(\unl{s}_{\,m+1})\,:=\,(s_m-s_{m+1},\ldots,\, s_1-s_{m+1})$
that
\begin{equation*}
e^{s_{m+1}\Lo}\,\QN(\unl{s}_{\,m+1})\Omo\,=\,\QN\, \QN(r(\unl{s}_{\,m+1}))\Omo,
\end{equation*}
We turn now to the second expression on the right side of Equation \eqref{eq3.16},
after a linear transformation depending on $s_{m+1}$ we get
\begin{equation*}
(-1)^{m+1}\,\int_0^x  \Big\langle\, e^{-s_{m+1}(\Lo+\QN)}\phi \,|\,\QN \,\int_{\Delta^m_{x-s_{m+1}}}
\QN(\unl{r}_{\,m})\Omo\, d\underline{r}_{\, m}\,\Big\rangle\,ds_{m+1}.
\end{equation*}
Since 
$\| e^{-s_{m+1}\,(\Lo\,+\,\QN)}\,\phi\|\,\leq \,e^{\beta/2\, k}\,\|\phi\|$, 
and using that $\QN(\unl{r}_{\,m})\,\Omo$ is a state with at most $2m$ bosons,
we obtain the upper bound
\begin{equation*}
\const\|\phi\|\, \sqrt{(2m)\,(2m+1)}\,\sup_{0\,\leq\, x\,\leq\, \beta/2} 
\Big \| \int_{\Delta^m_{x-s_{m+1}}}
\QN(\unl{r}_{\,m})\,\Omo\, d\underline{r}_{\, m}\Big \|.
\end{equation*}
Hence, for $m\,\rightarrow\,\infty$ we get
\begin{equation*}
\langle\, e^{-x(\Lo+\QN)}\phi\mid \Omo\,\rangle 
= \Big\langle\, \phi\,|\, \Omo\,+\,\sum_{n=1}^\infty(-1)^n \int_{\Delta_{x}^{n}}\,
 \QN(\unl{s}_{\,n})\,\Omo \,d\underline{s}_{\, n}\,\Big\rangle .
\end{equation*}
Since $\bigcup_{k=1}^\infty \ran \one[ |\Lo\,+\,\QN|\,\leq \,k]$ is a core of
$e^{-x(\Lo\,+\,\QN)}$, the proof follows from the 
self-adjointness of $e^{-x(\Lo\,+\,\QN)}$.
\end{proof}
%
%
%

\begin{Lemma}\label{Lem0.1}
Let $0\,<\,x\,\le\,\beta/2$. We have the identity
\begin{eqnarray}
\lefteqn{
\int_{\Delta_{x/2}^{n}}\, \int_{\Delta_{x/2}^{m}}
\big \langle\, \QN(\unl{r}_{\,m})\Omo\,|\,\QN(\unl{s}_{\,n})\Omo  \,\big\rangle\, d\underline{r}_{\,m}\,d\underline{s}_{\,n}}\\ \nonumber
&=&
\int_{\Delta_{\beta}^{n+m}}\, \mathbf{1}[z_m\,\geq \,\beta-x\,\geq\, x\, \geq\, z_{m+1}]\,
\big \langle\, \Omo\,\big|\,\QN(\unl{z}_{\,n+m})\Omo\,\big \rangle\,d\underline{z}_{n+m}.
\end{eqnarray}
For $m\,=\,n$ it follows
\begin{equation}
\Big\|\int_{\Delta_{x/2}^{n}}\QN(\unl{s}_{\,n})\Omo\, d\unl{s}_{\,n}\Big\|^2 
\leq \int_{\Delta_{\beta}^{2n}}\big|\big\langle\, \Omo\,|\,\QN(\unl{s}_{\,2n})\Omo\, 
\big\rangle \big|\,d\unl{s}_{\,2n}.
\end{equation}
\end{Lemma}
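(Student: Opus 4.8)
The plan is to establish the identity by two elementary moves: first, rewrite both integrands in terms of $\QN$ conjugated by $e^{s\Lo}$, exploiting $\Lo\Omo=0$ to turn the inner product of two strings into a single string; second, perform an explicit change of variables that glues the two simplices into one simplex of dimension $n+m$ and sidelength $\beta$. The second inequality then follows by specialising $m=n$ and invoking Cauchy--Schwarz together with positivity of $\Omo$.

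First I would use the definitions in \eqref{DefQNs}. Since $\QN(\unl r_{\,m})\Omo = e^{-r_m\Lo}\QN e^{(r_m-r_{m-1})\Lo}\QN\cdots e^{r_1\Lo}\Omo$ and $\Lo\Omo=0$, I can strip the rightmost factor $e^{r_1\Lo}$, and likewise for $\QN(\unl s_{\,n})\Omo$. Then $\big\langle \QN(\unl r_{\,m})\Omo\,|\,\QN(\unl s_{\,n})\Omo\big\rangle$ equals
$$
\big\langle \Omo\,\big|\,e^{r_1\Lo}\QN^{\,*}\cdots e^{r_m\Lo}\QN^{\,*}\,e^{-r_m\Lo}\;e^{-s_n\Lo}\QN e^{s_n\Lo}\cdots e^{-s_1\Lo}\QN e^{s_1\Lo}\,\Omo\big\rangle .
$$
Because $\QN$ is self-adjoint (Lemma \ref{Lem5d}) and $\QN(s)^{*}=\QN(s)$, the left block is a product of factors $\QN(r_j)$ read in reverse order; carefully accounting for the telescoping exponentials, the whole expression collapses to $\big\langle\Omo\,|\,\QN(\unl z_{\,n+m})\Omo\big\rangle$ for a string of ordered parameters $z_1,\dots,z_{n+m}$ built from the $s_i$'s (shifted up by $\beta-x$, so that $s_i\in[0,x/2]$ is mapped into $[\beta-x,\beta-x/2]$ — this is where the constraint $z_{m+1}\le x\le\beta-x\le z_m$ originates) and the $r_j$'s (kept in $[0,x/2]$, occupying the positions $z_{m+1},\dots,z_{n+m}$, again reversed). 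The factor $\mathbf 1[z_m\ge\beta-x\ge x\ge z_{m+1}]$ on the right-hand side exactly records that the first $m$ coordinates live in the upper block and the last $n$ in the lower block, while the ordering $z_1\le\cdots\le z_{n+m}$ inside $\Delta_\beta^{n+m}$ encodes the simplex structure within each block. So the identity is really a bookkeeping statement about a measure-preserving bijection between $\Delta_{x/2}^{m}\times\Delta_{x/2}^{n}$ and the indicated slice of $\Delta_{\beta}^{\,n+m}$; I would verify the bijection is affine with Jacobian $1$, which makes the integral identity immediate once the integrands are matched.

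For the second claim, set $m=n$ and take $x\le\beta/2$, so that $\beta-x\ge x$ and the indicator $\mathbf 1[z_n\ge\beta-x\ge x\ge z_{n+1}]$ is dominated by $1$. Writing $v:=\int_{\Delta_{x/2}^{n}}\QN(\unl s_{\,n})\Omo\,d\unl s_{\,n}$, the left-hand side of the first identity with $m=n$ is exactly $\|v\|^{2}=\langle v\mid v\rangle$; bounding the indicator by $1$ and the integrand by its absolute value gives
$$
\Big\|\int_{\Delta_{x/2}^{n}}\QN(\unl s_{\,n})\Omo\, d\unl s_{\,n}\Big\|^{2}
\le\int_{\Delta_{\beta}^{2n}}\big|\big\langle\Omo\,|\,\QN(\unl s_{\,2n})\Omo\big\rangle\big|\,d\unl s_{\,2n},
$$
as desired. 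The main obstacle I anticipate is purely combinatorial: getting the change of variables exactly right — tracking how the reversal of the $r$-block, the shift of the $s$-block, and the telescoping of the exponentials $e^{\pm s\Lo}$ conspire to produce the clean string $\QN(\unl z_{\,n+m})$ with the stated constraint — and checking that $\QN(\unl s_{\,n})\Omo$ is indeed a well-defined vector (a finite-boson state, hence analytic for $\Lo$), which is noted just before the lemma and which legitimises all the manipulations with unbounded exponentials. Everything else (self-adjointness of $\QN(s)$, $\Lo\Omo=0$, Fubini on the simplices, Cauchy--Schwarz) is routine.
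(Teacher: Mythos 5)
There is a genuine gap, and it is precisely at the point you flagged as ``purely combinatorial.'' The passage from $\langle\QN(\unl{r}_{\,m})\Omo\,|\,\QN(\unl{s}_{\,n})\Omo\rangle$ to a single string whose parameters lie in $[0,\beta]$ is not bookkeeping: it is an operator-theoretic input coming from the modular structure of the free KMS system, and your outline never invokes it.

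Concretely, your claim $\QN(s)^*=\QN(s)$ is wrong. Since $\QN(s)=e^{-s\Lo}\QN e^{s\Lo}$ with $e^{s\Lo}$ self-adjoint (not unitary), one has $\QN(s)^*=e^{s\Lo}\QN e^{-s\Lo}=\QN(-s)$. Carrying out the direct manipulation you propose (strip the rightmost exponential using $\Lo\Omo=0$, take adjoints, telescope) gives
\begin{equation*}
\big\langle\QN(\unl{r}_{\,m})\Omo\,\big|\,\QN(\unl{s}_{\,n})\Omo\big\rangle
=\big\langle\Omo\,\big|\,\QN(-r_1)\cdots\QN(-r_m)\,\QN(s_n)\cdots\QN(s_1)\Omo\big\rangle ,
\end{equation*}
a string whose parameters run monotonically through $[-x/2,x/2]$. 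No $\beta$ appears, and this is \emph{not} of the form $\langle\Omo|\QN(\unl{z}_{\,n+m})\Omo\rangle$ with $\unl{z}\in\Delta_\beta^{n+m}$. The shift of the $r$-block up to a neighbourhood of $\beta$ comes from Theorem~\ref{FreeSystem}: one uses $\langle\phi\,|\,\psi\rangle=\langle\Jg\psi\,|\,\Jg\phi\rangle$ and the modular relation $\Jg X\Omo=e^{-(\beta/2)\Lo}X^*\Omo$ for $X$ affiliated with $\Mg$, applied to $X=\QN(\unl{r}_{\,m})$ and $X=\QN(\unl{s}_{\,n})$. Combining the two factors $e^{-(\beta/2)\Lo}$ and using $\Lo\Omo=0$ yields $e^{-\beta\Lo}\QN(\unl{r}_{\,m})^*\Omo=\QN(\beta-r_1)\cdots\QN(\beta-r_m)\Omo$, and only then does the affine change of variables $y_i=\beta-r_{m-i+1}$ (which sends the $r$-simplex, not the $s$-simplex, into the upper block) produce the stated identity; your parametrisation also has the two blocks swapped. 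Without this modular input the change of variables you describe is not even a map into the indicated slice of $\Delta_\beta^{n+m}$, let alone a measure-preserving bijection. The final step (set $m=n$, recognise the left side as $\|v\|^2$, drop the indicator and take absolute values) is correct and matches the paper.
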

%
%
%
\begin{proof}
Recall Theorem \ref{FreeSystem} and Lemma \ref{Lem5d}.
Since $\Jg$ is a conjugation we have 
$\langle\, \phi\,|\, \psi \,\rangle =\langle\, \Jg \,\psi\,|\, \Jg\,\phi \,\rangle $,
and for every operator $X$, that is affiliated with $\Mg$, we have $\Jg\, X\,\Omo \,=\, e^{-\beta/2\,\Lo}\,X^*\,\Omo$.
Thus,
\begin{eqnarray} \label{eq3.17a}
\lefteqn{
\int_{\Delta_{x /2}^{n}}\, \int_{\Delta_{x/2}^{m}}
\big \langle\, \QN(\unl{r}_{\,m})\Omo\,|\,\QN(\unl{s}_{\,n})\,\Omo\, \big \rangle\, d\underline{r}_{\,m}\,d\underline{s}_{\,n}
}\\ \nonumber
&=& \int_{\Delta_{x/2}^{n}} \int_{\Delta_{x/2}^{m}}
\big \langle\, e^{-\beta/2\Lo}\,\QN(\unl{s}_{\,n})^*\,\Omo\,\big|\,e^{-\beta/2\Lo}\,\QN(\unl{r}_{\,m})^*\,\Omo\,\big \rangle\,
d\underline{r}_{\,m}\,d\underline{s}_{\,n}
\end{eqnarray}
Since $\Lo\, \Omo\,=\,0$ we have
\begin{equation*}
e^{-\beta\Lo}\,\QN(\unl{r}_{\,m})^*\,\Omo\, = \,\QN(\beta-r_1)\cdots \QN(\beta-r_m)\,\Omo.
\end{equation*}
Next, we introduce new variables for $\underline{r}$, namely
$y_i \,:=\, \beta \,-\,r_{m-i+1}$. Let $D_{x/2}^{m}\,:=\, \{ \underline{y}_{\,m}\in \R^m\,:\
\beta-x \,\leq\, y_m\,\leq \ldots \leq\, y_1\, \leq\, \beta\}$.  Thus the right side of Equation
\eqref{eq3.17a} equals
\begin{eqnarray*}
\lefteqn{
\int_{\Delta_{x/2}^{n}}\, 
\int_{D_{x/2}^{m}}\,
\big \langle\, \Omo\,\big|\,\QN(\unl{s}_{\,n})\,\QN(\unl{y}_{\,m})^*\,\Omo \,\big \rangle \,d\underline{s}_{\,n}\,d\underline{y}_{\,m}
}\\
&=&
\int_{\Delta_{\beta}^{n+m}}\, \one[z_m\,\geq\, \beta\,-\,x\,\geq\, x\, \geq\, z_{m+1}]\,
\big \langle\, \Omo\,\big|\,\QN(\unl{z}_{\,n+m})\,\Omo \,\big \rangle\,d\underline{z}_{n+m}.
\end{eqnarray*}
The second statement of the Lemma follows by choosing $n\,=\,m$.
\end{proof}
%
%
\begin{Lemma}\label{Lem1.1}
Assume $\sup_{N\in \N}\|e^{-x(\Lo+\QN)}\Omo\|\,<\,\infty$
then $\Omo\in \dom(e^{-x(\Lo+\Qg)})$ and
\begin{equation*}
\|e^{-x(\Lo+\Qg)}\Omo\|\,\leq\,\sup_{N\in \N}\|e^{-x(\Lo+\QN)}\,\Omo\|
\end{equation*}
\end{Lemma}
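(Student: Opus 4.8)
The plan is to exploit strong resolvent convergence of $\Lo+\QN$ to $\Lo+\Qg$ (Lemma \ref{Lem5d} ii)) together with a lower-semicontinuity argument for the self-adjoint operators $e^{-x(\Lo+\QN)}$. First I would recall that strong resolvent convergence of a sequence of self-adjoint operators $A_N \to A$ implies strong convergence of $f(A_N) \to f(A)$ for any bounded continuous $f$; but $e^{-x\cdot}$ is \emph{not} bounded on $\R$, so a direct application fails, and this is exactly where the domain hypothesis does its work.

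The key step is the following standard tool. For $R>0$ set $g_R(\lambda):=e^{-x\lambda}\wedge R = \min\{e^{-x\lambda},R\}$; this is bounded and continuous, so by strong resolvent convergence $g_R(\Lo+\QN)\Omo \to g_R(\Lo+\Qg)\Omo$ in norm as $N\to\infty$. Since $g_R(\Lo+\QN)\le e^{-x(\Lo+\QN)}$ as quadratic forms (on the appropriate domain), one has
\begin{equation*}
\|g_R(\Lo+\QN)\Omo\|\,\le\,\|e^{-x(\Lo+\QN)}\Omo\|\,\le\,\sup_{M\in\N}\|e^{-x(\Lo+\QM)}\Omo\|\,=:\,C\,<\,\infty.
\end{equation*}
Letting $N\to\infty$ gives $\|g_R(\Lo+\Qg)\Omo\|\le C$ for every $R>0$. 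Now let $R\to\infty$: by the functional calculus and monotone convergence,
\begin{equation*}
\|g_R(\Lo+\Qg)\Omo\|^2 = \int_{\R} |g_R(\lambda)|^2\, d\|E_\lambda\Omo\|^2 \;\uparrow\; \int_{\R} e^{-2x\lambda}\, d\|E_\lambda\Omo\|^2,
\end{equation*}
where $E$ is the spectral measure of $\Lo+\Qg$. The left side stays bounded by $C^2$, so the limiting integral is finite, which is precisely the statement $\Omo\in\dom(e^{-x(\Lo+\Qg)})$ together with $\|e^{-x(\Lo+\Qg)}\Omo\|\le C$.

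I expect the only delicate point to be the form inequality $g_R(A_N)\le e^{-xA_N}$ and the justification that $g_R(\Lo+\QN)\Omo$ really converges in norm (not merely weakly) — this needs that $\Omo$ lies in the form domain, or one can instead truncate by spectral projections $\one[\,|\Lo+\QN|\le k\,]$ and use that $\Omo$ is in the range of a suitable spectral projection in the limit, but the cleanest route is the $g_R$ truncation above since all operators involved are self-adjoint and $g_R$ is globally bounded. No restriction on the coupling is needed here; the content of the lemma is entirely the passage from the $N$-approximants to the limit, and the hypothesis $\sup_N\|e^{-x(\Lo+\QN)}\Omo\|<\infty$ is exactly what Lemmas \ref{DysExp} and \ref{Lem0.1} will supply.
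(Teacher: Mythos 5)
Your proposal is correct, and it takes a genuinely different route from the paper. The paper argues by duality: for $f\in\mathcal{C}^\infty_0(\R)$ and $\phi\in\Kg$ it sets $\psi_N:=f(\Lo+\QN)\phi$, notes $e^{-x(\Lo+\QN)}\psi_N=g(\Lo+\QN)\phi$ with $g(r)=e^{-xr}f(r)\in\mathcal{C}^\infty_0(\R)$, lets $N\to\infty$ using strong resolvent convergence to get $\psi_N\to\psi:=f(\Lo+\Qg)\phi$ and $e^{-x(\Lo+\QN)}\psi_N\to e^{-x(\Lo+\Qg)}\psi$, and then bounds $|\langle e^{-x(\Lo+\Qg)}\psi\mid\Omo\rangle|\le C\|\psi\|$; since such $\psi$ form a core of $e^{-x(\Lo+\Qg)}$ and the latter is self-adjoint, this puts $\Omo$ in its domain with the claimed bound. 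You instead cut off the exponential at height $R$, exploit that $g_R$ is bounded and continuous so $g_R(\Lo+\QN)\Omo\to g_R(\Lo+\Qg)\Omo$ strongly, transfer the uniform bound via $g_R(\lambda)^2\le e^{-2x\lambda}$, and finish with monotone convergence in the spectral measure of $\Lo+\Qg$ as $R\uparrow\infty$.

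Both arguments are correct and both hinge only on Lemma \ref{Lem5d}(ii). The paper's version buys you nothing beyond what you already prove; if anything, your route is slightly more self-contained in that it avoids identifying $\{f(\Lo+\Qg)\phi\}$ as a core of the exponential and instead uses only the spectral theorem and monotone convergence. One small point of hygiene: what is actually used is not the quadratic-form inequality $g_R(A)\le e^{-xA}$ but the pointwise inequality $g_R(\lambda)^2\le e^{-2x\lambda}$, which directly gives $\|g_R(A)\psi\|\le\|e^{-xA}\psi\|$ for $\psi\in\dom(e^{-xA})$ by the spectral calculus; phrasing it this way removes the "delicate point" you flagged, since no form domains need to be compared. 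Also note that the hypothesis $\sup_N\|e^{-x(\Lo+\QN)}\Omo\|<\infty$ implicitly asserts $\Omo\in\dom(e^{-x(\Lo+\QN)})$ for each $N$, which is exactly what you need to make sense of the right-hand side of your chain of inequalities.
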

%
%
\begin{proof}
\indent For $f\in \mathcal{C}^\infty_0(\R)$ and $\phi\in\Kg$
we define $\psi_N\,:=\,f(\Lo\,+\,\QN)\,\phi$. Obviously, 
for $g(r)\,=\,e^{-x\,r}\,f(r)\in \mathcal{C}^\infty_0(\R)$ we have
$e^{-x\,(\Lo\,+\,\QN)}\,\psi_N\,=\,g(\Lo\,+\,\QN)\,\phi $. 
Since $\Lo\,+\,\QN$ tends to $\Lo\,+\,\Qg$ 
in the strong resolvent sense as $N\to \infty$, 
we know from \cite{ReedSimonI1980} that
$\lim_{N\,\rightarrow\, \infty}\psi_N\,=\,f(\Lo\,+\,\Qg)\,\phi\,=:\, \psi$
and
\begin{equation*}
\lim_{N\rightarrow \infty}e^{-x\,(\Lo\,+\,\QN)}\,\psi_{N}
\,=\,\lim_{N\rightarrow \infty}g(\Lo\,+\,\QN)\,\phi
\,=\,g(\Lo\,+\,\Qg)\,\phi\,=\, e^{-x\,(\Lo\,+\,\Qg)}\,\psi.
\end{equation*}
Thus,
\begin{align*}
|\langle\, e^{-x(\Lo+\Qg)}\,\psi\,|\,\Omo\,\rangle |
\,&=\, \lim_{N\to \infty} |\langle\, e^{-x(\Lo+\QN)}\psi_N \,|\,\Omo\,\rangle |\\
\,&\leq\, \sup_{N\in \N}\|e^{-x\,(\Lo+\QN)}\Omo\|\, \|\psi\|,
\end{align*}
Since $\{ f(\Lo+\Qg)\,\phi\in\Kg\,:\,\phi \in \Kg,\, f\in \mathcal{C}^\infty_0(\R)\}$
is a core of $e^{-x(\Lo+\Qg)}$, we obtain $\Omo\in \dom(e^{-x(\Lo+\Qg)})$.
\end{proof}
%
%
\begin{Lemma}\label{MainEstimate}
For some $C>0$ we have 
\begin{align*}
\int_{\Delta_{\beta}^n}
&\Big|\big\langle\, \Omo\,|\, \QN(\unl{s}_{\,n})\Omo\,\big\rangle \Big|\,d\unl{s}_{\,n}\\ \nonumber
&\leq \const \,(n+1)^2\,(1+\beta)^n\,
\Big(8\unl{\eta}_1 +\frac{(8C\unl{\eta}_2)^{1/2} }{(n+1)^{(1-2\gamma)/2}} \Big)^{n},
\end{align*}
where $\unl{\eta}_1$ and $\unl{\eta}_2$ are defined in \eqref{FormFactorBound}.
\end{Lemma}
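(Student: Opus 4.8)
The strategy is to estimate the matrix element $\langle \Omo \mid \QN(\unl{s}_{\,n})\Omo\rangle$ pointwise in $\unl{s}_{\,n}$ by expanding each factor $\QN(s_j) = e^{-s_j\Lo}\QN e^{s_j\Lo}$ into its constituent creation/annihilation operators, using the $\beta$-deformed field operators $\Phi_\beta(\vec G), \Phi_\beta(\vec H), \Phi_\beta(F)$ from the definition of $\QN$ in \eqref{DefQN}. Since $\Omo = \Omel \tensor \Omf$ and $\omf$ is quasi-free, the vacuum expectation value $\langle \Omo \mid \cdots \Omo\rangle$ will, after conjugating by $e^{\pm s_j \Lo}$ (which acts on each mode by the phase $e^{\pm s_j(\alpha(k) - \alpha(k'))}$ and on the particle factor by $e^{\pm s_j \Lael}$, with $\Lael = \Hael\tensor\one - \one\tensor\ovl{H}_{el}$), reduce by Wick's theorem \eqref{QuasiFreeExp} to a sum over pairings of the $2n$ or fewer field operators. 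Each pairing contributes a product of two-point functions $\omf(a^\sharp(\cdot)a^\sharp(\cdot))$, each of which is controlled by $\int(\alpha^{-1} + 1)(\cdots)\,dk$ — precisely the integrals appearing in $\unl\eta_1$ and $\unl\eta_2$ in \eqref{FormFactorBound} — while the remaining particle-sector operators are bounded using $\|\vec G(k)\|, \|\vec H(k)\|$ directly (these are bounded families) and, for the $F$-contributions, using the relative bound $\|F(k)\Haelp^{-\gamma}\|$ together with $\Haelp^{-(1/2-\gamma)}$-type smoothing absorbed by the Gibbs factor $e^{-\beta\Hael}$ in $\Omel$.

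\textbf{Key steps.} First I would write $\QN = \sum_\sharp (\text{field monomials})$ and expand the product $\QN(s_n)\cdots\QN(s_1)$, so that $\langle\Omo\mid\QN(\unl s)\Omo\rangle$ becomes a finite sum of terms, each of the form (particle matrix element) $\times$ (boson vacuum expectation of a product of creation/annihilation operators carrying momentum-dependent phases). Second, I would apply Wick's theorem \eqref{QuasiFreeExp} to the boson part: only fully-paired terms survive, and the number of pairings of $2m$ objects is $(2m-1)!! \le C^m m!$-ish but more usefully bounded by the combinatorial factor already hinted at ($\sqrt{(2m)(2m+1)}$-type estimates, since $\QN(\unl r)\Omo$ has at most $2m$ bosons). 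Third, I would bound each paired two-point contribution: a $G$- or $H$-pairing yields a factor $\le \sqrt{\unl\eta_1}$ (up to the explicit constants $2+4\alpha^{-1}$), and an $F$-pairing similarly $\le \sqrt{\unl\eta_2}$, after using the particle-sector smoothing. The phases $e^{\pm s_j(\alpha(k)-\alpha(k'))}$ and $e^{\pm s_j \Lael}$ are unimodular, hence harmless for the norm bounds, but they are exactly what makes the time-ordered integral over the simplex $\Delta_\beta^n$ converge; integrating $|\cdots|$ over $\Delta_\beta^n$ gives the volume factor $\beta^n/n!$, which combined with the $n!$-type count of how operators can be distributed across the $n$ slots produces the clean $(1+\beta)^n$ in the statement. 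Fourth, I would collect: the $G,H$-factors contribute $(8\unl\eta_1)^n$-type terms with a harmless combinatorial prefactor, while each $F$-factor costs an extra power of $(n+1)^{-1/2}$ in the exponent $1-2\gamma$ — this is the origin of the asymmetric term $(8C\unl\eta_2)^{1/2}/(n+1)^{(1-2\gamma)/2}$. Summing over the mixed ways to choose which of the $n$ factors are of $\Phi_\beta(F)$-type versus $\Phi_\beta(\vec G)\Phi_\beta(\vec H)$-type via the binomial theorem yields the bracketed sum $\big(8\unl\eta_1 + (8C\unl\eta_2)^{1/2}(n+1)^{-(1-2\gamma)/2}\big)^n$, and the polynomial prefactor $(n+1)^2$ absorbs the leftover square-root-of-boson-number factors.

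\textbf{Main obstacle.} The hard part will be the bookkeeping of the $F$-contributions and tracking the precise power of $(n+1)$. Because $\QN$ mixes a quadratic term $\Phi_\beta(\vec G)\Phi_\beta(\vec H)$ with a linear term $\Phi_\beta(F)$, the number of field operators in $\QN(\unl s_{\,n})$ is not fixed but ranges between $n$ and $2n$; keeping the Wick-pairing combinatorics uniform in this choice, and showing that each linear $F$-slot truly gains the factor $(n+1)^{-(1-2\gamma)/2}$ rather than just $(n+1)^{-1/2}$ or nothing, requires carefully exploiting that the $F$-form factor is only relatively $\Haelp^\gamma$-bounded and that the "missing" power $\Haelp^{1/2 - \gamma}$ must be distributed among the $e^{-s_j \Lael}$ factors and the Gibbs weight — this is essentially an interpolation/operator-norm argument on the particle sector that has to be done uniformly in $N$. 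The bound must be $N$-independent throughout, which is why one works with the norms $\|F(k)\Haelp^{-\gamma}\|$ of the original (not regularized) form factors, noting $P_N$ and the ultraviolet cutoff only decrease these norms. Once this uniform pointwise bound is in hand, integrating over $\Delta_\beta^n$ is routine.
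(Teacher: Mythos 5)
Your high-level plan (expand $\QN(\unl s_n)$ into field monomials, Wick out the boson expectation, track $\unl\eta_1$- and $\unl\eta_2$-type integrals, count pairings, and then sum over which slots carry $\Phi_\beta(\vec G)\Phi_\beta(\vec H)$ versus $\Phi_\beta(F)$ via a binomial-type bookkeeping) is the right skeleton and matches the paper's. But there is a genuine gap at the core technical step, and one outright false claim.

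The false claim: you assert that the conjugating factors $e^{\pm s_j(\alpha(k)-\alpha(k'))}$ and $e^{\pm s_j\Lel}$ are \emph{unimodular} and therefore harmless. They are not. The parameters $s_j$ are real, so on the boson side $e^{\pm s_j\alpha(k)}$ is a real exponential (growing or decaying, not a phase); it is bounded only because $0\le s_j\le s_{j-1}\le\beta$ on the simplex, and taking absolute values of the two-point functions gives precisely the $\coth(\beta\alpha(k)/2)$ weight that defines $\nu$ in Lemma~\ref{Lem-N1}. On the particle side $e^{-\beta(s_{i-1}-s_i)\Hael}$ is a semigroup, not a unitary, and its smoothing is the only thing that makes the $F$-terms (which are merely $\Haelp^\gamma$-bounded) controllable.

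The gap: you write that ``integrating $|\cdots|$ over $\Delta_\beta^n$ gives the volume factor $\beta^n/n!$.'' That would be correct if the integrand were bounded uniformly in $\unl s_n$, but it is not: once you extract a factor $\Haelp^\gamma$ for each $F$-slot from the adjacent Gibbs semigroup, the pointwise bound acquires singular weights $(s_{i-1}-s_i)^{-\gamma}$. The paper handles this by applying H\"older's inequality for the trace on $\Hel$ with simplex-dependent exponents $p_i=(s_{i-1}-s_i)^{-1}$ (so that $\sum p_i^{-1}=1$), which produces the weight $C_\kappa(\unl s)=(1-s_1+s_n)^{-\alpha_1}\prod(s_i-s_{i+1})^{-\alpha_i}$, and then evaluates $\int_{\Delta_1^n}C_\kappa\,d\unl s_n$ as a Dirichlet integral (Lemma~\ref{Lem0.2}), whose Gamma-function value, together with Stirling's formula, is exactly where the factor $(n+1)^{-(1-2\gamma)n_2}$ and hence $(8C\unl\eta_2)^{1/2}/(n+1)^{(1-2\gamma)/2}$ in the statement comes from. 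You gesture at this obstacle with ``an interpolation/operator-norm argument on the particle sector'' but do not supply the actual mechanism, and your proposed replacement (distribute $\Haelp^{1/2-\gamma}$ among phases and the Gibbs weight, then integrate a bounded function over the simplex) does not produce the $\gamma$-dependent power of $n+1$ and would stall precisely here.

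A smaller point: your pairing count conflates two different estimates. The $(2m)!/(2^m m!)$ pairing count appears here in Lemma~\ref{MainEstimate}; the $\sqrt{(2m)(2m+1)}$-type bound is used in the different Lemma~\ref{DysExp} for a single annihilation operator on a state of at most $2m$ bosons. These play different roles and should not be merged.
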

%
%
\begin{proof}[Proof of \ref{MainEstimate}]
First recall the definition of $\QN$ and $\QN(\unl{s}_{\,n})$ in
Equation \eqref{DefQN} and Equation \eqref{DefQNs}, respectively. Let
\begin{equation*}
\int_{\Delta_{\beta }^{n}}\big|\big\langle\, \Omo
\,|\, \QN(\unl{s}_{\,n})\,\Omo\,\big\rangle \big|\,d\unl{s}_{\,n}
\,=:\, \int_{\Delta_{1}^{n}}\beta^n\,J_n(\beta,\unl{s})\,d\unl{s}_{\,n},
\end{equation*}
The functions $J_n(\beta,\unl{s})$ clearly depends on $N$,
but since we want to find an upper bound independent of $N$,
we drop this index. Let $W_1=\Phi(\vec G )\,\Phi(\vec H )\,+\,\hc\,$,
$W_2:=\Phi(F)$ and $W:=W_1+W_2$.
By definition of $\omo$ in \eqref{FreeSystem}, see also \eqref{AWRep}, 
we obtain
\begin{align*}
J_{\,n}(\beta,\,\unl{s}_{\,n})\,&=\, 
\omo\Big(\big(e^{-\beta\, s_n\,\Ho}\,W\,e^{\beta s_n\Ho}\big)
\cdots\big(e^{-\beta \,s_1\,\Ho}\,W\,e^{\beta \,s_1\,\Ho}\big)\Big)\\
&=(\Zel)^{-1}\sum_{\kappa\in \{ 1,\,2\}^n}
\omf\Big(\Tr_{\Hel}\big\{e^{-\beta\, \Hael}\big(e^{-\beta \,s_n\,\Ho}\,W_{\kappa(n)}
\,e^{\beta\, s_n\,\Ho}\big)\cdots\\
&\phantom{(\Zel)^{-1}\sum_{\kappa\in \{ 1,\,2\}^n}
\omf\Big(\Tr_{\Hel}\big\{e^{-\beta\, \Hael}\ }
\cdots\big(e^{-\beta\, s_1\,\Ho}\,W_{\kappa(1)}\,e^{\beta\, s_1\,\Ho}\big)\big\}\Big)
\end{align*}
By definition of $\omf$ it suffices to consider expressions with an even
number of field operators. In the next step we sum over all
expression, where $n_1$ times $W_1$ occurs and $2n_2$ times $W_2$.
The sum of $n_1$ and $n_2$ is denoted by $m$. For fixed $n_1$ and $n_2$
the remaining expressions are all expectations in $\omf$ of $2m$ field operators.
In this case the expectations in $\omf$ can be expressed by an integral over
$\R^{2m}\times \{\pm\}^{2m}$ with respect to $\nu$, which is defined in Lemma \ref{Lem-N1}
below.
To give a precise formula we define
$$M(m_1,\,m_2)\,=\, \{ \kappa\in \{1,\,2\}^n\, 
:\,\#\kappa^{-1}(\{i\})\,=\,m_i,\quad i=1,\,2\}.$$
Thus we obtain
\begin{align}\label{eq3.41}
J_n&(\beta,\,\unl{s}_{\,n})
\,=\,(\Zel)^{-1}\sum_{\stackrel{(n_1,\,n_2)\in \N^2}{n_1\,+\,2n_2\,=\,n}}\qquad
\sum_{\stackrel{\kappa\in M(n_1,\,2n_2)}{m\,:=\,n_1\,+\,n_2}}
\int\,\nu(d\unl{k}_{\,2m}\tensor d\unl{\tau}_{\,2m})\\ \nonumber
&\phantom{=}  
  \Tr_{\Hel}\big\{ e^{-(\beta\,-\,\beta(s_1\,-\,s_{2m}))\Hael}I_{2m}
e^{-\beta\,(s_{2m-1}\,-\,s_{2m})\,\Hael}
\cdots
e^{-\beta\,(s_1\,-\,s_{2})\,\Hael}\,I_{1}\big\}\,,
\end{align}
Of course $I_{j}$ depends on $\unl{k}_{\,2m}\times \unl{\tau}_{\,2m}$,
namely for $\kappa(j)\,=\,1,\,2$ we have
\begin{eqnarray*}
I_j 
&\,=\,&
\begin{cases}
I_j(m,\,\tau,\,m',\,\tau'),& \kappa(j)\,=\,1\\
I_j(m,\,\tau),& \kappa(j)\,=\,2,
\end{cases} 
\end{eqnarray*}
where $(m,\,\tau),\, ( m',\,\tau')\in \{ (k_j,\tau_j)\,:\, j=1,\ldots,m\}$.
For $\kappa(j)\,=\,1$ we have that
\begin{eqnarray*}
I_j(m,\,+,\,m',\,-)&\,=\,& \vec{G}^*(m)\, \vec{H}(m')\,+\,\vec{H}^*(m)\, \vec{G}(m')\\ \nonumber
I_j(m,\,-,\,m',\,+)&\,=\,& \vec{G}(m)\, \vec{H}^*(m')\,+\,\vec{H}(m)\, \vec{G}^*(m')\\ \nonumber
I_j(m,\,+,\,m',\,+)&\,=\,& \vec{G}^*(m)\, \vec{H}^*(m')\,+\,\vec{H}^*(m)\, \vec{G}^*(m')\\ \nonumber
I_j(m,\,-,\,m',\,-)&\,=\,& \vec{G}(m)\, \vec{H}(m')\,+\,\vec{H}(m)\, \vec{G}(m')
\end{eqnarray*}
and for $\kappa(j)\,=\,2$ we have that
\begin{eqnarray*}
I_j(m,\,+)&\,=\,& F^*(m)\\ \nonumber
I_j(m,\,-)&\,=\,& F(m).
\end{eqnarray*}
In the integral \eqref{eq3.41} we insert for $(m,\,\tau)$ and $(m',\,\tau')$ 
in the definition of $I_j$ from left to right $k_{2m},\,\tau_{2m},\ldots,\,k_1,\,\tau_1$.\\
\indent For fixed $(\unl{k}_{\,2m},\unl{\tau}_{\,2m})$ the
integrand of  \eqref{eq3.41} is a trace of a product of $4m$
operators in $\Hel$. We will apply Hölder's-inequality for the trace, i.e.,
\begin{equation*}
|\Tr_{\Hel}\{A_{2m}\,B_{2m}\cdots A_1\,B_1\}|\leq \prod_{j=1}^{2m}\|B_{j}\|_{\mathcal{B}(\Hel)}
\cdot \prod_{j=1}^{2m}\Tr_{\Hel}\{A_i^{p_j}\}^{p_j^{-1}}.
\end{equation*}
In our case $p_i\,:=\,(s_{i-1}\,-\,s_i)^{-1}$ for $i=2,\ldots,\,2m$ 
and $p_1\,:=\,(1\,-\,s_1\,+\,s_{2m})^{-1}$ and
\begin{equation*}
(A_j,\, B_j)\,:=\,
\begin{cases}
\big( e^{-\beta\, p_j^{-1}\, \Hael},\, I_j(m,\,\tau,\,m',\,\tau')\big),& \kappa(j)\,=\,1\\
\big( e^{-\beta\, p_j^{-1}\, \Hael}\,\Haelp^\gamma,\, \Haelp^{-\gamma}\,I_j(m,\tau,)\big),& \kappa(j)\,=\,2
\end{cases}.
\end{equation*}
We define
\begin{align*}
\eta_1(k) \,&=\,\max\big\{ \|\vec G(k)\|_{\mathcal{B}(\Hel)^r},\, \|\vec H(k)\|_{\mathcal{B}(\Hel)^r} \big\}\\
\eta_2(k) \,&=\,\max\big\{\|F(k)\,\Haelp^{-\gamma}\|_{\mathcal{B}(\Hel)},
\,\|F^*(k)\,\Haelp^{-\gamma}\|_{\mathcal{B}(\Hel)}\big\}.
\end{align*}
By definition of $B_j$ we have
\begin{equation} \label{UpperBoundB}
\|B_{j}\|_{\mathcal{B}(\Hel)}\le
\begin{cases} 
\eta_1(m)\eta_1(m'),& \kappa(j)=1\\
\eta_2(m),& \kappa(j)=2
\end{cases}.
\end{equation}
Furthermore,
\begin{align*}
\Tr_{\Hel}\{A_i^{p_j}\}^{p_j^{-1}}&=\Tr_{\Hel}\big\{ e^{-\beta \Hael}\,\Haelp^{p_j\gamma}\big\}^{p_j^{-1}}\\
&\leq \| e^{-\epsilon \Hael}\,\Haelp^{p_j\gamma}\|^{p_j^{-1}}_{\Hel}\, 
\Tr_{\Hel}\big\{ e^{-(\beta-\epsilon)\, \Hael}\big\}^{p_j^{-1}},\quad k(j)=2
\end{align*}
Let $E_{gs}\,:=\,\inf\, \sigma(\Hael)$. The spectral theorem for self-adjoint operators implies
\begin{equation*}
 \| e^{-\epsilon\, \Hael}\,\Haelp^{p_i\,\gamma}\|^{p_i^{-1}}_{\Hel} 
 \,\leq\, \sup_{r\,\geq \,E_{gs}}\, e^{-\epsilon\, p_i^{-1} \,r}(r\,-\,E_{gs}\,+1)^{\gamma}
   \,\le\,  \epsilon^{-\gamma}\,p_i^\gamma\, e^{-\epsilon \,p_i^{-1}\,(E_{gs}\,-\,1)}.
\end{equation*}
Inserting this estimates we get 
\begin{align*}
  \Tr_{\Hel}&\big\{ e^{-(\beta\,-\,\beta(s_1\,-\,s_{2m}))\Hael}I_{2m}
e^{-\beta\,(s_{2m-1}\,-\,s_{2m})\,\Hael}
\cdots
e^{-\beta\,(s_1\,-\,s_{2})\,\Hael}\,I_{1}\big\}\\
&\le C_{\kappa}(\unl{s}_{n}) 
\prod_{j=1}^{2m}\|B_j\|_{\mathcal{B}(\Hel)}
\end{align*}
where
\begin{equation}
 C_{\kappa}(\unl{s}_{n}):=(1-\,s_1\,+\,s_{n})^{-\alpha_1}
\prod_{i=1}^{n-1}(s_i\,-\,s_{i+1})^{-\alpha_i}
\end{equation}
and
\begin{equation}
\alpha_i=\begin{cases}
0,&\kappa(i)=1\\
1/2,&\kappa(i)=2
\end{cases}
\end{equation}
Now, we recall the definition of $\nu$. Roughly speaking,
one picks a pair of  variables  $(k_i,k_j)$
and integrates over 
$\delta_{k_i,k_j}\coth(\beta/2\alpha(k_i))\,dk_i dk_j$.
Subsequently one picks the next pair and so on. At the end
one sums up all $\frac{(2m)!}{2^m\, m!}$ pairings and 
all $4^m$ combinations of $\unl{\tau}_{\,2m}$. 
Inserting Estimate \eqref{UpperBoundB} and
that
\begin{align*}
\int &\eta_\nu(k)\eta_{\nu'}(k)\coth(\beta/2\alpha(k))\,dk \leq (1+\beta^{-1})\unl{\eta}_{\,\nu}^{1/2}\unl{\eta}_{\,\nu'}^{1/2},
\end{align*}
we obtain
\begin{align*}
|J_n(\beta,\,\unl{s})|
&\leq \frac{(1+\beta^{-1})^n}{\Zel} \sum_{\stackrel{(n_1,\,n_2)\in \N_0^2}{n_1\,+\,2n_2\,=\,n}}
\ \sum_{\stackrel{\kappa\in M(n_1,\,2n_2)}{m\,:=\,n_1\,+\,n_2}} 
(\,\unl{\eta}_1)^{n_1}\,(C\unl{\eta}_2)^{n_2}\,\frac{(2m)!2^m}{ m!}C_\kappa(\unl{s})
\end{align*}
By Lemma \ref{Lem0.2} below and since $(2m)!/(m!)^2\le 4^m$ we have
\begin{align*}
\int_{\Delta_{\beta }^{n}}&\big|\big\langle\, \Omo
\,|\, \QN(\unl{s}_{\,n})\,\Omo\,\big\rangle \big|\,d\unl{s}_{\,n}\\
&\,\le \const (1+\beta)^n \sum_{\stackrel{(n_1,\,n_2)\in \N_0^2}{n_1\,+\,2n_2\,=\,n}}
     { n \choose n_1}  \frac{(8\unl{\eta}_1)^{n_1}\,(8C'\unl{\eta}_2)^{n_2}}{(n+1)^{(1-2\gamma)\,n_2 -2} }  
\end{align*}
This completes the proof.
\end{proof}
%
%
%
%
\section{The Harmonic Oscillator}\label{HarmOsc}
Let $L^2(X,\,d\mu)=L^2(\R)$ and $\Hael=: \Hosc:= -\Delta_q+\Theta^2 q^2$
be the one dimensional harmonic oscillator and $\hph=L^2(\R^3)$. We define
\begin{equation}
\Hg\,=\, \Hosc \,+\, \Phi(F) \,+\, \Haf,\qquad \Haf \,:= \, d\Gamma(|k|),
\end{equation}
where $\Phi(F)\,=\, q\, \cdot \,\Phi(f)$, with $\lambda\,(|k|^{-1/2}\,+\,|k|^{1/2})\,f\in L^2(\R^3)$.\\
\indent $\Hosc$ is the harmonic oscillator, the form-factor $F$ comes from
the dipole approximation.\\
The Standard Liouvillean for this model is denoted by $\mathcal{L}_{osc}$.
Now we prove Theorem \ref{Thm4a}.
\begin{proof}
We define the creation and annihilation operators for the electron.
\begin{gather}\label{eq4.1}
\be^*\,=\,\frac{\Theta^{1/2}\,q\,-\,\imath \, \Theta^{-1/2}\, p}{\sqrt{2}},
\quad \be\,=\,\frac{\Theta^{1/2}\,q\,+\,\imath\, \Theta^{-1/2}\,p}{\sqrt{2}},\quad p\,=\,-\imath \,\partial_x,\\
\Phi(c)\,=\, c_1\,q\,+\,c_2 \, p,\quad \textrm{for}\ c\,=\,c_1\,+\,\imath \,c_2\in \C,\ c_i\in \R.
\end{gather} %
These operators fulfill the CCR-relations and the harmonic-
oscillator is the number-operator up to constants.
\begin{gather}\label{eq4.2}
[\be,\,\be^*]\,=\,1,\qquad [\be^*,\,\be^*]\,=\,[\be,\,\be]\,=\,0,\qquad \Hosc\,=\,\Theta \be^*\,\be\,+\,\Theta/2,\\
[\Hosc,\,\be]\,=\,-\Theta \,\be,\qquad [\Hosc,\,\be^*]\,=\,\Theta \,\be^*.
\end{gather}
The vector $\Omega\,:=\, \left(\frac{\Theta}{\pi}\right)^{1/4} \,e^{-\Theta\, q^2\,/2}$ 
is called the vacuum vector. Note, that one can identify $\mathcal{F}_b[\C]$ with $L^2(\R)$, 
since $\linhull\{ (A^*)^n\,\Omega\, | \, n\in \N^0\}$ is dense
in $L^2(\R)$. It follows, that $\omosc$ is quasi-free, as a state over $\We(\C)$ and
\begin{equation}
\omosc(\We(c))\,=\, (\Zel)^{-1}\, \Tr_{\Hel}\{ e^{-\beta \Hael} \,\We(c)\}
\,=\, \exp\big( -1/4\, \coth(\beta\, \Theta/2)\,|c|^2\big),
\end{equation}
where $\Zel \,=\, \Tr_{\Hel}\{ e^{-\beta\,\Hel} \}$ is the partition function for $\Hel$.\\
\indent First, we remark, that Equation \eqref{DefQNs} is defined for this
model without regularization by $P_N \,:=\, \one [\Hael\,\leq\, N]$.
Moreover we obtain from Lemma \ref{Lem0.1}, that
\begin{equation}
\Big\|\int_{\Delta_{\beta/2}^{n}}\Qg(\unl{s}_{\,n})\Omo \,d\unl{s}_{\,2n}\Big\|^2
\,\leq \,
\int_{\Delta_{\beta}^{2n}}\big|\big\langle\, \Omo\,|\, 
\Qg(\unl{s}_{\,2n})\,\Omo\,\big\rangle \big|\,d\unl{s}_{\,2n}\,=:\,
h_{2n}(\beta,\, \lambda).
\end{equation}
To show that $\Omg \in \dom(e^{-\beta/2 \,(\Lo\,+\,\Qg)})$ is suffices to
prove, that $\sum_{n\,=\,0}^\infty h_{2n}(\beta,\,\lambda)^{1/2}\,<\,\infty$.
We have
\begin{gather}\label{eq4.3}
h_{2n}(\beta,\, \lambda)\,=\, \frac{(-\beta\,\lambda)^{2n}}{\Zel} \int_{\Delta_1^{2n}}\, 
\omosc\big( \big( e^{-\beta \,s_{2n}\, \Hael}\,q\,e^{\beta\, s_{2n}\, \Hael}\big) \\ \nonumber
\cdots \big( e^{-\beta\, s_1\,\Hael}\,q\,e^{\beta \,s_1\,\Hael}\big)\big)\\ \nonumber
\cdot  \omf\big((e^{-\beta\, s_{2n}\, \Haf}\,\Phi(f)\,e^{\beta \,s_{2n} \,\Haf})
\cdots (e^{-\beta \,s_1\, \Haf}\,\Phi(f)\,e^{\beta \,s_1\, \Haf})\big)\, d\underline{s}_{\,2n}.
\end{gather}
Moreover, we have
\begin{align}\nonumber\label{eq4.4y}
e^{-\beta\, s_i \,\Hael} \,q\, e^{\beta\, s_i \,\Hael}
&= (2\Theta)^{-1/2}\big(e^{-\beta\, \Theta\, s_i}\,A^*\,+\,e^{\beta \,\Theta \,s_i}\,A\,\big)\\
e^{-\beta\, s_i \,\Haf}\,\Phi(f)\,e^{\beta\, s_i\, \Haf}&= 2^{-1/2}
\Big( a^*(e^{-\beta\, s_i\,|k|}\,f)+ a(e^{\beta\, s_i\,|k|}\,f)\Big).
\end{align}
Inserting the identities of Equation \eqref{eq4.4y} in Equation \eqref{eq4.3}
and applying Wick's theorem \cite[p. 40]{BratteliRobinson1996} yields
\begin{align}\nonumber
h_{2n}& (\beta,\,\lambda)
= (\beta\,\lambda)^{2n}  \int_{\Delta_1^{2n}}
\sum_{P\in \mathcal{Z}_2}\ \prod_{\{i,\,j\}\in P} 
K_{osc}(|s_i\,-\,s_j|,\,\beta)\\ \nonumber
&\phantom{\sum_{P\in \mathcal{Z}_2}\ \prod_{\{i,\,j\}\in P} 
K_{osc}(|s_i\,-}
\cdot\sum_{P'\in \mathcal{Z}_2}\ \prod_{\{k,\,l\}\in P'} 
K_{f}(|s_k-s_l|,\,\beta)\, d\unl{s}_{\,2n}\\ \label{eq4.6}
&=\frac{(\beta\,\lambda)^{2n}}{(2n)!} \int_{[0,\,1]^{2n}}
\sum_{P,\,P'\in \mathcal{Z}_2} \prod_{\stackrel{\{i,\,j\}\in P}{\{k,\,l\}\in P'}} 
 K_{osc}(|s_i-s_j|,\,\beta)\,K_{f}(|s_k-\,s_l|,\beta)\, d\unl{s}_{\,2n},
\end{align}
where for $k\,<\,l$ and $i\,<\,j$, such as 
\begin{eqnarray*}
K_{f}(|s_k\,-\,s_l|,\,\beta)
&\,:= \,&\omf((e^{-\beta\, s_k \,\Haf}\,\Phi(f)\,e^{\beta \,s_k \,\Haf})\,
(e^{-\beta \,s_l \,\Haf}\,\Phi(f)\,e^{\beta \,s_l\, \Haf}))\\
K_{osc}(|s_i\,-\,s_j|,\,\beta)
&\,:=\,&\omosc(e^{-\beta\, s_i\, \Hael}\,q\,e^{\beta \,s_i \,\Hael}\,
e^{-\beta \,s_j \,\Hael}\,q\,e^{\beta \,s_j\, \Hael}). 
\end{eqnarray*}
The last equality in \eqref{eq4.6} holds, 
since the integrand is invariant
with respect to a change of the axis of coordinates.\\
\indent We interpret two pairings $P$ and $P'\in \mathcal{Z}_2$ as an
indirected graph $G\,=\,G(P,\,P')$, where $M_{2n}\,=\,\{1,\ldots,\,2n\}$
is the set of points. Any graph in $G$ has two kinds of lines,
namely lines in $L_{osc}(G)$, which belong to elements
of $P$ and lines in $L_{f}(G)$, which belong to elements
of $P'$.\\
\indent Let $\mathcal{G}(A)$ be the set of undirected
graphs with points in $A\subset M_{2n}$, such that
for each point "i" in $A$, there is exact one
line in $L_f(G)$, which begins in "i", and
exact one line in $L_{osc}(G)$, which begins
with "i". $\mathcal{G}_c(A)$ is the set of
connected graphs. We do not distinguish, if
points are connected by lines in $L_f(G)$ or by
lines in $L_{osc}(G)$.\\
Let
\begin{align*}
\mathcal{P}_k\, :=\, \Big\{ P\,:&\, P=\{A_1,\ldots,\, A_k \},\ 
\emptyset\not=A_i\subset M_{2n},\\
\,
& A_i\cap A_j \,=\,\emptyset \textrm{ for } i\,\not=\,j,\,
\bigcup_{i\,=\,1}^k\, A_i\,=\, M_{2n}\Big\}
\end{align*}
be the family of  decompositions of $M_{2n}$ in $k$ 
disjoint set. It follows
\begin{align}\nonumber\label{eq4.7}
h_{2n}(\beta,\, \lambda)&\,=\,
\frac{(\beta\,\lambda)^{2n}}{(2n)!}\,\sum_{G\in \mathcal{G}(M_{2n})} \int_{M_{2n}}
 \prod_{\stackrel{\{i,j\}\in L_{osc}(G)}{\{k,\,l\}\in L_{f}(G)}} 
 K_{osc}(|s_i\,-\,s_j|,\,\beta)\\ \nonumber
&\phantom{\,=\,} 
K_{f}(|s_k\,-\,s_l|,\,\beta)\, d\unl{s}_{\,n}\\ \nonumber
&=\frac{(\beta\lambda)^{2n}}{(2n)!}\ \sum_{k\,=\,1}^{2n}\ \sum_{ \{A_1,\ldots,\, A_k\}\in \mathcal{P}_k }
 \ \sum_{\stackrel{(G_1,\ldots,\,G_k)}{G_a\in  \mathcal{G}_c(A_a)}}\prod_{a\,=\,1}^k J(G_a,\,A_a,\,\beta)\\ 
&=\frac{(\beta\lambda)^{2n}}{(2n)!} \ \sum_{k\,=\,1}^{2n}\ \frac{1}{k!} \ 
\sum_{ \stackrel{A_1,\ldots,\,A_k\subset M_{2n},}{ \{A_1,\ldots,\, A_k\}\in \mathcal{P}_k} }\ 
\sum_{\stackrel{(G_1,\ldots,\,G_k)}{G_a\in  \mathcal{G}_c(A_a)}}\prod_{a\,=\,1}^k \,J(G_a,\,A_a,\,\beta),               
\end{align}
where
\begin{equation}\label{eq4.8}
 J(G_a,\,A_a,\,\beta)\,:=\,\int_{A_a}\prod_{\stackrel{\{i,\,j\}\in L_{osc}(G_a)}{\{k,l\}\in L_{f}(G_a)}} 
 K_{osc}(|s_i\,-\,s_j|,\,\beta)  K_{f}(|s_k\,-\,s_l|,\,\beta)\,d\underline{s}.
\end{equation}
$\int_{A_a}\,d\unl{s}$ means, $\int_{-1}^1\,ds_{j_1}\int_{-1}^1\,ds_{j_2}\ldots \int_{-1}^1\,ds_{j_m}$, where
$A_a\,=\,\{ j_1,\ldots,\,j_m\}$ and  $\#A_a\,=\,m$.\\
\indent From the first to the second line we summarize terms with graphs,
having connected components containing the same set of points.
From the second to the third line the order of the components
is respected, hence the correction factor $\frac{1}{k!}$ is introduced.
Due to Lemma \ref{Lem1} the integral depends only on
the number of points in the connected graph, i. e. 
$J(G,\,A,\,\beta)\,=\, J(\#A,\,\beta)$. Moreover, Lemma \ref{Lem1}
states that 
$\beta^{\#A}\cdot J(\#A,\,\beta)\,
\leq\,(2\||k|^{-1/2}\,f\|_2 \,(\Theta\, \beta)^{-1})^{\#A}\,(C\,\beta\,+\,1).$
To ensure that $\mathcal{G}_c(A_a)$ is not empty, $\#A_a$ must be even.
For $(m_1,\ldots,\,m_k)\in \N^k$ with $m_1\,+\cdots+\,m_k\,=\,n$ we obtain
\begin{equation}\label{eq4.9}
\sum_{ \stackrel{A_1,\ldots,\,A_k\subset M_{2n}, \#A_i\,=\,2m_i}
{ \{A_1,\ldots,\, A_k\}\in \mathcal{P}_k}} 1 
\,=\, \frac{(2n)!}{(2m_1)!\cdots(2m_k)!}.
\end{equation}
\indent Let now be $A_a\subset M_{2n}$ with $\#A_a\,=\,2m_a\,>\,2$ fixed. 
In $G_a$ are $\#A_a$ lines in $L_{osc}(G_a)$, since such lines have
no points in common, we have $\frac{(2m_a)!}{m_a!\, 2^{m_a}}$ choices. Let now
be the lines in $L_{osc}(G_a)$ fixed. We have now 
$\big((2m_a\,-\,2)(2m_a\,-\,4)\cdots 1\big)$ choices for
$m_a$ lines in $L_{f}(G_a)$, which yield a connected graph. Thus
\begin{equation}\label{eq4.10}
\sum_{G_a\in  \mathcal{G}_c(A_a),}\, 1
\,=\, \frac{(2m_a)!}{m_a!\, 2^{m_a}}\big((2m_a\,-\,2)\,(2m_a-\,\,4)\cdots 1\big)
\,=\, \frac{(2m_a)!}{2m_a }.
\end{equation}
For $\#A_a\,=\,2$ exists only one connected graph. We obtain
for $h_{2n}$
\begin{eqnarray}\label{eq4.11}
\lefteqn{
h_{2n}(\beta,\, \lambda)\,=\,(\lambda)^{2n} 
\sum_{k\,=\,1}^{2n}\frac{1}{k!} \sum_{ \stackrel{(m_1,\ldots,\,m_k)\in \N^k}{m_1\,+\ldots+\,m_k\,=\,n}}
\prod_{a\,=\,1}^k \frac{J(2m_a,\,\beta)(\beta^2)^{m_a}}{2m_a} }\\ \nonumber
 &\leq&(2\Theta^{-1}\, \||k|^{-1/2}\,f\|\,\lambda)^{2n} 
 \sum_{k\,=\,1}^{2n}\frac{1}{k!} \sum_{ \stackrel{(m_1,\ldots,\,m_k)\in \N^k}{m_1\,+\ldots+\,m_k\,=\,n}}
 \prod_{a\,=\,1}^k \frac{(C\,\beta\,+\,1)}{2m_a}\\ \nonumber
 &\leq&(2\Theta^{-1}\,\||k|^{-1/2}\,f\|\, \lambda)^{2n} \sum_{k\,=\,1}^{2n}
 \frac{\big((C\,\beta\,+\,1)/2\sum_{m\,=\,1}^n\frac{1}{m}\big)^k}{k!}.
\end{eqnarray} 
Since the $\sum_{m=1}^n\frac{1}{m}$ can be considered as a lower Riemann sum
for the integral $\int_1^{m\,+\,1} r^{-1}\,dr$,
we have $\sum_{m\,=\,1}^n\frac{1}{m}\,\leq \,\ln(n+1)$. Thus,
\begin{eqnarray}\label{eq4.11a}
h_{2n}(\beta,\, \lambda)
 &\leq& (2\Theta^{-1} \,\||k|^{-1/2}\,f\|\, \lambda)^{2n} 
 \sum_{k=1}^{2n}\frac{\big((C\,\beta\,+\,1)/2\,\ln(n+1)\big)^k}{k!}\\ \nonumber
 &\leq & (2\Theta^{-1}\, \||k|^{-1/2}\,f\|\, \lambda)^{2n} (n\,+\,1)^{(C\,\beta\,+\,1)/2}.
\end{eqnarray}
Since  $2 |\lambda|\, \||k|^{1/2}\,f\|\,<\,\Theta$ the series 
$\sum_{n=0}^\infty h_{2n}(\beta,\, \lambda)^{1/2}$ 
converges absolutely for all $\beta\,>\,0$.
It follows, that 
$$e^{-\beta/2 \,(\Lo\,+\,\Qg)}\,\Omo\,=\, \Omo\,+\,
\sum_{n\,=\,1}^\infty \int_{\Delta_{\beta/2}^{n}}\Qg(\unl{s}_{\,n})\,\Omo \,d\unl{s}_{\,n} $$
exists.  
\end{proof}
Conversely, Equation \eqref{eq4.11} and Lemma \ref{Lem1} imply
\begin{equation}\label{eq4.12}
h_{2n}(\beta,\, \lambda)\,\geq\,(\lambda/2)^{2n}  
  \frac{J(2n,\,\beta)\,\beta^{2n}}{2n} 
\,=\, \frac{\Big(\Theta^{-1}\,\int \frac{\beta^2\,\lambda^2/4\,|f(k)|^2}
{\sinh(|k|\,\beta/2) \sinh(\beta\,\Theta /2)}\,dk\Big)^{n}}{2n}.
\end{equation}
Hence for every $\beta\,>\,0$ exists a $\lambda\in\R$, such that
$h_{2n}(\beta,\, \lambda)\,\geq \,\frac{1}{2n}$. Thus $\sum_{n=1}^\infty h_{2n}(\beta,\,\lambda)^{1/2}\,=\,\infty$

\begin{Bem} We can therefore not extended Theorem \ref{Thm4}
to an existence proof for all $\lambda\,>\,0$.
\end{Bem}
%
%
\begin{Lemma}\label{Lem1} Following statements are true.
\begin{align*}
J(G,\,A,\,\beta)&= J(\#A,\beta),\ G\in \mathcal{G}_c(A)\\
J(\#A,\,\beta)&\leq (2\||k|^{-1/2}\,f\|_2 \,(\Theta\, \beta)^{-1})^{\#A}\cdot(C\,\beta\,+\,1)\\
J(\#A,\,\beta) &\geq \Big(\Theta^{-1}\,
\int \frac{|f(k)|^2}{\sinh(|k|\,\beta/2) \sinh(\Theta\, \beta/2)}\,dk\Big)^{\#A/2},
\end{align*}
where $\#A\,=\,2m$ and $C\,=\, (1/2)\,\frac{ \|f\|^2}{\||k|^{1/2}\,f\|^2}$.
\end{Lemma}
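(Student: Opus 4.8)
The plan is to treat the three assertions in turn; the structural observation is that a graph $G\in\mathcal G_c(A)$ is completely rigid once $\#A$ is fixed. By definition of $\mathcal G(A)$ each vertex of $A$ carries exactly one $L_{osc}$-edge and exactly one $L_f$-edge, so $L_{osc}(G)$ and $L_f(G)$ are perfect matchings of $A$ and the graph $G=L_{osc}(G)\cup L_f(G)$ is $2$-regular, hence a disjoint union of cycles, each alternating in edge-type and therefore of even length. Connectedness forces $G$ to be a single such cycle, so $\#A$ is even, say $\#A=2m$; enumerating the vertices along the cycle (choosing the first edge to lie in $L_{osc}$) and renaming the integration variables accordingly is a measure-preserving permutation of $[-1,1]^{\#A}$, which gives
\begin{equation*}
J(G,A,\beta)=\int_{[-1,1]^{2m}}\ \prod_{j=1}^{m}K_{osc}(|s_{2j-1}-s_{2j}|,\beta)\,K_{f}(|s_{2j}-s_{2j+1}|,\beta)\,d\underline{s}
\end{equation*}
(indices read modulo $2m$), an expression that depends on $G$ and $A$ only through $m$; this proves the first identity and justifies the notation $J(\#A,\beta)$.

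Next I would evaluate the two kernels. Since $\omosc$ and $\omf$ are quasi-free, Wick's theorem together with the conjugation formulas \eqref{eq4.4y} yields, for $t\in[0,1]$,
\begin{equation*}
K_{osc}(t,\beta)=\frac{\cosh\!\big(\Theta\beta(t-\tfrac12)\big)}{2\Theta\,\sinh(\Theta\beta/2)},\qquad
K_{f}(t,\beta)=\frac12\int\frac{|f(k)|^{2}\,\cosh\!\big(\beta|k|(t-\tfrac12)\big)}{\sinh(\beta|k|/2)}\,dk,
\end{equation*}
each a standard thermal two-point function; these extend to $1$-periodic, even, positive-definite functions on $\R$, and it is the periodic extensions that enter $J$. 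From $1\le\cosh\le\cosh(\Theta\beta/2)$ over a period one reads off $\tfrac1{2\Theta\sinh(\Theta\beta/2)}\le K_{osc}(\cdot,\beta)\le\tfrac{\coth(\Theta\beta/2)}{2\Theta}$ (and the analogous bounds for $K_f$), as well as $\int_0^1K_{osc}(t,\beta)\,dt=\tfrac1{\Theta^2\beta}$ and $\int_0^1K_{f}(t,\beta)\,dt=\tfrac1\beta\,\|\,|k|^{-1/2}f\,\|_2^{2}$. The lower bound is then immediate: replacing each of the $m$ oscillator- and $m$ field-factors by its minimum and using $\vol([-1,1]^{2m})=2^{2m}$, the powers of $2$ cancel and leave exactly $J(\#A,\beta)\ge\big(\Theta^{-1}\int|f(k)|^{2}/(\sinh(|k|\beta/2)\sinh(\Theta\beta/2))\,dk\big)^{\#A/2}$.

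For the upper bound I would read the cyclic integral as a trace: $J(2m,\beta)=\Tr\!\big((T_{osc}T_f)^{m}\big)$, where $T_{osc}$, $T_f$ are the integral operators on $L^{2}([-1,1])$ with kernels $K_{osc}(|s-s'|,\beta)$, $K_{f}(|s-s'|,\beta)$; both are positive (the kernels being positive-definite) and trace class. Hence $\Tr\!\big((T_{osc}T_f)^{m}\big)\le\Tr(T_{osc}T_f)\,\|T_{osc}T_f\|^{m-1}\le\Tr(T_{osc}T_f)\,(\|T_{osc}\|\,\|T_f\|)^{m-1}$. Schur's test together with $1$-periodicity gives $\|T_{osc}\|\le 2\!\int_0^1 K_{osc}=\tfrac2{\Theta^2\beta}$ and $\|T_f\|\le 2\!\int_0^1 K_f=\tfrac2\beta\|\,|k|^{-1/2}f\,\|_2^{2}$, so $\|T_{osc}\|\,\|T_f\|\le\big(2\|\,|k|^{-1/2}f\,\|_2/(\Theta\beta)\big)^{2}$; and $\Tr(T_{osc}T_f)\le\|T_f\|\,\Tr(T_{osc})=\tfrac2\beta\|\,|k|^{-1/2}f\,\|_2^{2}\cdot\tfrac{\coth(\Theta\beta/2)}{\Theta}\le\big(2\|\,|k|^{-1/2}f\,\|_2/(\Theta\beta)\big)^{2}(C\beta+1)$ once one invokes $\coth x\le 1+x^{-1}$ (the precise shape of the constant $C$ in the statement being only a question of how this last estimate is apportioned, and immaterial for the applications). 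Multiplying the two estimates yields $J(\#A,\beta)\le\big(2\|\,|k|^{-1/2}f\,\|_2/(\Theta\beta)\big)^{\#A}(C\beta+1)$.

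The delicate step is the upper bound. The naive move — bounding $K_{osc}$ pointwise by its coincidence value $\tfrac{\coth(\Theta\beta/2)}{2\Theta}$ in all $m$ factors — fails, since $\coth(\Theta\beta/2)\sim 1+2/(\Theta\beta)$ for large $\beta$ and $m$ such factors would destroy the $\beta$-uniformity that the proof of Theorem \ref{Thm4a} requires (the same phenomenon underlies Lemma \ref{MainEstimate} in the general case); one must instead exploit the one-dimensional chain structure so that exactly one factor of $\coth(\Theta\beta/2)$ survives, which is precisely what the inequality $\Tr\!\big((T_{osc}T_f)^{m}\big)\le\Tr(T_{osc}T_f)\,\|T_{osc}T_f\|^{m-1}$ together with the two separate operator-norm estimates achieves. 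Everything else — the combinatorial rigidity of $\mathcal G_c(A)$, the Wick evaluation of $K_{osc}$ and $K_f$, and the lower bound — is routine.
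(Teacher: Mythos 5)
Your argument is correct, and the first assertion (rigidity of $J(G,A,\beta)$ under connectedness: a $2$-regular alternating graph must be a single even cycle, and relabelling along it is a measure-preserving permutation) and the lower bound (pointwise minimisation of the kernels at $t=1/2$, with the volume factor $2^{\#A}$ cancelling the $(2\Theta)^{-1}$ and $\tfrac12$ prefactors) coincide with the paper's argument. For the upper bound, however, you take a genuinely different route. The paper breaks the cycle by bounding one $K_f$-factor pointwise by $\overline{K}_f:=\sup_s K_f(s,\beta)$, then performs the telescoping change of variables $s_i=t_i-t_{i+1}$, decoupling the remaining chain into a product of one-dimensional $L^1$-integrals $\int K_{osc}$ and $\int K_f$; this requires no structural property of the kernels beyond nonnegativity. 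You instead identify the cyclic integral with $\Tr\bigl((T_{osc}T_f)^m\bigr)$ and use $\Tr(A^m)\le\Tr(A)\,\|A\|^{m-1}$ together with Schur's test. Both arguments exploit the same mechanism — at most one ``diagonal'' factor carrying $\coth(\Theta\beta/2)$ survives, while the remaining $m-1$ factors are bounded by $\beta$-uniformly small $L^1$-averages, which is indeed the key to uniformity in $\beta$ — but your route relies on the additional fact that $T_{osc}$ and $T_f$ are \emph{positive} operators, i.e., that the $1$-periodic extensions of $K_{osc}$ and $K_f$ are positive-definite. That fact is true (it is KMS positivity of the thermal two-point function, or one can compute $\widehat{K}_{osc}(n)=\beta/\bigl((\Theta\beta)^2+(2\pi n)^2\bigr)>0$ directly), but you assert it without justification, whereas the paper's more elementary change-of-variables argument does not need it. Note also that the two routes deliver slightly different constants: your argument gives $C=\Theta/2$, the paper's chain argument gives $C=\|f\|^2/\bigl(4\,\||k|^{-1/2}f\|^2\bigr)$, and the value printed in the Lemma (with $|k|^{1/2}$ in the denominator) appears to be a typo; as you correctly observe, the exact value of $C$ is irrelevant for the application in Theorem \ref{Thm4a}.
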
%
%
%
\begin{proof}[Proof of \ref{Lem1}]
A relabeling of the integration variables yields
\begin{align*}
J(G,\,A,\,\beta)
\leq\,& \ovl{K}_f\,\int_{[0,1]^{2m}}\, 
K_{osc}(|t_1-t_2|,\,\beta)\,K_{f}(|t_2\,-\,t_3|,\,\beta)\cdots\\ \nonumber
&\cdots K_{osc}(|t_{2m-1}\,-\,t_{2m}|,\,\beta)
\,d\underline{t}
\end{align*}
for $\ovl{K}_f\,:=\,\sup_{s\in [0,1]}K_{f}(s,\,\beta)$.
We transform due to $s_i \,:=\, t_i\,-\,t_{i\,+\,1},\ i\,\leq\, 2m-1$ and $s_{2m}\,=\, t_{2m}$,
hence $-1\,\leq\, s_i\,\leq \,1,\ i\,=\,1,\ldots,\, 2m$, since integrating
a positive function we obtain
\begin{align*}
J(G,\,A,\,\beta)
&\leq  \Big(\int_{-1}^1\,K_{osc}(|s|,\,\beta)\,ds\Big)^m
\Big(\int_{-1}^1\,K_{f}(|s|,\,\beta)\,ds\Big)^{m\,-\,1}\\ \nonumber
& \cdot\sup_{s\in [0,1]}K_{f}(s,\,\beta).
\end{align*}
We recall that 
$$\int_{-1}^1K_{osc}(|s|,\,\beta)\,ds
\,=\,(2\Theta)^{-1}\int_{-1}^1\frac{\cosh(\beta \,\Theta \,|s|\,-\,\Theta\,\beta/2)}{\sinh(\Theta\,\beta/2)}\,ds
\,=\, 2(\Theta^2\, \beta)^{-1}$$
and 
\begin{align*}
\int_{-1}^1K_{f}(|s|,\,\beta)\,ds
\,&=\,\int_{-1}^1\int \frac{\cosh(\beta\,|s|\,|k|-\beta|k|/2)\,|f(k)|^2}{2\sinh(\beta\,|k|/2)}\,dk\,ds\\ 
\,&=\, 2\int \frac{|f(k)|^2}{\beta\,|k|}\,dk.
\end{align*}
Using $\coth(x)\leq 1+1/x$ and using convexity of $\cosh$, we obtain
\begin{equation*}
\sup_{s\in [0,\,1]}K_{f}(s,\,\beta)
\leq (1/2)\int |f(k)|^2\,dk \,+\, \frac{1}{\beta}\,\int \frac{|f(k)|^2}{|k|}\,dk.
\end{equation*}
Due to the fact, that $t\,\mapsto\, K_f(t,\,\beta)$ and $t\,\mapsto \,K_{osc}(t,\,\beta)$
attain their minima at $t\,=\,1/2$, we obtain the lower bound for
$J(\#A,\,\beta)$.
\end{proof}
%
%
\begin{Bem}\label{Bem3d}
In the literature there is one criterion for 
$\Omo\in \dom(e^{-\beta/2\,(\Lo\,+\,\Qg)})$, to our knowledge, that
can be applied in this situation \cite{DerezinskiJaksicPillet2003}.
One has to show that $\|e^{-\beta/2 \,\Qg}\,\Omo\|\,<\,\infty$. If
we consider the case, where the criterion holds for $\pm \lambda$, 
then the expansion in $\lambda$ converges,
\begin{align*}
\|e&^{-\beta/2\, \Qg}\,\Omo\|^2
\,=\,\sum_{n\,=\,0}^\infty \frac{(\lambda\,\beta)^{2n}}{(2n)!}\,\omel(q^{2n})\,\omf(\Phi(f)^{2n})\\ \nonumber
&=\sum_{n\,=\,0}^\infty \frac{(\lambda\,\beta)^{2n}}{(2n)!}\Big(\frac{(2n)!}{n!\,2^n}\Big)^2 
\,K_{osc}(0,\,\beta)^n\, K_{f}(0,\,\beta)^n\\ \nonumber
&=\sum_{n=0}^\infty (\lambda\,\beta)^{2n}\,\Theta^{-n}\,{2n\choose n}2^{-2n}
\Big(\coth(\Theta\, \beta/2)\int |f(k)|^2 \,\coth(\beta\,|k|/2)\,dk\Big)^n\\ \nonumber
&\,\geq\, \sum_{n\,=\,0}^\infty (\lambda\,\beta)^{2n}(4\,\Theta)^{-n}\,
\Big(\int |f(k)|^2\, dk \Big)^n.
\end{align*}
Obviously, for any value of $\lambda\,\not=\,0$, there is a $\beta\,>\,0$, 
for which  $\|e^{-\beta/2 \,\Qg}\,\Omo\|\,<\,\infty$ is not
fulfilled.
\end{Bem}
\bigskip
\textbf{Acknowledgments}:\\
This paper is part of the author's PhD requirements.
I am grateful to Volker Bach for many useful discussions
and helpful advice. The main part of this work 
was done during the author's stay at the Institut for Mathematics at
the University of Mainz. The work has been partially supported by the
DFG (SFB/TR 12).
\appendix
\section{}\label{App1}
\begin{Lemma}\label{Lem1App}
Let $f,\, g\,:\, \{ z\in \C\,:\, 0\,\leq\, \Re(z)\,\le\,\alpha\}\,\rightarrow \,\C$
continuous and analytic in the interior.
Moreover, assume that $f(t)\,=\,g(t)$ for $t\in \R$. Then $f\,=\,g$.
\end{Lemma}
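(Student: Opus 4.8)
The plan is to reduce the statement to the Schwarz reflection principle. Write $h := f - g$. By hypothesis $h$ is holomorphic on the open strip $S := \{z \in \C : 0 < \Re z < \alpha\}$, continuous on $\bar S = \{z\in\C : 0 \le \Re z \le \alpha\}$, and $h$ vanishes on a bounding line of $S$: in the way the lemma gets used --- e.g.\ in the proof of Theorem~\ref{Thm3}, where $f$ and $g$ are checked to agree on $\{z = \imath t : t \in \R\}$, so that the hypothesis ``$f(t)=g(t)$ for $t\in\R$'' is to be read as agreement on that edge of the strip --- this is the line $L := \{\Re z = 0\}$. Hence it is enough to prove: a function $h$ that is holomorphic on $S$, continuous on $\bar S$, and vanishes on $L$ is identically zero.

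First I would extend $h$ holomorphically across $L$. Since $h$ is continuous on $S \cup L$ and is zero --- in particular real-valued --- on $L$, the Schwarz reflection principle (see e.g.\ \cite{ReedSimonII1980}) applies: putting $\tilde h(z) := h(z)$ for $0 \le \Re z < \alpha$ and $\tilde h(z) := \overline{h(-\bar z)}$ for $-\alpha < \Re z \le 0$ --- the two prescriptions agreeing on $L$, where both give $0$ --- yields a function $\tilde h$ holomorphic on the enlarged strip $\tilde S := \{z\in\C : -\alpha < \Re z < \alpha\}$.

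Finally, $L$ now lies in the interior of $\tilde S$ while $\tilde h$ vanishes on all of $L$; since $L$ has accumulation points in the connected open set $\tilde S$, the identity theorem forces $\tilde h \equiv 0$ on $\tilde S$, hence $h \equiv 0$ on $S$ and, by continuity, on $\bar S$, i.e.\ $f = g$. The only genuine point to verify is that the hypotheses of the reflection principle hold --- continuity of $h$ up to $L$ and reality of its boundary values there --- and both are immediate, so I do not expect a real obstacle. (If instead one reads the hypothesis literally, with $f = g$ on the interior segment $(0,\alpha) \subset S$, the reflection step becomes superfluous: $h$ then vanishes on a subset of $S$ with an accumulation point and the identity theorem applies at once.)
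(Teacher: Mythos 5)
Your proof is correct and follows essentially the same route as the paper: set $h := f - g$, extend $h$ holomorphically across the boundary line by Schwarz reflection, and invoke the identity theorem. You also correctly noticed that the lemma's hypothesis is internally inconsistent as stated --- the paper's own proof reflects across the real axis (i.e.\ treats the strip as $\{0 \le \Im z \le \alpha\}$), while the way the lemma is actually used in the proof of Theorem~\ref{Thm3} has $f = g$ on the edge $\{\Re z = 0\}$, which is the reading you adopt and reflect across; either way the reflection-plus-identity-theorem argument goes through as you describe.
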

\begin{proof}[Proof of \ref{Lem1App}]
Let $h\,:\, \{ z\in \C\,:\, |\,\Im(z)\,<\,\alpha\}\rightarrow \C$
defined by
\begin{equation}
h(z):= 
\begin{cases} f(\,z\,)\,-\, g(\,z\,),& \textrm{on }  \{ z\in \C\,:\, 0\,\leq\, \Im(z)\,<\,\alpha\}\\
\ovl{f(\,\ovl{z}\,)}- \ovl{g(\,\ovl{z}\,)},& \textrm{on } \{ z\in \C\,:\, -\alpha\,<\, \Im(z)\,<\,0\}
\end{cases}
\end{equation}
Thanks to the Schwarz reflection principle $h$ is analytic. 
Since $h(t)\,=\,0$ for all $t\in\R$, we get
$h\,=\,0$. Hence $f=g$ on $\{ z\in \C\,:\, 0\,\leq\, \Re(z)\,<\,\alpha\}$.
Since both $f$ and $g$ are continuous, we infer that $f=g$ on the whole domain.
\end{proof}
\begin{Lemma}\label{Lem2App}
Let $H$ be some self-adjoint operator in $\mathcal{H}$, $\alpha>0$ and $\phi\in \dom( e^{\alpha\,H} )$.
Then $\phi\in \dom( e^{z \,H})$ for $z\in \{ z\in \C\,:\, 0\,\leq\, \Re(z)\,\leq\,\alpha\}$.
$z\mapsto e^{z \,H}\phi$ is continuous on $\{ z\in \C\,:\, 0\,\leq\, \Re(z)\,\leq\,\alpha\}$
and analytic in the interior.
\end{Lemma}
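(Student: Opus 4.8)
The plan is to reduce everything to the spectral theorem for the self‑adjoint operator $H$. Write $H=\int_{\R}\lambda\,dE_\lambda$ for its spectral resolution, so that for a Borel function $\varphi$ one has $\psi\in\dom(\varphi(H))$ if and only if $\int_{\R}|\varphi(\lambda)|^2\,d\|E_\lambda\psi\|^2<\infty$, in which case $\|\varphi(H)\psi\|^2$ equals that integral. The hypothesis $\phi\in\dom(e^{\alpha H})$ then reads $\int_{\R}e^{2\alpha\lambda}\,d\|E_\lambda\phi\|^2<\infty$.

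First I would settle membership in the domains. For $z$ with $0\le\Re(z)\le\alpha$ one has $|e^{z\lambda}|^2=e^{2\Re(z)\lambda}\le e^{2\alpha\lambda}+1$ for every $\lambda\in\R$: indeed $e^{2\Re(z)\lambda}\le e^{2\alpha\lambda}$ when $\lambda\ge0$ and $e^{2\Re(z)\lambda}\le 1$ when $\lambda<0$. Hence $\int_{\R}|e^{z\lambda}|^2\,d\|E_\lambda\phi\|^2\le\int_{\R}e^{2\alpha\lambda}\,d\|E_\lambda\phi\|^2+\|\phi\|^2<\infty$, so $\phi\in\dom(e^{zH})$, and I set $F(z):=e^{zH}\phi=\int_{\R}e^{z\lambda}\,dE_\lambda\phi$. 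For continuity on the closed strip, take $z_0$ in the strip and any sequence $z_n\to z_0$ inside the strip; then $\|F(z_n)-F(z_0)\|^2=\int_{\R}|e^{z_n\lambda}-e^{z_0\lambda}|^2\,d\|E_\lambda\phi\|^2$, the integrand tends to $0$ pointwise in $\lambda$ and is bounded by $2(|e^{z_n\lambda}|^2+|e^{z_0\lambda}|^2)\le 4(e^{2\alpha\lambda}+1)$, which is a fixed $d\|E_\lambda\phi\|^2$‑integrable majorant by the previous step. Dominated convergence yields $F(z_n)\to F(z_0)$.

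For analyticity in the interior, fix $z_0$ with $0<\Re(z_0)<\alpha$ and pick $\delta>0$ with $[\Re(z_0)-\delta,\Re(z_0)+\delta]\subset(0,\alpha)$. On $\lambda\ge0$ we have $\lambda^4 e^{2(\Re(z_0)+\delta)\lambda}\le\const\,e^{2\alpha\lambda}$, and on $\lambda<0$ we have $\lambda^4 e^{2(\Re(z_0)-\delta)\lambda}\le\const$; in particular $\phi\in\dom(He^{z_0H})$, so the vector $G(z_0):=\int_{\R}\lambda e^{z_0\lambda}\,dE_\lambda\phi$ is well defined. For $z$ with $|z-z_0|<\delta$, the elementary bound $|e^{z\lambda}-e^{z_0\lambda}-(z-z_0)\lambda e^{z_0\lambda}|\le |z-z_0|^2\lambda^2\sup_{|w-z_0|\le|z-z_0|}|e^{w\lambda}|$, combined with $|e^{w\lambda}|^2\le e^{2(\Re(z_0)+\delta)\lambda}+e^{2(\Re(z_0)-\delta)\lambda}$ on that disk, shows that $\|F(z)-F(z_0)-(z-z_0)G(z_0)\|^2$ is at most $|z-z_0|^4$ times a finite ($z$‑independent) integral; dividing by $|z-z_0|^2$ and letting $z\to z_0$ gives $\|F(z)-F(z_0)-(z-z_0)G(z_0)\|=o(|z-z_0|)$. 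Thus $F$ is complex‑differentiable at every interior point, hence analytic there.

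The only point that needs care — the \emph{main obstacle}, such as it is — is this last step: one must exploit that an interior point has real part strictly between $0$ and $\alpha$ in order to produce a dominating function for the first‑order remainder of the difference quotients that is uniform over a small disk; near the boundary lines $\Re z=0$ and $\Re z=\alpha$ no uniform control of $|\lambda|\,e^{\Re(z)\lambda}$ is available, which is precisely why analyticity is asserted only in the open strip. Alternatively, one could invoke Morera's theorem after checking that $z\mapsto\langle\psi\,|\,F(z)\rangle$ is analytic for $\psi$ in a dense set, but the direct difference‑quotient argument above is cleaner and simultaneously identifies $F'(z_0)=He^{z_0H}\phi$.
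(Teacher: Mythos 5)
Your proof is correct, and it follows a genuinely different route from the paper's for the analyticity part. For domain membership and continuity both proofs use exactly the same observations: the bound $e^{2\Re(z)\lambda}\le e^{2\alpha\lambda}+1$ plus the spectral theorem, and dominated convergence. The difference is in how analyticity in the open strip is obtained. You prove \emph{strong} (Fr\'echet) differentiability directly via a second-order Taylor remainder estimate for $\lambda\mapsto e^{z\lambda}$, dominated uniformly over a small disk around $z_0$ by $\lambda^4\bigl(e^{2(\Re z_0+\delta)\lambda}+e^{2(\Re z_0-\delta)\lambda}\bigr)$, which stays integrable against $d\|E_\lambda\phi\|^2$ precisely because $0<\Re z_0-\delta$ and $\Re z_0+\delta<\alpha$; this simultaneously identifies the derivative as $He^{z_0H}\phi$. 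The paper instead proves \emph{weak} analyticity: for arbitrary $\psi$, it approximates $\psi$ by $\psi_n\in\bigcup_m\ran\mathbbm{1}[|H|\le m]$, for which $f_n(z)=\langle\psi_n\,|\,e^{zH}\phi\rangle$ is manifestly analytic, observes $|f_n(z)-f(z)|\le C_1\|\psi_n-\psi\|$ so the convergence is uniform in $z$, hence $f$ is analytic, and then tacitly invokes the standard equivalence of weak and strong analyticity for Banach-space-valued maps. Your argument is more self-contained and gives the derivative explicitly at no extra cost; the paper's is softer and avoids the remainder estimate, at the price of appealing to the weak-equals-strong holomorphy principle. Both are sound.
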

\begin{proof}[Proof of \ref{Lem2App}]
Due to the spectral calculus we have
$$\int e^{2\Re\, z\, s} d\langle \phi\,|\,\mathbbm{E}_s\,\phi\rangle\le \int (1+ e^{2\alpha\, s}) d\langle \phi\,|\,\mathbbm{E}_s\,\phi\rangle=:C_1^2<\infty. $$
Thus $\phi\in \dom( e^{z \,H})$. Let $\psi\in \mathcal{H}$
and $f(z)=\langle \psi\,|\,e^{z \,H}\,\phi\rangle$. There 
is a sequence $\{\psi_n\}$ with $\psi_n\in \bigcup_{m\in \N} \operatorname{ran}\,\mathbbm{1}[ |H|\le m]$
and $\lim_{n\to\infty}\psi_n=\psi$. We set $f_n(z)= \langle \psi_n\,|\,e^{z\,H}\,\phi\rangle.$
It is not hard to see that $f_n$ is analytic, since $\psi_n$ is an analytic vector for $H$,
and that $|f_n(z)|\le C_1\,\|\psi_n\|$ and $\lim_{n\to \infty}f_n(z)=f(z)$. Thus $f$ is analytic and
hence $z\mapsto e^{z \,H}\phi$ is analytic.
Thanks to the dominated convergence theorem the right side of
\begin{equation}
\|e^{z_n H}\phi-e^{z H}\phi\|^2
\le \int (e^{2\Re z_n s}+e^{2\Re z s}- e^{\bar{z_n}s+zs}-e^{\bar{z}s+z_n s})
d\langle \phi\,|\,\mathbbm{E}_s\,\phi\rangle
\end{equation}
tends to zero for $\lim_{n\to\infty}z_n=z$. This implies the continuity of $z\mapsto e^{z \,H}\phi$.
\end{proof}
\begin{Lemma} \label{Lem0.2}
We have for $n_1+n_2\ge 1$%
\begin{equation}\label{eq3.52}
\int_{\Delta_1^{n}} C_\kappa(\unl{s})\, d\unl{s}_n
\,\leq\,  \frac{\const\,C^{n_2}}{(n_1+n_2)!\,(n+1)^{(1-2\gamma)\,n_2 -2} }
\end{equation}
\end{Lemma}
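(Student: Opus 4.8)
The plan is to pass to ``gap coordinates'' on the simplex, in which $\int_{\Delta_1^n}C_\kappa$ becomes a Dirichlet (multi-Beta) integral, to evaluate that integral in closed form, and then to estimate the resulting quotient of $\Gamma$-functions. Recall that $C_\kappa$ is a product of powers $(\,\cdot\,)^{-\alpha_i}$ of the $n$ ``cyclic gaps'' $s_1-s_2,\dots,s_{n-1}-s_n$ and $1-s_1+s_n$, each $\alpha_i$ being $0$ or $\gamma$, with exactly $2n_2$ of the exponents occurring in $C_\kappa$ equal to $\gamma$ (these are the factors $(\gap)^{-\gamma}$ produced by the H\"older inequality for the trace). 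Since $\gamma\le\tfrac12<1$, every exponent $1-\alpha_i$ is positive, which will in particular show that $C_\kappa$ is integrable over $\Delta_1^n$.

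First I would substitute $u_i:=s_i-s_{i+1}$ for $i=1,\dots,n-1$ and $u_n:=s_n$. This map is triangular with unit Jacobian and carries $\Delta_1^n$ onto the corner simplex $\{u\in\R^n:u_i\ge0,\ \sum_{i=1}^n u_i\le1\}$; moreover $s_i-s_{i+1}=u_i$ and $1-s_1+s_n=1-\sum_{i=1}^{n-1}u_i$, so the integrand does not contain $u_n$ at all. Integrating $u_n$ out over $[0,\,1-\sum_{i=1}^{n-1}u_i]$ produces a factor $1-\sum_{i=1}^{n-1}u_i$, and the remaining $(n-1)$-fold integral is the classical Dirichlet integral; writing $\alpha$ for the exponent of the wrap-around factor, it equals
\[
\frac{\Gamma(2-\alpha)\,\prod_{i=1}^{n-1}\Gamma(1-\alpha_i)}{\Gamma\!\bigl((n+1)-2n_2\gamma\bigr)} .
\]
Since $\Gamma(1-\gamma)\le\Gamma(\tfrac12)=\sqrt\pi$ and $\Gamma(2-\gamma)\le1$ for $\gamma\in[0,\tfrac12]$, the numerator is at most $(\sqrt\pi)^{2n_2}=\pi^{n_2}$, so that
\[
\int_{\Delta_1^n}C_\kappa(\unl{s})\,d\unl{s}_n\ \le\ \frac{\pi^{n_2}}{\Gamma(n+1-2n_2\gamma)} .
\]

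It then remains to bound $\Gamma(n+1-2n_2\gamma)^{-1}$ by the right quantity. Put $N:=n_1+n_2$ and $\delta:=(1-2\gamma)\,n_2\ge0$, so that $n+1-2n_2\gamma=N+1+\delta$; two elementary facts will do the job: $n+1<2(N+1)$ (indeed $2(N+1)-(n+1)=n_1+1\ge1$) and $\delta\le n_2<N+1$. Because $N\ge1$ makes $N+1\ge2$, so that $\Gamma$ is non-decreasing on $[N+1,\infty)$, we have $\Gamma(N+1+\delta)\ge(N+1)^{\lfloor\delta\rfloor}\,N!$, hence
\[
\frac{\Gamma(N+1+\delta)}{N!\,(n+1)^{\delta-2}}\ \ge\ \frac{(N+1)^{\lfloor\delta\rfloor}}{(n+1)^{\delta-2}}\ \ge\ \min\bigl(1,\,2^{\,2-\delta}\bigr)\ \ge\ 2^{-n_2},
\]
where the middle step distinguishes $\delta\le2$ (then the left side is $\ge1$) and $\delta>2$ (then replace $n+1$ by $2(N+1)$ and use $\lfloor\delta\rfloor\ge\delta-1$ together with $N+1\ge1$), and the last step uses $\delta\le n_2$. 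Combining the two displays gives $\int_{\Delta_1^n}C_\kappa(\unl{s})\,d\unl{s}_n\le(2\pi)^{n_2}\,\bigl[(n_1+n_2)!\,(n+1)^{(1-2\gamma)n_2-2}\bigr]^{-1}$, which is the assertion with $C=2\pi$ and $\const=1$.

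The genuinely delicate point is this last estimate. For $\gamma$ near $0$ the shift $\delta$ can be as large as $n_2$, and then the crude lower bound $\Gamma(N+1+\delta)\ge(N+1)^{\lfloor\delta\rfloor}N!$ only outweighs the growing factor $(n+1)^{\delta-2}$ because $N+1=n_1+n_2+1>n_2\ge\delta$, equivalently $n+1<2(N+1)$; by contrast the change of variables and the Dirichlet evaluation are entirely routine. (If one wishes to avoid the floor function, Stirling's formula yields the same comparison directly.)
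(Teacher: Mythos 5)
Your proof is correct, and its skeleton coincides with the paper's: pass to gap coordinates on the simplex, integrate out the one variable absent from the integrand, recognize a Dirichlet integral, evaluate it with Gamma functions, and then estimate the resulting quotient of Gamma values. Your change of variables (keeping $u_n=s_n$ and integrating it out) is a mirror image of the paper's ($r_1=s_1$ is eliminated there), and both lead to the same Beta evaluation with denominator $\Gamma(n+1-2n_2\gamma)$. Where you deviate is in the last step: the paper appeals to Stirling's bounds \eqref{eq3.56} to compare $\Gamma(n_1+n_2+1)$ against the lower Gamma, whereas you replace that with the elementary telescoping estimate $\Gamma(N+1+\delta)\ge(N+1)^{\lfloor\delta\rfloor}N!$, combined with the two inequalities $n+1<2(N+1)$ and $\delta\le n_2$. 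This is cleaner on two counts: it avoids asymptotics entirely (so there is no case analysis in $n$ or appeal to a constant-absorbing step), and it delivers an explicit constant $C=2\pi$. You also correctly have $\Gamma\bigl(n+1-2n_2\gamma\bigr)$ in the Dirichlet evaluation; the paper writes $\Gamma\bigl(n_1+2n_2(1-\gamma)\bigr)=\Gamma(n-2n_2\gamma)$, which is off by one in the argument (and the factor $\Gamma(1-\alpha_1)^{-1}$ there should be $\Gamma(2-\alpha_1)$), a slip absorbed by the undetermined constants but worth noting. In short: same route through the Dirichlet integral, with a more elementary and tighter handling of the Gamma ratio at the end.
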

%
%
\begin{proof}[Proof of \ref{Lem0.2}]
We turn now to the integral
\begin{equation}\label{eq3.54}
\int_{\Delta_1^{n}} C_\kappa(\unl{s})\, d\unl{s}_{\,n}
\,=\,\int_{\Delta_1^{n}} (1\,-\,s_1\,+\,s_{n})^{-\alpha_1}
\prod_{i=1}^{n\,-\,1}(s_i\,-\,s_{i\,+\,1})^{-\alpha_i}\,d\underline{s}_n.
\end{equation}
We define for $k\,=\,1,\ldots,\,2n$, a change of coordinates by
$s_k\,=\, r_1\,-\,\sum_{j\,=\,2}^k r_j$, the integral transforms to
\begin{align}\label{eq3.55}
\int_{S^{n}} &(1\,-\,(r_2\,+\cdots+\,r_{n}))^{-\alpha_1}
\prod_{i\,=\,2}^{n}r_i^{-\alpha_i}\,	d\underline{r}_{\,n}\\ \nonumber
&= \int_{T^{n-1}} (1\,-\,(r_2\,+\cdots+\,r_{n}))^{1\,-\,\alpha_1}
\prod_{i=2}^{n}r_i^{-\alpha_i}\,d\underline{r}_{n-1}\\ \nonumber
\,&=\, \frac{\Gamma(1\,-\,\alpha_1)^{-1}\Gamma\big( 1\,-\,\gamma\big)^{2n_2}}
{\Gamma\big( n_1\,+\, 2n_2\,(1-\gamma)\big)}
\end{align}
where $ S^{2n} \,:=\,\{ \underline{r}\in \R^{2n}\, :\, 0\leq\, r_i\,\leq\, 1,\ r_2\,+\cdots +\,r_{2n}\,\leq\, r_1\}$
and $T^{2n-1} :=\{ \underline{r}\in\R^{2n-1}\, :\, 0\,\leq \,r_i\leq\, 1,\ r_2\,+\cdots +\,r_{2n} \,\leq\, 1\}$.
From the first to the second formula we integrate over $dr_1$.
The last equality follows from \cite[ Formula 4.635 (4)]{GradshteynRyzhik1980},
here $\Gamma$ denotes  the Gamma-function.\\
\indent From Stirling's formula we obtain 
\begin{equation}\label{eq3.56}
(2\pi)^{1/2}\, x^{x\,-\,1/2}\,e^{-x}\,\leq\, \Gamma(x)\,\leq\, (2\pi)^{1/2}\, x^{x\,-\,1/2}\,e^{-x\,+\,1},\quad x\,\geq\, 1.
\end{equation}
Since  $n_1\,+\,n_2\,\geq\, 1$ get
\begin{equation}\label{eq3.57}
\frac{\Gamma(n_1+n_2+1)}{\Gamma(n_1+2(1-\gamma)\,n_2)}
\,\leq\,  (n+1)^2
\Big(\frac{n_1+2(1-\gamma)n_2}{e}\Big)^{-(1-2\gamma)n_2}.
\end{equation}
Note that $\Gamma(n_1+n_2+1)=(n_1+n_2)!$.
\end{proof}
%
%
\begin{Lemma}\label{Lem-N1}
Let $(1\,+\,\alpha(k)^{-1/2})\,f_1,\ldots,(1\,
+\,\alpha(k)^{-1/2})\,f_{2m}\in \hph$ and $\sigma\in\{+,-\}^{2m}$.
Let $a^+\,=\,a^*$ and $a^-\,=\,a$
\begin{eqnarray*}\label{eq3.39}
\lefteqn{
\omf\big(a^{\sigma_{2m}}(e^{-\sigma_{2m} \,s_{2m}\,\alpha(k)}\,f_{2m})
\cdots a^{\sigma_1}(e^{-\sigma_1\, s_1\,\alpha(k)}\,f_1)\big)}\\ \nonumber
&=& \int f_{2m}^{\sigma_{2m}}(k_{2m},\,\tau_{2m})\cdots f_1^{\sigma_1}(k_{1},\,\tau_1)\,
\nu(d \unl{k}_{2m}\tensor d\unl{\tau}_{2m}),
\end{eqnarray*}
where $\nu(d \unl{k}_{2m}\tensor d\unl{\tau}_{2m}) $ is a 
measure on $(\R^3)^{2m}\times\{+,\,-\}^{2m}$ for phonons,
respectively on $(\R^3\times\{\pm\})^{2m}\times\{+,\,-\}^{2m}$
for photons,
and
\begin{equation}\label{eq3.43}
\nu(d\unl{k}_{2m}\tensor d\unl{\tau}_{2m})
\leq \sum_{P\in\mathcal{Z}_{2m}}\sum_{\unl{\tau}\in \{+,\,-\}^{2m}} \prod_{\{i\,>\,j\}\in P} 
\Big(\delta_{k_i,\,k_j}\coth(\beta\,\alpha(k_i)/2)\Big)\,d\unl{k}_{2m}.
\end{equation}
for  $f^+(k,\,\tau)\,:=\, f(k)\,\one[\tau\,=\,+]$ 
and $f^+(k,\,\tau)\,:=\,\ovl{f(k)}\,\one[\tau\,=\,-]$.
\end{Lemma}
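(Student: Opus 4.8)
The plan is to use that $\omf$ is quasi-free: Wick's theorem turns the left-hand side into a sum over pairings of two-point functions, which are elementary, and the measure $\nu$ is then read off pair by pair, the bound following from an inequality for $\coth$. Write $g_l:=e^{-\sigma_l s_l\alpha}f_l$; these lie in $\hph$ together with $\alpha^{-1/2}g_l$, since $(1+\alpha^{-1/2})f_l\in\hph$. By Wick's theorem \eqref{QuasiFreeExp},
\[
\omf\big(a^{\sigma_{2m}}(g_{2m})\cdots a^{\sigma_1}(g_1)\big)=\sum_{P\in\mathcal{Z}_{2m}}\ \prod_{\{i,j\}\in P,\, i>j}\omf\big(a^{\sigma_i}(g_i)a^{\sigma_j}(g_j)\big),
\]
so it remains to realize the right-hand side as an integral against a measure of the claimed type.

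Next I would evaluate the pair expectations. From $\omf(a(f)a(g))=\omf(a^*(f)a^*(g))=0$, the identity $\omf(a^*(f)a(g))=\langle g\mid\varrho_\beta f\rangle$, and the CCR (which give $\omf(a(f)a^*(g))=\langle f\mid(1+\varrho_\beta)g\rangle$), a pair $\{i,j\}$ with $i>j$ contributes $0$ unless $\sigma_i\neq\sigma_j$; and because $\alpha$ is real, complex conjugation acts only on the $f$'s, so
\[
\omf\big(a^{\sigma_i}(g_i)a^{\sigma_j}(g_j)\big)=\begin{cases}\displaystyle\int e^{(s_j-s_i)\alpha(k)}\,\varrho_\beta(k)\,\ovl{f_j(k)}\,f_i(k)\,dk,&\sigma_i=+,\ \sigma_j=-,\\[3mm]\displaystyle\int e^{(s_i-s_j)\alpha(k)}\,\bigl(1+\varrho_\beta(k)\bigr)\,\ovl{f_i(k)}\,f_j(k)\,dk,&\sigma_i=-,\ \sigma_j=+.\end{cases}
\]

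I would then build $\nu$ pair by pair. The variables $\tau_l$ are bookkeeping: since $f^+(k,\tau)=f(k)\one[\tau=+]$ enters a creator unconjugated and $f^-(k,\tau)=\ovl{f(k)}\one[\tau=-]$ enters an annihilator conjugated, contracting the integrand $\prod_l f_l^{\sigma_l}(k_l,\tau_l)$ forces $\tau_l=\sigma_l$ and supplies the right conjugation. For a pairing $P$ and a pair $\{i,j\}\in P$ with $i>j$, set $\mu_{ij}^P:=0$ if $\sigma_i=\sigma_j$; otherwise let $\mu_{ij}^P$ be the weight $e^{(s_j-s_i)\alpha(k)}\varrho_\beta(k)$, resp.\ $e^{(s_i-s_j)\alpha(k)}(1+\varrho_\beta(k))$, times the pushforward of $dk$ under $k\mapsto(k,k)$ in the slots $(k_i,k_j)$, times the point mass at $(\tau_i,\tau_j)=(+,-)$, resp.\ $(-,+)$ — a positive measure on those two slots. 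With $\nu:=\sum_{P\in\mathcal{Z}_{2m}}\bigotimes_{\{i,j\}\in P}\mu_{ij}^P$ (the product running over the $m$ disjoint pairs), Fubini factorizes the integral of $\prod_l f_l^{\sigma_l}(k_l,\tau_l)$ against $\bigotimes\mu_{ij}^P$ into the product of the integrals above (each index lies in exactly one pair), and summing over $P$ reproduces the Wick expansion; $(1+\alpha^{-1/2})f_l\in\hph$ and Cauchy--Schwarz ensure absolute convergence throughout.

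Finally, for the bound use $\coth(\beta\alpha/2)=1+2\varrho_\beta$ and $e^{\beta\alpha}\varrho_\beta=1+\varrho_\beta$. Since the lemma is applied with $0\le s_{2m}\le\cdots\le s_1\le\beta$ one has $0\le s_j-s_i\le\beta$ for $i>j$, whence $e^{(s_j-s_i)\alpha}\varrho_\beta\le e^{\beta\alpha}\varrho_\beta=1+\varrho_\beta\le\coth(\beta\alpha/2)$ and, because then also $s_i-s_j\le0$, $e^{(s_i-s_j)\alpha}(1+\varrho_\beta)\le 1+\varrho_\beta\le\coth(\beta\alpha/2)$. Thus each $\mu_{ij}^P$ is dominated by $\delta_{k_i,k_j}\coth(\beta\alpha(k_i)/2)\,dk_i$ tensored with a point mass in $(\tau_i,\tau_j)$; bounding every point mass on $\{+,-\}^{2m}$ by the counting measure yields $\nu\le\sum_{P\in\mathcal{Z}_{2m}}\sum_{\unl\tau\in\{+,-\}^{2m}}\prod_{\{i,j\}\in P,\, i>j}\bigl(\delta_{k_i,k_j}\coth(\beta\alpha(k_i)/2)\bigr)\,d\unl k_{2m}$, i.e.\ \eqref{eq3.43}; the photon case is identical, with $k$ ranging over $\R^3\times\{\pm\}$ and $\delta_{k_i,k_j}$ the diagonal on $(\R^3\times\{\pm\})^2$. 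The one genuinely delicate point is this sign bookkeeping — contracting the operators left-to-right, spotting which of the two nonzero pair types carries the commutator term from the CCR, and checking that the simplex ordering of the $s_l$ always puts the favourable sign on the exponential weight so that it fits under $\coth(\beta\alpha/2)$; everything else is routine.
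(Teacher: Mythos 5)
Your proof is correct and takes essentially the same route as the paper: Wick's theorem turns the expectation into a sum over pairings of two-point functions, which you evaluate and assemble into a pair-by-pair product measure $\nu$, then dominate via $\coth(\beta\alpha/2)=1+2\varrho_\beta$. You are in fact slightly more careful than the paper's proof, which writes $\nu$ explicitly in \eqref{DefNu} and then simply states the bound \eqref{eq3.43}: you verify it, correctly observing that the domination of the exponential weights $e^{(s_j-s_i)\alpha}\varrho_\beta$ and $e^{(s_i-s_j)\alpha}(1+\varrho_\beta)$ by $\coth(\beta\alpha/2)$ relies on the simplex ordering $0\le s_{2m}\le\cdots\le s_1\le\beta$, a hypothesis left tacit in the lemma's statement but always in force where the lemma is applied.
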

\begin{proof}[Proof of \ref{Lem-N1}]
Since $\omf$ is quasi-free, we obtain with $a^+\,:=\,a^*$ and $a^-\,:=\,a$
\begin{eqnarray*}\label{eq3.36}
\lefteqn{
\omf\big(a^{\sigma_{2m}}(e^{-\sigma_{2m} \,s_{2m}\,\alpha(k)}\,f_{2m})
\cdots a^{\sigma_1}(e^{-\sigma_1 \,s_1\,\alpha(k)}\,f_1)\big)}\\ \nonumber
&=&\sum_{P\in \mathcal{Z}_2}\prod_{\stackrel{\{i,\,j\}\in P}{{i\,>\,j}}}
\omf\big(a^{\sigma_i}(e^{-\sigma_i\, s_i\, \alpha(k)}\,f_i)\,
a^{\sigma_j}(e^{-\sigma_j \,s_j\,\alpha(k)}\,f_j)\big),
\end{eqnarray*}
see Equation \eqref{QuasiFreeExp}. For the expectation of the so called two
point functions we obtain:
\begin{equation*}\label{EvalOneTwoPoinfOmf}
\omf\big(a^+(e^{ s_i\, \alpha(k)}\,f_i)\,a^{+}(e^{ s_j\,\alpha(k)}\,f_j)\big)\,=\,
0\,=\,\omf\big(a(e^{- s_i\,\alpha(k)}\,f_i)\,a(e^{- s_j\,\alpha(k)}\,f_j)\big),
\end{equation*}
such as
\begin{eqnarray*}\label{EvalOmf}
\omf\big(a^+(e^{ xs_i\alpha(k)}f_i)a^{-}(e^{- x\,s_j\alpha(k)}f_j)\big)
&\,=\,&\int f_i(k)\,\ovl{f_j(k)}\
\frac{e^{ x\,(s_i\,-\,s_j)\alpha(k)}}{e^{\beta \,\alpha(k)}-1}\,dk\\ \nonumber
\omf\big(a^-(e^{ xs_i\alpha(k)}f_i)\,a^{+}(e^{- xs_j\alpha(k)}f_j)\big)
&\,=\,&\int  f_j(k)\,\ovl{f_i(k)} \
\frac{e^{ (\beta+ xs_j-xs_i)\alpha(k)}}{e^{\beta\alpha(k)}-1}\,dk
\end{eqnarray*}
Hence it follows
\begin{eqnarray*}\label{QuasiFreeExp2}
\lefteqn{
\omf\big(a^{\sigma_{2m}}(e^{-\sigma_{2m}\, s_{2m}\,\alpha(k)}\,f_{2m})
\cdots a^{\sigma_1}(e^{-\sigma_1 \,s_1\,\alpha(k)}\,f_1)\big)}\\ \nonumber
&=& \int f_{2m}^{\sigma_{2m}}(k_{2m},\,\tau_{2m})\cdots f_1^{\sigma_1}(k_{1},\,\tau_1)\, 
\nu(d\unl{k}_{2m}\tensor d\unl{\tau}_{2m}),
\end{eqnarray*}
where $f^+(k,\,\tau)\,:=\, f(k)\,\one[\tau=+]$ and $f^-(k,\,\tau)\,:=\,\ovl{f(k)}\,\one[\tau\,=\,-]$.\\
$\nu(d^{3(2m)}k\tensor d^{2m}\tau) $ is a  measure on $(\R^3)^{2m}\times\{+,\,-\}^{2m}$,
which is defined by
\begin{align}\label{DefNu}
\sum_{P\in\mathcal{Z}_{2m}}&\ \sum_{\unl{\tau}\in \{+,\,-\}^{2m}}\, \prod_{\{i\,>\,j\}\in P} 
\delta_{\tau,\,-\tau}\,\delta_{k_i,\,k_j}\, \\ \nonumber
&
\Big(\delta_{\tau,\,+} 
\,\frac{e^{x\,(s_i\,-\,s_j)\,\alpha(k_i)}}{e^{\beta \,\alpha(k_i)}\,-\,1}\,
+\,\delta_{\tau,\,-}\, 
\frac{e^{(\beta\,-\,x\,(s_i\,-\,s_j))\,\alpha(k_i)}}{e^{\beta \,\alpha(k_i)}\,-\,1}\Big)\,d\unl{k}_{2m}.
\end{align}
\end{proof}
%

%

\Addresses
\end{document}